\newtheorem{theorem}{Theorem}
\newtheorem{proposition}{Proposition}
\newtheorem{lemma}{Lemma}
\newtheorem{corollary}{Corollary}
\newtheorem{conjecture}[theorem]{Conjecture}
\newtheorem{definition}[theorem]{Definition}
\newtheorem{example}[theorem]{Example}
\newtheorem{remark}[theorem]{Remark}
\newcommand{\e}[1]{ {\mathrm{e}}^{ #1 } }
\newcommand{\expectation}[1]{ \mathbb{E} [ #1 ] }
\newcommand{\expectationWrt}[2]{ \mathbb{E}_{#2} [ #1 ] }
\newcommand{\expectationbig}[1]{ \mathbb{E} \bigl[ #1 \bigr] }
\newcommand{\expectationbigWrt}[2]{ \mathbb{E}_{#2} \bigl[ #1 \bigr] }
\newcommand{\expectationBig}[1]{ \mathbb{E} \Bigl[ #1 \Bigr] }
\newcommand{\expectationBigWrt}[2]{ \mathbb{E}_{#2} \Bigl[ #1 \Bigr] }
\newcommand{\pnorm}[2]{ \| #1 \|{}_{#2} }
\newcommand{\pnormbig}[2]{ \big\| #1 \big\|{}_{#2} }
\newcommand{\pnormBig}[2]{ \Bigl\| #1 \Bigr\|{}_{#2} }
\newcommand{\probability}[1]{ \mathbb{P} [ #1 ] }
\newcommand{\probabilitybig}[1]{ \mathbb{P} \bigl[ #1 \bigr] }
\newcommand{\probabilityBig}[1]{ \mathbb{P} \Bigl[ #1 \Bigr] }
\newcommand{\refLemma}[1]{{\textrm{Lemma~\ref{#1}}}}
\def\eqcom#1{\overset{\textnormal{(#1)}}}
\def\({{\Bigl(}}
\def\){{\Bigr)}}
\newcommand{\ba}{\begin{array}}
\newcommand{\ea}{\end{array}}
\newcommand{\xdeleted}[1]{\deleted{}} %
\newcommand{\N}{\mathbb{N}}
\newcommand{\Z}{\mathbb{Z}}
\newcommand{\R}{\mathbb{R}}
\newcommand{\C}{\mathbb{C}}
\newcommand{\abs}[1]{\left|#1\right|}
\newcommand{\commentout}[1]{}
\newcommand{\fnormbig}[1]{ \bigl\| {#1} \bigr\|_{\mathrm{F}}}
\newcommand{\opnorm}[1]{\pnorm{ #1}{\mathrm{op}}}
\newcommand{\opnormbig}[1]{\pnormbig{ #1}{\mathrm{op}}}
\newcommand{\tr}[1]{\mathrm{Tr}[ #1 ] }
\newcommand{\trbig}[1]{\mathrm{Tr} \bigl[ #1 \bigr] }
\newcommand{\trBig}[1]{\mathrm{Tr} \Bigl[ #1 \Bigr] }
\newcommand{\bsa}[1]{B^{\mathrm{sa}}({#1})}
\newcommand{\nnbsa}[1]{B_{^{\geq }}^{\mathrm{sa}}({#1})}
\newcommand{\pbsa}[1]{B_{^{>}}^{\mathrm{sa}}({#1})}
\newcommand{\id}[1]{\mathds{1}_{{#1}}}
\DeclareRobustCommand{\stirling}{\genfrac\{\}{0pt}{}}
\newcommand{\BigO}[1]{ O\Bigl(#1\Bigr) }
\newcommand{\ci}{\mathrm{i}}
\newcommand{\ceil}[1]{\lceil {#1} \rceil}
\newcommand{\floor}[1]{\lfloor {#1} \rfloor}
\renewcommand\bra[1]{{\langle{#1}|}}
\renewcommand\ket[1]{%
  \@ifnextchar\bra{\k@t{#1}\!}{\k@t{#1}}%
}
\newcommand\k@t[1]{{|{#1}\rangle}}
\newacronym{POVM}{POVM}{Positive Operator-Valued Measure}
\newacronym{PM}{PM}{Projective Measurement}
\newacronym{ONB}{ONB}{Orthonormal Basis}
\title{The suboptimality ratio of projective measurements restricted to low-rank subspaces}
\author{
  Albert Senen--Cerda\\ ~ \\ 
   \small{IRIT, LAAS--CNRS, and Universit\'e Toulouse Paul Sabatier, France}\\
 \small{\texttt{albert.senen-cerda@irit.fr}}  \\
  }
\date{}
\begin{document}

\maketitle

\begin{abstract}
Limitations in measurement instruments can hinder the implementation of some quantum algorithms.
Understanding the suboptimality of such measurements with restrictions may then lead to more efficient measurement policies.
In this paper, we theoretically examine the suboptimality arising from a Procrustes problem for minimizing the average distance between two fixed quantum states when one of the states has been measured by a \gls{PM}.
Specifically, we compare optima when we can only use \glspl{PM} that are aligned with a low-rank subspace where the quantum states are supported, and when we can measure with the full set of \glspl{PM}.
For this problem, we show that the suboptimality ratio is independent of the dimension of the full space, and is at most polylogarithmic in the dimension of the low-rank subspace.
In the proof of this result, we use a probabilistic approach and the main techniques include trace inequalities related to projective measurements, and operator norm bounds for equipartitions of Parseval frames, which are of independent interest.

\end{abstract}

\glsresetall 
 
\section{Introduction}

\glspl{PM}, and the larger set of \glspl{POVM} play a key role in quantum tomography \cite{bisio2009optimal}, state discrimination \cite{chefles2000quantum}, quantum key distribution \cite{kalev2013choice, liu2014choice}, and measurement-based quantum computing \cite{briegel2009measurement}.
In practical applications, however, due to restrictions in measurement instruments and resources, not all optimal \glspl{PM} or \glspl{POVM} can be accurately implemented.
For example, this occurs when only a limited selection of high-fidelity \glspl{PM} are available to the practitioner or when no ancilla states are possible; those typically used to implement \glspl{POVM}.
From a theoretical perspective, these restrictions are circumvented by using a smaller set of measurements plus additional resources, e.g., \glspl{POVM} can be instead simulated from \glspl{PM} with additional postselection \cite{oszmaniec2019simulating}.

If we focus on optimization problems with measurements, it is unclear which problems have \glspl{PM} as optima, and more precisely how suboptimal they become if additional restrictions exist.
In this paper, we focus on this latter issue and bound the suboptimality arising from a Procrustes problem with \glspl{PM} that are aligned with a low-rank structure of the measured states.\\

Upon measuring a quantum state with an observable on an ambient space $\mathcal{H}$, the probability that a certain outcome is observed depends on the alignment between the measured state and the state that would correspond to that outcome.
A projective measurement $\mathcal{P} = \{ P_1, \ldots, P_n \}$ consists of $n$ orthogonal projectors that characterize the outcomes of a measurement.
After measurement of a mixed quantum state $\rho$ by $\mathcal{P}$, the random outcome of the measurement $X$ is $P_i$ with probability $\tr{\rho P_i} \geq 0$ for all $i \in [n]$.
The Procrustes problem for \glspl{PM} can be formulated as finding a \gls{PM} $\mathcal{P}$ of $\rho$ that in expectation brings the random outcome state $X = X(\mathcal{P}, \rho)$ closest to another mixed quantum state $\tau$.
In Frobenious norm, the problem is equivalent to finding
\begin{equation}
\min_{\mathcal{P} \in P(\mathcal{H})} \expectationBig{\fnormbig{\tau - X(\mathcal{P}, \rho)}^{2}},
\label{eqn:intro_procrustes}
\end{equation}
where $P(\mathcal{H})$ is the set of \glspl{PM}.
Up to constants, \eqref{eqn:intro_procrustes} is equal to the following maximization problem over \glspl{PM}
\begin{equation}
\max_{\mathcal{P} \in P(\mathcal{H})} \sum_{i=1}^{n} \trbig{\tau P_i \rho P_i}.
\label{eqn:intro_maximization}
\end{equation}
This optimization problem is nonconvex, and for our setting we consider additional restrictions. 
Firstly, we assume that $\tau$, and $\rho$ have a low rank structure and are only composed of states supported in a subspace $\mathcal{S} \subset \mathcal{H}$ of dimension $r \ll n$.
Secondly, we model constraints of the measurement instrument by considering only \glspl{PM} $\mathcal{Q}$ that are aligned with this subspace $\mathcal{S}$.
Formally, this means that $r$ out of $n$ orthogonal projectors generate also a projective measurement on $\mathcal{S}$.
The set of such \glspl{PM} will generate a strict subset $P(\mathcal{S}) \subset P(\mathcal{H})$, and we can thus also define a constrained maximization problem analogous to \eqref{eqn:intro_maximization}.
A relevant question is how suboptimal is the restriction to measurements over the low-rank structure compared to using the full set of \glspl{PM}.
We can define the suboptimality ratio as the smallest constant $K_{n, r} \geq 1$ such that for any quantum states $\rho$ and $\tau$ with support in a subspace $\mathcal{S} \subset \mathcal{H}$ of dimension $r$,
\begin{equation}
\max_{\mathcal{P} \in P(\mathcal{H})} \sum_{i=1}^{n} \trbig{\tau P_i \rho P_i} \leq K_{n, r}
\max_{\mathcal{Q} \in P(\mathcal{S})} \sum_{i=1}^{r} \trbig{\tau Q_i \rho Q_i}.
\label{eqn:intro_krauss_maximization}
\end{equation}
The constant $K_{n,r}$ encodes the suboptimality of restricting to measurements in the low-rank subspace, and a low value would imply that we can use the smaller set of measurements aligned with the subspace of quantum states to approximate \eqref{eqn:intro_maximization}.
Constants that encode information on maximal suboptimality are common in quantum information.
We can namely mention Tsirelson's bound that is perhaps the most well-known and is fundamental in bounding violations of Bell's inequalities \cite{cirel1980quantum}.
For the maximization in \eqref{eqn:intro_maximization}, there are some results in the context of quantum control \cite{pechen}.
In particular, for the constant $K_{n,r}$ , when $\tau$ and $\rho$ are pure states ($r=2$) then it is known that $K_{n,2} = 1$ for all $n \geq 2$ \cite{wang2011quantum}.
For general mixed states ($r \geq 2$), the magnitude of $K_{n,r}$ is however unknown.\\

Our main contribution is to show that there exists $c>0$ such that for any $n \geq r \geq 2$,
\begin{equation}
K_{n,r} \leq c \log(r)^{4}.
\label{thm:intro_main}
\end{equation}
The bound \eqref{thm:intro_main} indicates that the suboptimality does not depend on the dimension $n$ of the ambient space $\mathcal{H}$ and is at most polylogarithmic in the dimension $r$ of the low-rank subspace $\mathcal{S}$.
As a consequence, we may approximate the solution of \eqref{eqn:intro_maximization} up to a polylogarithmic factor by optimizing over the smaller space $\mathcal{S}$ where both $\rho$ and $\tau$ are supported.

Our second contribution is technical and relates to the methods used in the proof of \eqref{thm:intro_main}. 
We use a probabilistic approach to upper bound the trace of certain Kraus maps by randomized operators while preserving their \gls{PM} structure.
Restricted to $\mathcal{S}$, the projectors $P_i$ induce vectors $\ket{v_i} \in \mathcal{S}$ that generate a Parseval frame and satisfy
\begin{equation}
\sum_{i=1}^{n} \ket{v_i}\bra{v_i} = \id{\mathcal{S}}.
\label{eqn:intro_frame_property}
\end{equation}
The measurement $\mathcal{P} \in P(\mathcal{H})$ in \eqref{eqn:intro_maximization} can be understood on $\mathcal{S}$ as sum of a \gls{PM} $\mathcal{Q} \in P(\mathcal{S})$ that is weighted by $T = \ceil{n/r}$ matrices $L_t$ obtained from some equipartition of the frame vectors
\begin{equation}
\sum_{t=1}^{T} \trBig{\tau L_t \mathcal{Q}\bigl[ L_t^{\dagger} \rho\ L_t \bigr] L_t^{\dagger}} \quad \text{ where } \quad  \mathcal{Q}[a] = \sum_{i=1}^{r} Q_i a Q_i.
\label{eqn:intro_sum_weighted_operators}
\end{equation} 
Similar to \eqref{eqn:intro_frame_property}, the weight matrices satisfy 
\begin{equation}
\sum_{t=1}^{T} L_t L_t^{\dagger} = \id{\mathcal{S}}.
\label{eqn:intro_partition_identity}
\end{equation}
The difficulty of the analysis lies in decoupling the dimension $n$ from \eqref{eqn:intro_sum_weighted_operators} while preserving the \glspl{PM} structure of the measurements involved.
To tackle this problem we use a probabilistic approach and find a single random measurement that in expectation behaves as the sum of the measurements in \eqref{eqn:intro_sum_weighted_operators}.
Specifically, we consider Gaussian random weight matrices of the type
\begin{equation}
\hat{L} = \sum_{t=1}^{T} g_t L_t, \quad  \text{where}\quad g_t \sim \mathrm{N}(0,1) \quad \text{ are i.i.d.\ for all } t \in [T].
\label{eqn:intro_randomization}
\end{equation}
By using this random weight in a measurement, we transfer the dependence on the dimension $n$ from the sum of weighted \gls{PM} in \eqref{eqn:intro_sum_weighted_operators} to the variance of \eqref{eqn:intro_randomization}.

Additional key steps in the proof include showing the existence of a partition $\pi$ of the frames that satisfies variance constraints \emph{independent} of $n$. 
The probabilistic method from combinatorics is used in this step to bound the operator norm of a frame variance matrix. 
In this way, with high probability we obtain
\begin{equation}
\opnorm{\hat{L}} < c \log(r).
\end{equation}
We also introduce new trace inequalities to extract the weights $L_t$ from the traces in \eqref{eqn:intro_sum_weighted_operators} while preserving the \glspl{PM} structure of the measurements.
These inequalities are based on complex interpolation results.

Remarkably, our proof method directly relates the polylogarithmic term in \eqref{thm:intro_main} with the additional dimensional factor that appears in well-known random matrix concentration inequalities \cite{tropp2015introduction} compared to their univariate counterparts.
This fact suggests that our method cannot be further improved without using additional information about the optima of \eqref{eqn:intro_maximization}.\\

\subsection{Related Literature}

Procrustes problems have been previously studied in the literature.
We can namely mention Procrustes problems for orthogonal matrices \cite{schonemann1966generalized}, weighted versions with quadratic constraints \cite{gower2004procrustes}, conic constraints \cite{baghel2022non, fulova2023solving}, and problems with manifold constraints \cite{bhatia2019procrustes, absil2008optimization}.

The maximization problem in \eqref{eqn:intro_maximization} appears in the context of quantum control with nonselective measurements in \cite{pechen}, and for general Kraus maps in \cite{pechen2008control}. Special cases are studied in \cite{wang2011quantum}, including the case $r=2$ and $n \geq 2$. In particular $K_{n,2} = 1$ is shown using the first order stationary conditions; see \eqref{eqn:critical_point_condition} in Section~\ref{sec:procrustes_problems}.
By relaxing in \eqref{eqn:intro_maximization} the optimization to general Kraus maps, no local maxima exist for $n=2$ \cite{pechen2008control}, but this may not be the case for projective measurements.

From a theoretical perspective, the analysis of the suboptimality in optimization problems is commonly conducted to approximate solutions of computationally hard problems.
Related to the quantum setting, for example, we can mention semi-definite program relaxations for generalized orthogonal Procrustes problems \cite{bandeira2016approximating, Won2018OrthogonalTM} that are related to the well-known Grothendieck's inequality \cite{blei2014grothendieck}.
Grothendieck's constant \cite{krivine1978constantes} characterizes maximal inequality bounds and can be interpreted as a suboptimality ratio between optimization problems with different constraints on the dimensions \cite{briet2014grothendieck}, which is similar to \eqref{eqn:intro_krauss_maximization}.
We remark, however, that these constants are universally bounded and do not grow indefinitely with the intrinsic dimension of the problem---the dimension of $\mathcal{S}$ in our setting---except if there are additional constraints, such as those induced by a graph structure \cite{alon2006quadratic}.

We use key properties of Parseval frames \cite{waldron2018introduction} such as \eqref{eqn:intro_frame_property}.
In particular, we partition the frame into equal-sized sets and try to impose an orthogonal structure on each partition.
Orthogonalization has been studied before in \cite{de2021orthogonalization} for \glspl{POVM}, where it is shown that if a \gls{POVM} is close to being orthogonal, then there exists a \gls{PM} that is actually close to the \gls{POVM}.
Differently in our problem, the frame in \eqref{eqn:intro_frame_property} may be far from orthogonal, and we have additional constraints such as the dimension of the \glspl{PM} that we seek.

In the proof of \eqref{thm:intro_main}, we use the probabilistic method \cite{alon2016probabilistic} to prove the existence of operators with certain properties by showing that they occur with positive probability.
In quantum information, for example, this method can be used to find low-rank approximations of Kraus maps \cite{lancien2024approximating} by using random Kraus operators.
In our case, we use the random matrix \eqref{eqn:intro_randomization} to randomize, and upper bound \eqref{eqn:intro_sum_weighted_operators} while preserving the \gls{PM} structure of the operators. 
Concentration inequalities for Gaussian random matrices from \cite{tropp2015introduction} are then used to estimate the operator norm provided that its variance can be controlled.
Operator norm bounds for random matrices have been used in the quantum setting to understand quantum expanders \cite{lancien2024optimal, hastings2007random}, and other random quantum channels \cite{gonzalez2018spectral, lancien2024limiting}.
In our setting, to show that the variance of \eqref{eqn:intro_randomization} is small we consider a random partition $\pi$ and bound the operator norm of a random variance matrix of $\hat{L}$ induced by this partition. 
We use the method of moments to estimate the operator norm \cite{tao2012topics}, but differently from the typical concentration bounds that assume independence, in our case the entries of the matrix have strong dependencies induced by the frame.

Finally, we also use operator trace inequalities such as trace convexity \cite{lieb1973convex}, and in Section~\ref{sec:interpolation_bound} we show new trace inequalities involving projective measurements using complex interpolation techniques.
We can mention \cite{sutter2017multivariate} where interpolation inequalities for traces of matrix products are shown using a similar approach.

\section{Preliminaries}
\label{sec:preliminaries}

Let $\mathcal{H}$ be a complex finite-dimensional Hilbert space with inner product $\langle \cdot | \cdot \rangle : \mathcal{H} \times \mathcal{H} \to \C$, and such that $\dim(\mathcal{H}) = n$. 
Let $B(\mathcal{H})$ be the space of linear maps of $\mathcal{H}$, and let 
\begin{align*}
\bsa{\mathcal{H}} & =\bigl\{ A \in B(\mathcal{H}) \mid A = A^{\dagger} \bigr\},
\end{align*}
be the set of self-adjoint operators of $\mathcal{H}$.
We also let $\nnbsa{\mathcal{H}}$, and $\pbsa{\mathcal{H}}$ be the sets of nonnegative selfadjoint, and positive selfadjoint operators, respectively.
The set of density matrices of $\mathcal{H}$ is defined by
\begin{equation}
D(\mathcal{H}) = \big\{  \rho \in \nnbsa{\mathcal{H}}, \tr{\rho} = 1\big\}.
\end{equation}

The set of (nondegenerate) \glspl{PM} in $\mathcal{H}$ is composed of sets of $n$ orthogonal rank-one projectors
\begin{equation}
P(\mathcal{H}) = \big\{ (P_1, \ldots, P_n) \in B(\mathcal{H})^{\times n}  : P_i = P_i^{\dagger},\ P_i P_j = \delta_{ij}P_i,\ \mathrm{rk}(P_i)=1 \text{ for } i \in [n] \big\}.
\label{eqn:def_PMs}
\end{equation}
For any $\mathcal{P} \in P(\mathcal{H})$, the following partition of unity property for the identity operator $\id{\mathcal{H}} \in B(\mathcal{H})$ holds
\begin{equation}
\sum_{i=1}^{n}  P_i = \id{\mathcal{H}}.
\end{equation}

When a mixed quantum state encoded by a density matrix $\rho \in D(\mathcal{H})$ is measured by a \gls{PM} $\mathcal{P} \in P(\mathcal{H})$, the outcome state $X$ is $P_i \in D(\mathcal{H})$ with probability $\tr{P_i \rho} \geq 0$.
Let $\mu(\rho, \mathcal{P})$ be the distribution of the random density matrix $X$ describing the outcome state after measuring $\rho$ with $\mathcal{P} \in P(\mathcal{H})$.
The expected state after measurement is then given by 
\begin{equation}
\mathcal{P}[\rho] = \expectationWrt{X}{X \sim \mu(\rho, \mathcal{P})} = \sum_{i=1}^{n} \trbig{\rho P_i} P_i = \sum_{i=1}^{n} P_i \rho P_i.
\end{equation}

\subsection{Procrustes problem for projective measurements}
\label{sec:procrustes_problems}
Given a target quantum state $\tau \in D(\mathcal{H})$, we can ask what is the expected distance between $\tau$, and the outcome state of the measurement of $\rho$ by $\mathcal{P}$. 
The Procrustes problem for \gls{PM} consists of finding for which measurement $\mathcal{P} \in P(\mathcal{H})$ the expected distance is minimized. 
In the Frobenious norm, $\fnormbig{A}^{2} = \tr{AA^{\dagger}}$, the Procrustes problem is
\begin{align}
\min_{\mathcal{P} \in P(\mathcal{H})} \expectationBigWrt{\fnormbig{\tau - X}^{2}}{X \sim \mu(\rho, \mathcal{P})} &= \min_{\mathcal{P} \in P(\mathcal{H})} \sum_{i=1}^{n} \tr{P_i \rho} \fnormbig{\tau - P_i}^{2} \nonumber \\
& = \min_{\mathcal{P} \in P(\mathcal{H})} - \sum_{i=1}^{n} 2\trbig{P_i \rho}\trbig{P_i \tau} + C(\rho, \tau) \nonumber \\
& = -2 \max_{\mathcal{P} \in P(\mathcal{H})} \trbig{\rho \mathcal{P}[\tau]} + C(\rho, \tau),
\label{eqn:selective_procrustes}
\end{align}
where the last equality holds since projectors are rank-one and so for all $i \in [n]$, $\tr{P_i \rho}\tr{P_i \tau} = \tr{P_i \rho P_i \tau}$.
Both $\tau$ and $\rho$ are fixed, so both \eqref{eqn:selective_procrustes} and \eqref{eqn:intro_maximization} are equivalent up to a constant.

The maximization problem \eqref{eqn:intro_maximization} is symmetric over $\rho$ and $\tau$, and can be posed as an optimization over the unitary group $U(\mathcal{H}) = \{ U \in B(\mathcal{H}) : U U^{\dagger} = U U^{\dagger} = \id{\mathcal{H}} \}$.
We let the measurement $\tilde{\mathcal{P}}(U) = \{ U \tilde{P}_1 U^{\dagger}, \ldots, U \tilde{P}_n U^{\dagger} \}$ be dependent on $U \in U(\mathcal{H})$ for some $\tilde{\mathcal{P}} \in P(\mathcal{H})$ fixed.
Then we can consider the equivalent maximization problem
\begin{equation}
\max_{U \in U(\mathcal{H})} \trbig{\rho \tilde{\mathcal{P}}(U)[\tau]}.
\label{eqn:maximization_unitary}
\end{equation}
In \cite{wang2011quantum}, the first order stationary conditions for critical points of \eqref{eqn:maximization_unitary} were computed: a critical point $U^{\star}$ with measurement $\mathcal{P}^{\star} = \mathcal{P}(U^{\star})$ must satisfy
\begin{equation}
\bigl[\rho, \mathcal{P}^{\star}[\tau] \bigr] = \bigl[\mathcal{P}^{\star}[\rho], \tau \bigr],
\label{eqn:critical_point_condition}
\end{equation}
where $[A, B] = AB - BA$ is the commutator.
This problem \eqref{eqn:maximization_unitary} is nonconvex, and to our knowledge it is unknown how complex the optimization landscape is, e.g., how many critical points exist satisfying \eqref{eqn:critical_point_condition}.
In Section~\ref{sec:numerics}, we will use manifold optimization \cite{absil2008optimization} to try to numerically find \eqref{eqn:maximization_unitary}.

\subsection{Restriction to aligned projective measurements}

\label{sec:reduction}

For the maximization problem in \eqref{eqn:intro_maximization}, we impose some restrictions in the set of measurements as well as some additional structure that we can exploit.
To model the limitations of measurement instruments, we assume that we can only use projective measurements aligned with a subspace smaller than $\mathcal{H}$.
Furthermore, we assume there is an additional low-rank structure in the quantum states that is also aligned with the aforementioned subspace.
We formalize these ideas as follows.

Firstly, we assume that both $\tau$ and $\rho$ belong to a low-rank subspace $\mathcal{S} \subset \mathcal{H}$ with $\dim(\mathcal{S}) = r$. 
Specifically, for any vector $\ket{v} \in S^{\perp}$---the orthogonal subspace in $\mathcal{H}$ of $\mathcal{S}$---$\rho \ket{v} = \tau \ket{v} = 0$, and all eigenvectors with positive eigenvalue of $\rho$ and $\tau$ belong to $\mathcal{S}$.
Secondly, to model the restriction of certain measurements, we will consider the maximization only over a subset of \glspl{PM} that commute with the unique orthogonal projector $\Pi_{\mathcal{S}}: \mathcal{H} \to \mathcal{S}$ from $\mathcal{H}$ onto $\mathcal{S}$ and is related to the set of \glspl{PM} of $\mathcal{S}$.
As in \eqref{eqn:def_PMs}, we define
\begin{equation}
P(\mathcal{S}) = \big\{ (Q_1, \ldots, Q_r) \in B(\mathcal{S})^{\times r} : Q_i = Q_i^{\dagger},\ Q_i Q_j = \delta_{ij} Q_i,\ \mathrm{rk}(Q_i)=1 \text{ for } i\in [r]\big\}.
\label{eqn:def_restricted_PMs}
\end{equation}
There is an injective map $\iota: P(\mathcal{S}) \to P(\mathcal{H})$ given by completing the set of $r$ orthogonal projectors of $\mathcal{Q} \in P(\mathcal{S})$ in $\mathcal{H}$ with orthogonal projectors from a fixed $\mathcal{Q}^{\perp} \in P(\mathcal{S}^{\perp})$.
Since $\rho$, and $\tau \in D(\mathcal{S})$ are invariant under $\Pi_{\mathcal{S}}$, we have the equality
\begin{align}
\trbig{\tau \mathcal{Q}[\rho]} &= \trbig{\tau \iota(\mathcal{Q})[\rho]} \text{ for any } \mathcal{Q} \in P(\mathcal{S}).
\label{eqn:restriction_PM}
\end{align}
We consider then the set of restricted measurements to be $P(\mathcal{S})$, which is independent of $\mathcal{H}$, and define the restricted optimization problem by the maximization
\begin{equation}
\max_{\mathcal{Q} \in P(\mathcal{S})} \trbig{\tau \mathcal{Q}[\rho]}.
\label{eqn:maximization_restricted}
\end{equation}

\section{Suboptimality ratio of projective measurements}
\label{sec:suboptimality}

On the Hilbert space $\mathcal{H}$ with $\dim(\mathcal{H}) = n$, the suboptimality ratio $K_{n,r}$ is defined as the smallest constant such that for any $\rho$, and $\tau$ supported on the subspace $\mathcal{S} \subseteq \mathcal{H}$ of fixed dimension $r \leq n$, we have
\begin{equation}
\max_{\mathcal{P} \in P(\mathcal{H})} \trbig{\tau \mathcal{P}[\rho]} \leq K_{n,r} 
\max_{\substack{\mathcal{Q} \in P(\mathcal{S})}} \trbig{\tau \mathcal{Q}[\rho]}.
\label{eqn:maximization}
\end{equation}
The constant $K_{n,r} \geq 1$ encodes how suboptimal restricted measurements are when there is an aligned low-rank structure present in the quantum states compared to using a larger set of \glspl{PM}.

Our main result in Theorem~\ref{thm:main} bounds the suboptimality ratio $K_{n,r}$ when the quantum states $\tau$, and $\rho$ are supported in $\mathcal{S}$, and the measurements are also restricted to this subspace. 
Its proof can be found in Section~\ref{sec:proof_theorem}.
\begin{theorem}
Let $\mathcal{H}$ be a complex Hilbert space with $\dim(\mathcal{H}) = n$, and let $\mathcal{S} \subseteq \mathcal{H}$ be a Hilbert subspace with $\dim(\mathcal{S}) = r \leq n$.
There exists $c>0$ such that for any $n \geq r \geq 2$, and any $\rho$, $\tau \in D(\mathcal{S})$, 
\begin{equation}
\max_{\mathcal{P} \in P(\mathcal{H})} \trbig{\tau \mathcal{P}[\rho]} \leq c \log(r)^{4}
\max_{\substack{\mathcal{Q} \in P(\mathcal{S})}} \trbig{\tau \mathcal{Q}[\rho]}.
\end{equation}
\label{thm:main}
\end{theorem}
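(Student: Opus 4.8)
\emph{Overview.} The plan is to keep $\rho$ and $\tau$ fixed, turn the left-hand side into an optimisation of a frame functional that lives on the $r$-dimensional subspace $\mathcal{S}$ alone, and then suppress the dependence on $n$ by collapsing a sum of $T=\ceil{n/r}$ block operators into a single Gaussian-randomised measurement. \emph{Step 1 (reduction to Parseval frames).} First I would fix a maximiser $\mathcal{P}=\{P_i=\ket{\psi_i}\bra{\psi_i}\}_{i=1}^n$ of $\max_{\mathcal{P}\in P(\mathcal{H})}\trbig{\tau\mathcal{P}[\rho]}$. Since $\rho,\tau\in D(\mathcal{S})$, each projector enters only through its compression $\ket{v_i}:=\Pi_{\mathcal{S}}\ket{\psi_i}$, with $\trbig{\tau P_i\rho P_i}=\langle v_i|\rho|v_i\rangle\langle v_i|\tau|v_i\rangle$ and $\sum_{i=1}^n\ket{v_i}\bra{v_i}=\id{\mathcal{S}}$, a Parseval frame of $\mathcal{S}$ as in \eqref{eqn:intro_frame_property}. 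Thus the left-hand side equals the supremum of $\sum_{i=1}^n\langle v_i|\rho|v_i\rangle\langle v_i|\tau|v_i\rangle$ over $n$-element Parseval frames, and I would fix the optimal frame $\{\ket{v_i}\}$ henceforth.

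\emph{Step 2 (block decomposition and randomisation).} Partition $\{1,\dots,n\}$ into $T=\ceil{n/r}$ blocks of size $r$ (padding with zero vectors if $r\nmid n$); indexing the vectors of block $t$ by a reference orthonormal basis $\{\ket{q_j}\}$ of $\C^r\cong\mathcal{S}$ gives an operator $L_t:\C^r\to\mathcal{S}$, with $\sum_{t=1}^TL_tL_t^\dagger=\id{\mathcal{S}}$ as in \eqref{eqn:intro_partition_identity}, and a direct expansion rewrites the frame functional as $\sum_{t=1}^T\trBig{\tau L_t\mathcal{Q}[L_t^\dagger\rho L_t]L_t^\dagger}$, where $\mathcal{Q}=\{\ket{q_j}\bra{q_j}\}$ is a fixed projective measurement as in \eqref{eqn:intro_sum_weighted_operators}. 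Now introduce $\hat L=\sum_{t=1}^Tg_tL_t$ with i.i.d.\ $g_t\sim\mathrm{N}(0,1)$ as in \eqref{eqn:intro_randomization}, and set $R_j=\sum_{t=1}^TL_tQ_jL_t^\dagger\succeq0$, which form a POVM of $\mathcal{S}$ since $\sum_jR_j=\id{\mathcal{S}}$. Expanding $\E_g\bigl[\trBig{\tau\hat L\mathcal{Q}[\hat L^\dagger\rho\hat L]\hat L^\dagger}\bigr]$ by Wick's formula produces three pairing terms: two of them are manifestly nonnegative (one is $\sum_j\trbig{\tau R_j\rho R_j}$, a sum of squared Frobenius norms; the other a positive quadratic form in the eigenvalues of $\rho$ and $\tau$), while the third equals $\sum_j\trbig{\rho R_j}\trbig{\tau R_j}$, which already dominates $\sum_{t=1}^T\trBig{\tau L_t\mathcal{Q}[L_t^\dagger\rho L_t]L_t^\dagger}$ because $\trbig{\rho R_j}=\sum_t\langle v_{t,j}|\rho|v_{t,j}\rangle$ and $\trbig{\tau R_j}$ both split into sums of nonnegative block contributions. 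Hence $\max_{\mathcal{P}\in P(\mathcal{H})}\trbig{\tau\mathcal{P}[\rho]}\le\E_g\bigl[\trBig{\tau\hat L\mathcal{Q}[\hat L^\dagger\rho\hat L]\hat L^\dagger}\bigr]$.

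\emph{Step 3 (extracting the weight).} The next ingredient is a trace inequality, established via complex-interpolation techniques, that moves the non-unitary weight $\hat L$ outside the trace while keeping a genuine projective measurement inside: for every fixed $\hat L$,
\[
\trBig{\tau\hat L\mathcal{Q}[\hat L^\dagger\rho\hat L]\hat L^\dagger}\le C\,\opnorm{\hat L}^{4}\max_{\mathcal{Q}\in P(\mathcal{S})}\trbig{\tau\mathcal{Q}[\rho]}.
\]
Indeed, since $\hat L\mathcal{Q}[\hat L^\dagger\rho\hat L]\hat L^\dagger=\sum_j(\hat L^\dagger\rho\hat L)_{jj}\ket{\hat Lq_j}\bra{\hat Lq_j}$, writing $\hat L=\opnorm{\hat L}\,V$ with $\opnorm V=1$ gives $\trBig{\tau\hat L\mathcal{Q}[\hat L^\dagger\rho\hat L]\hat L^\dagger}=\opnorm{\hat L}^{4}\sum_j(V^\dagger\rho V)_{jj}(V^\dagger\tau V)_{jj}$; the interpolation step bounds the value of this quartic functional over contractions by an absolute multiple of its value over unitaries, and for a unitary $V$ the latter is precisely $\trbig{\tau\mathcal{Q}_V[\rho]}$ for the projective measurement $\mathcal{Q}_V=\{\ket{Vq_j}\bra{Vq_j}\}\in P(\mathcal{S})$. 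Combining with Step 2,
\[
\max_{\mathcal{P}\in P(\mathcal{H})}\trbig{\tau\mathcal{P}[\rho]}\le C\,\E_g\!\left[\opnorm{\hat L}^{4}\right]\max_{\mathcal{Q}\in P(\mathcal{S})}\trbig{\tau\mathcal{Q}[\rho]}.
\]

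\emph{Step 4 (a good partition) and the obstacle.} It remains to choose the partition of the optimal frame — equivalently the operators $L_t$ — so that $\E_g[\opnorm{\hat L}^{4}]\le c\log(r)^{4}$. Standard Gaussian matrix concentration gives $\E_g\opnorm{\hat L}\lesssim\sqrt{\log r}\;v(\pi)^{1/2}$ with $v(\pi)=\max\{\opnorm{\sum_tL_tL_t^\dagger},\opnorm{\sum_tL_t^\dagger L_t}\}=\max\{1,\opnorm{\sum_tL_t^\dagger L_t}\}$; moreover $g\mapsto\opnorm{\hat L}$ has Lipschitz constant $O(1)$ because $A\mapsto\sum_tL_tAL_t^\dagger$ is unital, so $\opnorm{\hat L}$ concentrates about its mean and $\E_g[\opnorm{\hat L}^{4}]\lesssim(\log r)^{4}$ as soon as $\opnorm{\sum_tL_t^\dagger L_t}\lesssim\log r$. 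The matrix $\sum_tL_t^\dagger L_t$ is the sum of the blockwise Gram matrices of the frame; taking the partition and the within-block labelling (and, if needed, the phases of the $\ket{v_i}$) uniformly at random and invoking the probabilistic method — bounding $\E_\pi\trbig{(\sum_tL_t^\dagger L_t-\id{\C^r})^{2m}}$ for $m\asymp\log r$ — one exhibits a partition for which $\opnorm{\sum_tL_t^\dagger L_t}=O(\log r)$, \emph{uniformly in $n$}. Feeding this back through Steps 1--3 yields $\max_{\mathcal{P}\in P(\mathcal{H})}\trbig{\tau\mathcal{P}[\rho]}\le c\log(r)^{4}\max_{\mathcal{Q}\in P(\mathcal{S})}\trbig{\tau\mathcal{Q}[\rho]}$, the claim. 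I expect Step 4 to be the main obstacle: this is the only place the ambient dimension $n$ gets eliminated, and the entries of the random frame-variance matrix $\sum_tL_t^\dagger L_t$ are strongly dependent through the frame identity $\sum_i\ket{v_i}\bra{v_i}=\id{\mathcal{S}}$, so the moment method must be carried out without the usual independence; it is precisely this correlated estimate (and its inevitable dimensional factor) that produces the polylogarithmic loss in $r$.
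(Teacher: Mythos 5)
Your proposal follows the same overall architecture as the paper---compression of the optimal \gls{PM} to a Parseval frame on $\mathcal{S}$, an equipartition into $T=\ceil{n/r}$ block operators $L_t$, Gaussian randomization $\hat L=\sum_t g_tL_t$, a complex-interpolation inequality to pull $\opnorm{\hat L}^4$ out of the trace, and a probabilistic-method choice of partition (with random phases) controlling $\opnorm{\sum_tL_t^\dagger L_t}=O(\log r)$ so that matrix Gaussian concentration gives $\expectation{\opnorm{\hat L}^4}=O(\log(r)^4)$. The one genuinely different ingredient is your Step 2. The paper proves $\sum_t\trbig{\rho\mathcal{Q}_{L_t}[\tau]}\le 3\,\expectationbig{\trbig{\rho\mathcal{Q}_{\hat L}[\tau]}}$ by introducing a second independent copy $\hat B$, a biweighted measurement on $\mathcal{S}\otimes\mathcal{S}$, Lieb's trace convexity, and a symmetrization--quadrature identity (Lemmas~\ref{lem:lower_bound_expectation_by_objective_function}--\ref{lem:trace_symmetrization_quadrature}). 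You instead expand the fourth Gaussian moment of a single copy by Wick's formula: of the three pairings, one is $\sum_j\trbig{\tau R_j\rho R_j}=\sum_j\fnorm{\tau^{1/2}R_j\rho^{1/2}}^2\ge 0$, one is $\sum_j\tr{A^{(j)}\overline{B^{(j)}}}\ge0$ for the positive semidefinite Gram matrices $A^{(j)}_{st}=\braket{v_{s,j}|\rho\,|v_{t,j}}$, $B^{(j)}_{st}=\braket{v_{s,j}|\tau\,|v_{t,j}}$, and the third, $\sum_j\trbig{\rho R_j}\trbig{\tau R_j}$, contains the diagonal $t=s$ contribution which is exactly $\sum_t\trbig{\rho\mathcal{Q}_{L_t}[\tau]}$, all off-diagonal terms being products of nonnegative numbers. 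I checked this computation and it is correct (it uses only that each $Q_j$ is rank one, exactly as the paper's argument does); it is shorter, avoids the tensor-product machinery entirely, and even improves the constant from $3$ to $1$.

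One small point to tighten: your claim that the good partition exists ``uniformly in $n$'' is not quite what the moment method delivers. The error term in the trace-moment bound scales like $1/T$ (cf.\ \eqref{eqn:variance_bound_term}), so Proposition~\ref{prop:variance_term_bounded} only holds once $T\ge T(r)$; for $r\le n<n(r)$ one must first embed $\mathcal{H}$ isometrically into a larger ambient space $\tilde{\mathcal{H}}$, which leaves $\trbig{\tau\mathcal{P}[\rho]}$ invariant, and apply the argument there. This is a one-line repair, but without it the case of small $n$ is not covered.
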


From Theorem~\ref{thm:main}, we obtain that the bound of $K_{n,r}$ is independent of the dimension of the ambient Hilbert space $\mathcal{H}$, and at most polylogarithmic in the dimension of the subspace $\mathcal{S}$.
By choosing restricted measurements in $\mathcal{S}$, we lose at most a polylogarithmic factor in \eqref{eqn:maximization}.\\
Remarkably, as can be seen in the proof of Theorem~\ref{thm:main}, the polylogarithmic dependence in the bound of $K_{n,r}$ arises from a connection to random matrix theory.
Specifically, in random matrix concentration results there is typically an additional dimensional factor---the matrix size $r$---that appears compared to their counterparts in univariate concentration inequalities.
In our case, this factor determines the logarithmic dependence and is a consequence of our probabilistic approach to showing the result; see Remark~\ref{remark:dimensional_dependence} in Section~\ref{sec:concentration}.\\

For the Procrustes problem in \eqref{eqn:intro_procrustes} we can approximate the optimum by restricting to the subspace $\mathcal{S}$. 
Theorem~\ref{thm:main}, and \eqref{eqn:selective_procrustes} immediately yield a rough approximation inequality in Corollary~\ref{cor:approximation_inequality}.
\begin{corollary}
Suppose $\tau, \rho \in D(\mathcal{S}) \subset D(\mathcal{H})$, where $\mathcal{S} \subset \mathcal{H}$ is a $r$-dimensional subspace of $\mathcal{H}$. Then, there exists a universal constant $c>0$ such that
\begin{equation}
\min_{\mathcal{Q} \in P(\mathcal{S})} \expectationBigWrt{\fnormbig{\tau - X}^{2}}{X \sim \mu(\rho, \mathcal{Q})} - 2(c\log(r)^4 - 1) \trbig{\tau \mathcal{Q}[\rho]} \leq \min_{\mathcal{P} \in P(\mathcal{H})} \expectationBigWrt{\fnormbig{\tau - X}^{2}}{X \sim \mu(\rho, \mathcal{P})}.
\label{eqn:approximation_inequality}
\end{equation}
\label{cor:approximation_inequality}
\end{corollary}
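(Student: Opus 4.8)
The plan is to express both sides of \eqref{eqn:approximation_inequality} through the equivalent maximization problems and then invoke \refTheorem{thm:main}. First I would observe that the chain of equalities \eqref{eqn:selective_procrustes} holds verbatim when the optimization set $P(\mathcal{H})$ is replaced by $P(\mathcal{S})$: the outcome states of a $\mathcal{Q} \in P(\mathcal{S})$, viewed in $\mathcal{H}$ through the inclusion $\iota$, are again rank-one orthogonal projectors, so $\fnormbig{Q_i}^2 = \trbig{Q_i} = 1$, and $\sum_{i=1}^{r} \trbig{Q_i \rho} = \trbig{\Pi_{\mathcal{S}} \rho} = \trbig{\rho} = 1$ because $\rho \in D(\mathcal{S})$. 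Consequently the additive constant $C(\rho,\tau) = \fnormbig{\tau}^2 + 1$ appearing in \eqref{eqn:selective_procrustes} does not depend on the chosen measurement and is the same for the two optimization sets, so that
\begin{align*}
\min_{\mathcal{P}\in P(\mathcal{H})} \expectationBigWrt{\fnormbig{\tau-X}^2}{X\sim\mu(\rho,\mathcal{P})} &= C(\rho,\tau) - 2\max_{\mathcal{P}\in P(\mathcal{H})}\trbig{\tau\mathcal{P}[\rho]}, \\
\min_{\mathcal{Q}\in P(\mathcal{S})} \expectationBigWrt{\fnormbig{\tau-X}^2}{X\sim\mu(\rho,\mathcal{Q})} &= C(\rho,\tau) - 2\max_{\mathcal{Q}\in P(\mathcal{S})}\trbig{\tau\mathcal{Q}[\rho]},
\end{align*}
where I also used the symmetry $\trbig{\rho\mathcal{P}[\tau]} = \trbig{\tau\mathcal{P}[\rho]}$ coming from cyclicity of the trace.

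Next I would subtract these two identities to cancel $C(\rho,\tau)$, and bound the full-space maximum by the restricted one via \refTheorem{thm:main}. Writing $M := \max_{\mathcal{Q}\in P(\mathcal{S})}\trbig{\tau\mathcal{Q}[\rho]}$ and using $\max_{\mathcal{P}\in P(\mathcal{H})}\trbig{\tau\mathcal{P}[\rho]} \leq c\log(r)^4 M$, this yields
\begin{equation*}
\min_{\mathcal{P}\in P(\mathcal{H})}\expectationBigWrt{\fnormbig{\tau-X}^2}{X\sim\mu(\rho,\mathcal{P})} \geq \min_{\mathcal{Q}\in P(\mathcal{S})}\expectationBigWrt{\fnormbig{\tau-X}^2}{X\sim\mu(\rho,\mathcal{Q})} - 2\bigl(c\log(r)^4 - 1\bigr) M.
\end{equation*}
Reading the measurement $\mathcal{Q}$ on the left-hand side of \eqref{eqn:approximation_inequality} as a minimizer of the restricted Procrustes objective — which, by the second identity above, is exactly a maximizer of $\trbig{\tau\mathcal{Q}[\rho]}$ over $P(\mathcal{S})$, hence attains $\trbig{\tau\mathcal{Q}[\rho]} = M$ — this is precisely \eqref{eqn:approximation_inequality}.

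This argument is essentially routine and I do not expect a genuine obstacle; the only points that merit a sentence are: (i) that the Frobenius expansion in \eqref{eqn:selective_procrustes} produces a measurement-independent constant which coincides for $P(\mathcal{H})$ and $P(\mathcal{S})$, which uses only rank-one-ness of the projectors together with $\rho \in D(\mathcal{S})$; and (ii) that $c\log(r)^4 - 1 \geq 0$ for all $r \geq 2$, so that the subtracted term in \eqref{eqn:approximation_inequality} has the claimed sign — this holds for the constant $c$ of \refTheorem{thm:main} after, if necessary, replacing it by $\max(c,1)$.
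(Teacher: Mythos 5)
Your argument is correct and is exactly the route the paper intends: the paper gives no separate proof, stating only that Theorem~\ref{thm:main} and \eqref{eqn:selective_procrustes} immediately yield the corollary, and your derivation — checking that the measurement-independent constant $C(\rho,\tau)$ coincides for $P(\mathcal{H})$ and $P(\mathcal{S})$, subtracting the two identities, and applying the bound on the maxima — fills in precisely those steps. The two side remarks you flag (rank-one-ness plus $\rho \in D(\mathcal{S})$ for the constant, and the sign of $c\log(r)^4 - 1$) are the right ones to make.
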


We remark that Corollary~\ref{cor:approximation_inequality} is not sharp since the bound of $K_{n,r}$ does not include information about the relationship between \emph{absolute} values in \eqref{eqn:maximization}, only their \emph{relative} values.
In particular, there may be cases when the inequality in \eqref{eqn:approximation_inequality} is trivial since the left-hand side may be negative.
This fact raises the question if the bound of $K_{n,r}$ in Theorem~\ref{thm:main} can be furhter improved.

In Section~\ref{sec:numerics}, we use numerical simulations to examine the constant $K_{n,r}$ for different values of $r$, and $n$. 
The simulations suggest that $K_{n,r} \simeq 1$ for small values of $r$, but a more complex behavior cannot be ruled out.
\section{Proof of Theorem~\ref{thm:main}}
\label{sec:proof_theorem}

We present a sketch of the proof of Theorem~\ref{thm:main} in the following steps (i)--(vi), each one corresponding to Sections~\ref{sec:reduction_weighted}--\ref{sec:final_steps}, respectively.
\begin{itemize}
\item[(i)] \emph{Reduction to weighted \glspl{PM}.} In Section~\ref{sec:reduction_weighted}, we fix a \gls{PM} $\mathcal{P} \in P(\mathcal{H})$ and we examine the effect that the measurement has restricted to the subspace $\mathcal{S}$ where $\rho$ and $\tau$ are supported. 
We show that the restriction of $\mathcal{P}$ can be decomposed as $T = \ceil{n/r}$ `weighted' measurements over $\mathcal{S}$
\begin{equation}
\trbig{\rho \mathcal{P}[\tau]} = \sum_{i=1}^{T} \trbig{\rho \mathcal{Q}_{L^{\pi}_i}[\tau]}.
\label{eqn:decomposition_weighted_PM}
\end{equation}
The matrices $\{L^{\pi}_t\}_{t=1}^{T}$ are composed of some equipartition of vectors $\ket{v_i}_{i=1}^{n}$ from a Parseval frame induced by the orthogonal projectors in $\mathcal{P}$. 
Moreover, the equipartition of the vectors can be chosen with a permutation $\pi$ of their labels.
\item[(ii)] \emph{A good partition of Parseval frames exists.} In Section~\ref{sec:partitions_frames}, we find a partition of the Parseval frame from the previous decomposition such that not only the matrices $\{L^{\pi}_t\}_{t=1}^{T}$ satisfy the normalization condition corresponding to frames
\begin{equation}
\sum_{t=1}^{T} L^{\pi}_t(L^{\pi}_t)^{\dagger} = \id{\mathcal{S}},
\label{eqn:partition_identity}
\end{equation}
but also
\begin{equation}
\mathcal{L}^{\pi} = \sum_{t=1}^{T} (L^{\pi}_t)^{\dagger} L^{\pi}_t \simeq \id{\mathcal{S}},
\label{eqn:partition_identity_transposed}
\end{equation}
as much as possible.
We use the probabilistic method from combinatorics with $\mathcal{L}^{\pi}$, and the method of moments from random matrix theory to bound the operator norm $\opnorm{\mathcal{L}^{\pi}} = \sup_{|x| = 1} |\mathcal{L}^{\pi}x|$.
We show that there exists a permutation $\pi$ of the vectors, and its corresponding equipartition such that $\opnorm{\mathcal{L}^{\pi}} \leq c \log(r)$.
\item[(iii)] \emph{Randomization}. In order to avoid the fact that we have $T=\ceil{n/r}$ summands in \eqref{eqn:decomposition_weighted_PM}, in Section~\ref{sec:randomization} we use a random weight matrix
\begin{equation}
\hat{L} = \sum_{t=1}^{T} g_t L_t^{\pi},
\label{eqn:random_matrix}
\end{equation}
with $g_t$ for $t \in [T]$ i.i.d.\ $\mathrm{N}(0,1)$ Gaussian random variables.
Moreover, we show using operator trace inequalities that
\begin{equation}
\sum_{i=1}^{T} \trbig{\rho \mathcal{Q}_{L^{\pi}_i}[\tau]} \leq 3 \expectationBig{\trbig{\rho \mathcal{Q}_{\hat{L}}[\tau]}},
\label{eqn:upper_bound_randomization}
\end{equation}
where $\mathcal{Q}_{\hat{L}}$ is a random measurement on $\mathcal{S}$ weighted by $\hat{L}$.
\item[(iv)] \emph{Concentration.} In Section~\ref{sec:concentration} we use well-known concentration inequalities for Gaussian random matrices to show concentration of the operator norm $\opnorm{\hat{L}}$.
The variance of $\hat{L}$ crucially becomes independent of $n$ due to \eqref{eqn:partition_identity}, and \eqref{eqn:partition_identity_transposed}.

\item[(v)] \emph{Interpolation.} We extract the weights from the weighted measurement $\mathcal{Q}_{\hat{L}}$ in \eqref{eqn:upper_bound_randomization}.
Specifically, in Section~\ref{sec:interpolation_bound} we show by using complex interpolation inequalities that for any weighted measurement $\mathcal{Q}_{L}$ with $L \in B(\mathcal{H})$, there exists $\mathcal{Z} \in P(\mathcal{S})$ such that 
\begin{equation}
\trbig{\rho \mathcal{Q}_{L}[\tau]} \leq \opnorm{L}^4 \trbig{\rho \mathcal{Z}[\tau]}.
\end{equation}
\item[(vi)] \emph{Combining all steps.} Finally, in Section~\ref{sec:final_steps} the previous interpolation inequality implies that there exists a constant $C>0$ and $\mathcal{Y} \in P(\mathcal{S})$ such that
\begin{equation}
\expectationBig{\trbig{\rho \mathcal{Q}_{\hat{L}}[\tau]}} \leq C \expectationbig{ \opnorm{\hat{L}}^4 } \trbig{\rho \mathcal{Y}[\tau]},
\label{eqn:upper_bound_interpolation}
\end{equation}
where $\hat{L}$ is the random matrix in \eqref{eqn:random_matrix}.
A polylogarithmic bound in $r$ for $\expectationbig{ \opnorm{\hat{L}}^4 }$ is obtained by using the concentration properties from step (iv).
Finally, from \eqref{eqn:decomposition_weighted_PM}, \eqref{eqn:upper_bound_randomization}, and \eqref{eqn:upper_bound_interpolation}, the claim of Theorem~\ref{thm:main} follows.
\end{itemize}

\subsection{Reduction to weighted \glspl{PM}}
\label{sec:reduction_weighted}

We first show that the projections of a fixed measurement $\mathcal{P} \in P(\mathcal{H})$ in \eqref{eqn:selective_procrustes} to $\mathcal{S}$ is equivalent to a sum over \emph{weighted} \glspl{PM} in $P(\mathcal{S})$. 
\begin{definition}[Weighted \glspl{PM}]
Let $\mathcal{Q}= (Q_1, \ldots, Q_{r}) \in P(\mathcal{S})$ and $L \in B(\mathcal{S})$. A weighted \gls{PM} is a map $\mathcal{Q}_{L}: D(\mathcal{S}) \to B(\mathcal{S})$ defined for any $\rho \in D(\mathcal{S})$ by
\begin{equation}
\mathcal{Q}_{L}[\rho] = \sum_{i=1}^r L Q_i L^{\dagger} \rho L Q_i L^{\dagger} =  L \mathcal{Q}\bigl[ L^{\dagger} \tau L\bigr] L^{\dagger}.
\end{equation}
\label{def:contracted_PM}
\end{definition}
From Definition~\ref{def:contracted_PM}, given $\mathcal{Q} \in P(\mathcal{S})$ and $L \in B(\mathcal{S})$ we can write for $\tau, \rho \in D(\mathcal{H})$
\begin{equation}
\trbig{\rho \mathcal{Q}_{L}[\tau]} = \trbig{\rho L \mathcal{Q}[L^{\dagger} \tau L] L^{\dagger}}.
\label{eqn:definition_contracted_PM}
\end{equation}

The following lemma poses the optimization problem in \eqref{eqn:intro_maximization} as a maximum over a sum of weighted \glspl{PM} in $\mathcal{S}$ that are dependent. 
This is a special case of the well-known Naimark dilation theorem. 
\begin{lemma}
Let $\mathcal{S} \subseteq \mathcal{H}$ be an inclusion of finite-dimensionial Hilbert spaces with $\dim(\mathcal{S})=r$, and $\dim(\mathcal{H}) = n$.
Let $\rho, \tau \in D(\mathcal{S}) \subseteq D(\mathcal{H})$ and $\mathcal{P} \in P(\mathcal{H})$. 
There exists $\mathcal{Q} \in P(\mathcal{S})$, and $L_1, \ldots, L_{T} \in B(\mathcal{S})$ with $T = \ceil{n/r}$ such that 
\begin{equation}
\trbig{\rho \mathcal{P}[\tau]} = \sum_{i=1}^{T} \trbig{\rho \mathcal{Q}_{L_i}[\tau]},
\end{equation}
and
\begin{equation}
\sum_{t=1}^{T} L_t L_t^{\dagger}  = \mathds{1}_{\mathcal{S}}.
\label{eqn:reduction_parition_of_unity}
\end{equation}
\label{lem:reduction_sum_PMs}
\end{lemma}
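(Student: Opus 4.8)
The plan is to make the decomposition completely explicit, exploiting that $\rho$ and $\tau$ are supported on $\mathcal{S}$. First I would write each rank-one projector as $P_i=\ket{w_i}\bra{w_i}$ with $\ket{w_i}\in\mathcal{H}$ a unit vector, and set $\ket{v_i}=\Pi_{\mathcal{S}}\ket{w_i}\in\mathcal{S}$. Using $\rho=\Pi_{\mathcal{S}}\rho\,\Pi_{\mathcal{S}}$, $\tau=\Pi_{\mathcal{S}}\tau\,\Pi_{\mathcal{S}}$ and the rank-one-ness of $P_i$, the pointwise identity
\[
\trbig{\rho P_i\tau P_i}=\bra{w_i}\tau\ket{w_i}\,\bra{w_i}\rho\ket{w_i}=\bra{v_i}\tau\ket{v_i}\,\bra{v_i}\rho\ket{v_i}=\trbig{\rho\,\ket{v_i}\bra{v_i}\,\tau\,\ket{v_i}\bra{v_i}}
\]
follows, so that $\trbig{\rho\mathcal{P}[\tau]}=\sum_{i=1}^{n}\trbig{\rho\,\ket{v_i}\bra{v_i}\,\tau\,\ket{v_i}\bra{v_i}}$. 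Moreover $\sum_{i=1}^{n}\ket{v_i}\bra{v_i}=\Pi_{\mathcal{S}}\big(\sum_{i}P_i\big)\Pi_{\mathcal{S}}=\id{\mathcal{S}}$, i.e.\ the contracted vectors $\{\ket{v_i}\}_{i=1}^{n}$ form a Parseval frame of $\mathcal{S}$.

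Next I would pad this list with $Tr-n\ge 0$ copies of the zero vector and split the resulting $Tr$ vectors into $T$ consecutive blocks of length $r$, writing the $t$-th block as $\ket{v_{t,1}},\dots,\ket{v_{t,r}}$. Fixing any orthonormal basis $\{\ket{e_j}\}_{j=1}^{r}$ of $\mathcal{S}$, I take $\mathcal{Q}=(Q_1,\dots,Q_r)$ with $Q_j=\ket{e_j}\bra{e_j}$, so that $\mathcal{Q}\in P(\mathcal{S})$, and $L_t=\sum_{j=1}^{r}\ket{v_{t,j}}\bra{e_j}\in B(\mathcal{S})$, i.e.\ the operator carrying the fixed basis onto the $t$-th block. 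Then $L_tQ_jL_t^{\dagger}=\ket{v_{t,j}}\bra{v_{t,j}}$, so unpacking Definition~\ref{def:contracted_PM} gives $\trbig{\rho\mathcal{Q}_{L_t}[\tau]}=\sum_{j=1}^{r}\trbig{\rho\,\ket{v_{t,j}}\bra{v_{t,j}}\,\tau\,\ket{v_{t,j}}\bra{v_{t,j}}}$; summing over $t\in[T]$ and relabelling the blocks back to $1,\dots,n$ (the padding vectors dropping out) recovers $\trbig{\rho\mathcal{P}[\tau]}$. Finally $\sum_{t=1}^{T}L_tL_t^{\dagger}=\sum_{t=1}^{T}\sum_{j=1}^{r}\ket{v_{t,j}}\bra{v_{t,j}}=\sum_{i=1}^{n}\ket{v_i}\bra{v_i}=\id{\mathcal{S}}$, which is \eqref{eqn:reduction_parition_of_unity}.

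Since the argument is a direct construction there is no genuine obstacle; the two points needing care are (a) the substitution $\ket{w_i}\mapsto\ket{v_i}$ inside every scalar factor, which uses rank-one-ness of $P_i$ together with the $\Pi_{\mathcal{S}}$-invariance of $\rho$ and $\tau$, and (b) the zero padding, which must be arranged so that exactly $T=\ceil{n/r}$ weight matrices appear while \eqref{eqn:reduction_parition_of_unity} continues to hold on the nose rather than only approximately. It is worth recording that the construction is nothing but the finite-dimensional Naimark dilation of the POVM $\{\ket{v_i}\bra{v_i}\}_{i=1}^{n}$ on $\mathcal{S}$ run in reverse, which is why the two displayed identities of the lemma hold exactly rather than up to an error term.
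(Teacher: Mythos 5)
Your construction is exactly the one in the paper: contract the frame vectors $\ket{v_i}=\Pi_{\mathcal{S}}\ket{u_i}$, pad with zero vectors to reach $Tr$ entries, block them into $T$ groups of $r$, and define $L_t$ as the operator sending a fixed orthonormal basis of $\mathcal{S}$ onto the $t$-th block, with $\mathcal{Q}$ the projective measurement in that basis. The proof is correct and takes essentially the same route as the paper's.
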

\begin{proof}
We have $\mathcal{P}= (P_1, \ldots, P_{n}) \in P(\mathcal{H})$ with $P_i = \ket{u_i}\bra{u_i}$ for $i \in [n]$, and $\{\ket{u_i}\}_{i=1}^{n}$ is an \gls{ONB} of $\mathcal{H}$. 
Denote the orthogonal projection from $\mathcal{H}$ to $\mathcal{S}$ by $\Pi_{S}:\mathcal{H} \to \mathcal{S}$, and define $\ket{v_i} = \Pi_{S} \ket{u_i} \in \mathcal{S}$ for all $i \in [n]$. 
By using the fact that (i) $\rho  = \Pi_{S} \rho \Pi_{S}^{\dagger}$ and similarly for $\tau$, the following equality holds,
\begin{align}
\trbig{\rho \mathcal{P}[\tau]} &= \sum_{i=1}^{n} \trbig{\rho P_i \tau P_i} \nonumber \\
&\eqcom{i} = \sum_{i=1}^{n} \trbig{\rho \Pi_{S} P_i \Pi_{S}^{\dagger} \tau \Pi_{S} P_i \Pi_{S}^{\dagger}}, \nonumber \\
&= \sum_{i=1}^{n} \trbig{\rho \ket{v_i}\bra{v_i} \tau \ket{v_i}\bra{v_i}}.
\end{align}
Out of the $n$ vectors, we consider $T$ disjoint sets of $r$ vectors, where we complete with the null vector if necessary.
If $m = \ceil{n/r}r$ is the total number of vectors, we may chose the equipartition $\{\ket{v_i}\}_{i=1}^{r}, \{\ket{v_i}\}_{i=r+1}^{2r}, \ldots, \{\ket{v_i}\}_{i=T(r-1)+1}^{Tr}$.
Let $\{\ket{e_i}\}_{i=1}^{r}$ be a \gls{ONB} of $\mathcal{S}$ and let $\mathcal{E} = (E_1, \ldots, E_r) \in P(\mathcal{S})$ be the \gls{PM} in this basis, that is, $E_i = \ket{e_i} \bra{e_i}$ for $i \in [r]$.
We define the operators
\begin{equation}
L_t = \ket{v_{r(t-1) + 1}} \bra{e_1} + \ldots + \ket{v_{rt}} \bra{e_r} =  \sum_{l=1}^{r} \ket{v_{r(t-1) + l}} \bra{e_l} \in B(\mathcal{S}) \text{ for } t \in [T],
\label{eqn:proof_lemma_reduction_1}
\end{equation}
and note that $L_t E_l L_t^{\dagger}= \ket{v_{r(t-1) + l}}\bra{v_{r(t-1) + l}} $ for all $l \in [r]$, and $t \in [T]$.
In the notation of \eqref{eqn:definition_contracted_PM}, we have now
\begin{align}
\sum_{i=1}^{n} \trbig{\rho P_i \tau P_i} &= \sum_{t = 1}^{T} \sum_{i=1}^{r} \trbig{\rho L_t E_i L_t^{\dagger} \tau L_t E_i L_t^{\dagger}} \nonumber \\
&= \sum_{t = 1}^{T} \trbig{\rho \mathcal{E}_{L_t}[\tau]}.
\end{align}
Moreover, from the condition that $\sum_{i=1}^{n} P_i = \mathds{1}_{\mathcal{H}}$, and $\sum_{l=1}^{r} E_l = \mathds{1}_{\mathcal{S}} = \Pi_{\mathcal{S}}$ we obtain that 
\begin{align}
\mathds{1}_{\mathcal{S}} = \Pi_{S} \mathds{1}_{\mathcal{H}} \Pi^{\dagger}_{S} = \sum_{i=1}^{n} \Pi_{S} P_i \Pi_{S}^{\dagger}
= \sum_{i=1}^{m} \ket{v_i}\bra{v_i} 
= \sum_{t = 1}^{T} \sum_{i=1}^{r} L_t E_i L_t^{\dagger} 
= \sum_{t = 1}^{T} L_t L_t^{\dagger}.
\end{align}
\end{proof}

The vectors $\{\ket{v_i}\}_{i=1}^{m}$ in the proof of Lemma~\ref{lem:reduction_sum_PMs} constitute a Parseval frame of $\mathcal{S}$, since
\begin{equation}
\sum_{i=1}^{m} \ket{v_i}\bra{v_i} = \id{\mathcal{S}},
\end{equation}
which implies that $|v_i| \leq 1$, and
\begin{equation}
\sum_{i=1}^{m} |v_i|^2 = r.
\label{eqn:proof_variance_example_normalization}
\end{equation}

\subsection{Equipartitions of Parseval frames}
\label{sec:partitions_frames}

In Section~\ref{sec:reduction_weighted} we have restricted a \gls{PM} $\mathcal{P} \in P(\mathcal{H})$ to $\mathcal{S}$, and we have show in Lemma~\ref{lem:reduction_sum_PMs} that it is equivalent to a sum over weighted \glspl{PM} in $\mathcal{S}$.
To later introduce randomization in Section~\ref{sec:randomization} without adding a large variance, we additionally need the weight matrices to be approximately symmetric.
In particular, while we have the partition of unity property in \eqref{eqn:reduction_parition_of_unity}, its counterpart
\begin{align}
\mathcal{L} &= L_1^{\dagger} L_1 + \cdots + L_{T}^{\dagger} L_{T}, \quad \quad \text{ where } \nonumber \\
(\mathcal{L})_{ij} &= \sum_{t=1}^{T} \braket{v_{r(t-1) + i} \mid v_{r(t-1) + j}}
\label{eqn:def_L}
\end{align}
can be very far away from the identity $\id{\mathcal{S}}$. 
To overcome this issue, we will exploit two invariant properties of the decomposition that we have not used in the definition of $\{L_t\}_{t \in [T]}$ and Lemma~\ref{lem:reduction_sum_PMs}.
Without loss of generality, we will assume that $T = n/r \in \N$, since we can always complete the frame with the null vector to satisfy the constraint.
\begin{itemize}
\item For any permutation $\pi \in S_n$, we can define similar matrices $\{L^{\pi}_t\}_{t \in [T]}$ by using the permuted set of vectors $\{ \ket{v_{\pi(i)}} \}_{i \in m}$ which satisfy Lemma~\ref{lem:reduction_sum_PMs}, but do not leave \eqref{eqn:def_L} invariant.
\item We can add a phase modifier $\Theta = (\theta_1, \ldots, \theta_n) \in \R^{n}$ to the frame $\{\ket{v_i}\}_{i=1}^{n}$ by defining for $j \in [n]$, $\ket{v_j}(\Theta) = \exp(\ci \theta_j) \ket{v_j}$  such that Lemma~\ref{lem:reduction_sum_PMs} also holds for this frame, but \eqref{eqn:def_L} is not invariant.
\end{itemize}
For a permutation $\pi \in S_n$, and a frame modifier $\Theta = (\theta_1, \ldots, \theta_n) \in \R^{n}$, we define for $t \in [T]$
\begin{equation}
L^{\pi}_t(\Theta) = \e{\ci \theta_{\pi(r(t-1) + 1)}}\ket{v_{\pi(r(t-1) + 1)}}\bra{e_1} + \ldots + \e{\ci \pi(\theta_{rt})} \ket{v_{\pi(rt)}}\bra{e_r} \in B(\mathcal{S}).
\label{eqn:definition_L}
\end{equation}
The matrices $\{L^{\pi}_i(\Theta)\}_{i=1}^{T}$ also satisfy Lemma~\ref{lem:reduction_sum_PMs} for any $\pi \in S_n$ and $\Theta \in \R^{n}$.
The generalization of \eqref{eqn:def_L} for $\pi \in S_n$ and $\Theta \in \R^{n}$ is then
\begin{align}
\mathcal{L}^{\pi}(\Theta) &= L^{\pi}_1(\Theta)^{\dagger}L^{\pi}_1(\Theta) + \ldots + L^{\pi}_{T}(\Theta)^{\dagger}L^{\pi}_{T}(\Theta).
\label{eqn:def_L_pi_theta}
\end{align}
Our aim is to find $\pi \in S_n$ and $\Theta \in \R^{n}$ such that $\opnorm{\mathcal{L}^{\pi}(\Theta)}$ is as small as possible.
Using the probabilistic method from combinatorics we can show the following.
\begin{proposition}
Let $\mathcal{L}^{\pi}(\Theta)$ be defined in \eqref{eqn:def_L_pi_theta}. 
There exists a constant $c > 0$ such that for any $r\geq 2$ and $T \geq T(r) \geq 1$, there exists a permutation $\pi^{\star} \in S_n$, and a choice of frame modified by $\Theta^{\star} \in \R^{n}$ such that
\begin{equation}
\opnormbig{\mathcal{L}^{\pi^{\star}}(\Theta^{\star})} < c \log(r).
\end{equation}
\label{prop:variance_term_bounded}
\end{proposition}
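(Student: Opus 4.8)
The plan is to establish \refProposition{prop:variance_term_bounded} by the probabilistic method: draw the equipartition $\pi$ uniformly at random among all partitions of the $m$ frame labels into $T$ ordered blocks of size $r$, and draw the phases $\theta_i$ independently and uniformly on $[0,2\pi)$; I will then bound the expectation $\E\bigl[\opnorm{\mathcal{L}^{\pi}(\Theta) - \id{\mathcal{S}}}\bigr]$ and conclude by Markov's inequality, since any realization with $\opnorm{\mathcal{L}^{\pi}(\Theta) - \id{\mathcal{S}}}$ below its mean satisfies $\opnorm{\mathcal{L}^{\pi}(\Theta)} \le 1 + \opnorm{\mathcal{L}^{\pi}(\Theta) - \id{\mathcal{S}}}$. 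As $\mathcal{L}^{\pi}(\Theta) = \sum_{t} L_t^{\pi}(\Theta)^{\dagger}L_t^{\pi}(\Theta)$ is self-adjoint, for every integer $k \ge 1$ we have $\opnorm{\mathcal{L}^{\pi}(\Theta) - \id{\mathcal{S}}}^{2k} \le \trbig{(\mathcal{L}^{\pi}(\Theta) - \id{\mathcal{S}})^{2k}}$, so it suffices to control the $2k$-th trace moment and optimize over $k$ at the end.

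The first point is to verify the centering. Writing out the matrix elements of $\mathcal{L}^{\pi}(\Theta)$ from \eqref{eqn:definition_L}, an off-diagonal entry $(l,l')$ is a sum over the $T$ blocks of terms $\e{\ci(\theta_a - \theta_b)}\braket{v_b|v_a}$ with $a \ne b$ distinct frame labels, hence has mean zero over the random phases; a diagonal entry $(l,l)$ is a sum of $T$ squared norms $|v_{\cdot}|^2$ evaluated at a uniformly random ordered $T$-tuple of distinct labels, hence has mean $T\cdot\frac1m\sum_{i}|v_i|^2 = 1$, using $\sum_i |v_i|^2 = r$ and $m = Tr$. Therefore $M := \mathcal{L}^{\pi}(\Theta) - \id{\mathcal{S}}$ is a centered self-adjoint random matrix on the $r$-dimensional space $\mathcal{S}$, and the two inputs $|v_i| \le 1$ and $\sum_i |v_i|^2 = r$ recorded at the end of \refSection{sec:reduction_weighted} are exactly what will make its moments small.

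I would then expand $\trbig{M^{2k}} = \sum_{l_1,\dots,l_{2k} \in [r]} M_{l_1 l_2}M_{l_2 l_3}\cdots M_{l_{2k} l_1}$, substitute for each entry its representation as a block sum of Gram entries minus the mean, and take expectations in two stages. Taking $\E_\Theta$ first annihilates every monomial in which some frame label carries a nonzero net phase, leaving only ``balanced'' closed walks; taking $\E_\pi$ afterwards, I would contract the surviving inner products using the Parseval identity $\sum_{i}\ket{v_i}\bra{v_i} = \id{\mathcal{S}}$ and bound the leftover factors with $|v_i| \le 1$, in the spirit of the classical Wigner moment computation but with the extra bookkeeping imposed by the block structure. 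The target is an estimate of the shape $\E\bigl[\trbig{M^{2k}}\bigr] \le r\,(Ck)^{2k}$ for a universal $C$ and all $1 \le k \lesssim \log r$; the linear-in-$k$ factor (rather than $\sqrt{k}$) is unavoidable because the entries of $M$ are only subexponential---the diagonal entries are Poissonised occupancy counts and the off-diagonal entries are products of subgaussian-type quantities---and this is precisely what turns the final bound into $\log r$ rather than $\sqrt{\log r}$ (cf.\ \refRemark{remark:dimensional_dependence}). Feeding this into $\E[\opnorm{M}] \le \bigl(\E[\trbig{M^{2k}}]\bigr)^{1/2k} \le r^{1/2k}Ck$ and choosing $k \asymp \log r$ gives $\E[\opnorm{M}] = O(\log r)$, and hence a good pair $(\pi^\star,\Theta^\star)$ exists.

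The main obstacle is the dependence created by working with a genuine equipartition: the $T$ blocks are not independent, so cross-block moments do not factor and the phase-cancellation condition is phrased in terms of block--position pairs rather than frame labels directly. I would neutralize this either by a symmetrization/decoupling step that replaces sampling without replacement by sampling with replacement at the cost of an absolute constant, or by invoking negative-association bounds for occupancy statistics so that the relevant moments are dominated by those of the independent allocation; either way the problem then reduces to a standard enumeration of balanced closed walks combined with frame contractions. Small values of $T$ are degenerate and handled directly---when $T=1$ the frame is forced to be an orthonormal basis and $\mathcal{L}^{\pi}(\Theta) = \id{\mathcal{S}}$---which is why the statement only claims the bound for $T$ past a threshold $T(r)$.
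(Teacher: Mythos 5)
Your overall strategy coincides with the paper's: randomize over a uniform permutation of the frame labels together with independent uniform phases, control the expected operator norm via trace moments, and optimize $k \asymp \log r$ at the end. The only cosmetic difference is that you center the matrix and work with $\trbig{M^{2k}}$ for $M = \mathcal{L}^{\pi}(\Theta) - \id{\mathcal{S}}$, whereas the paper exploits positive semidefiniteness of $\mathcal{L}^{\pi}(\Theta)$ and bounds $F(k)=\E\bigl[\trbig{\mathcal{L}^{\pi}(\Theta)^{k}}\bigr]$ directly; Lemma~\ref{lem:variance_bound_moment_bound} gives $F(k) \le c\bigl(rk^{k} + k!\,k^{2k}r^{k}/T\bigr)$, and the proposition follows by exactly your choice $k=\floor{\log r}$ once $T$ is past a threshold. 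Your computation of the centering (diagonal mean $1$, off-diagonal mean $0$) is correct, and the shape of your target bound $r(Ck)^{2k}$ is consistent in order with the paper's $rk^{k}$.

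The gap is that the entire technical content of the result \emph{is} the trace-moment estimate, and you do not prove it: you announce it as ``the target'' and defer the main difficulty---the dependence induced by the equipartition and the interaction between the phase-cancellation condition and the block structure---to an unspecified symmetrization/decoupling step or to negative-association bounds for occupancy statistics. Neither tool is developed, and neither obviously applies here: the terms surviving $\E_{\Theta}$ are not occupancy counts but products of Gram entries $\braket{v_i|v_j}$ whose appearance is constrained jointly by the closed walk in $[r]^{2k}$, by which block each label falls into, and by the requirement that transitions between blocks occur only at matching positions modulo $r$ (conditions (\textsf{A}) and (\textsf{B}) of Lemma~\ref{lem:constraints_cycle}). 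The paper's Appendix~\ref{secappendix:proof_of_lemma} is devoted entirely to classifying these configurations, bounding the probability that a random permutation realizes each one (Lemma~\ref{lem:bound_prob_cycle}), and exploiting the symmetries of the expansion to isolate a leading term of order $rk^{k}$ uniform in $T$ with an error of order $k!\,k^{2k}r^{k}/T$. Without this counting argument---or a genuinely worked-out substitute---your proposal is a plan for a proof rather than a proof. Your handling of the degenerate small-$T$ regime via the threshold $T(r)$ is correct and matches the paper.
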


Proposition~\ref{prop:variance_term_bounded} contains a dimension-dependent term $\log(r)$, which is small enough for our purposes and crucially holds for $T$---that is, for $n$---large enough.
To show Proposition~\ref{prop:variance_term_bounded} we consider the random matrix $\mathcal{L}^{\pi}(\Theta)$ with a random permutation $\pi \sim \mathrm{Unif}(S_m)$, and a random phase $\exp(\ci \Theta_i) \sim \mathrm{Unif}(S^{1})$ for all $i \in [n]$.
The method of moments is used to estimate the operator norm of $\mathcal{L}^{\pi}(\Theta)$, and is a well-known method in random matrix theory \cite{tao2012topics}. 
This typically involves using that for any positive-semidefinite Hermitian matrix $A$, $\opnorm{A}^{k} \leq \tr{A^{k}}$ for any $k \geq 1$.
Therefore, we can compute the expectation of the operator norm by estimating the expected trace
\begin{equation}
\expectationbig{\opnorm{\mathcal{L}^{\pi}(\Theta)}^{k}} \leq  \expectationbig{\mathrm{Tr}[\mathcal{L}^{\pi}(\Theta)^{k}]} = F(k).
\label{eqn:upper_bound_expectation_operator_norm}
\end{equation}
By showing that $F(k)$ is small enough, from \eqref{eqn:upper_bound_expectation_operator_norm} we deduce that for some permutation $\pi \in S_n$, and frame modifier $\Theta$ the operator norm is also small enough.
Differently to common results in random matrix theory, the entries of $\mathcal{L}^{\pi}(\Theta)$ are not independent, but strongly correlated depending on the frame.
Nonetheless, we are able to obtain a bound of the operator norm by leveraging the frame properties.
The following bound characterizes the expectation \eqref{eqn:upper_bound_expectation_operator_norm}.
\begin{lemma}
Under the assumptions of Proposition~\ref{prop:variance_term_bounded}, there exists $c>0$ such that for any $r \geq 2$, $T \geq 2$ and $r \geq k \geq 1$,
\begin{equation}
F(k) \leq c \left( r k^{k} +  \frac{k! k^{2k} r^{k}}{T} \right).
\label{eqn:variance_bound_term}
\end{equation}
\label{lem:variance_bound_moment_bound}
\end{lemma}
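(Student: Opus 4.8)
The plan is to compute the moment $F(k)=\expectationbig{\trbig{(\mathcal{L}^{\pi}(\Theta))^{k}}}$ by the walk expansion, integrating out the random phases $\Theta$ and then the random permutation $\pi$. Write $m=Tr$ and let $\sigma\colon[T]\times[r]\to[m]$ be the uniform random bijection $\sigma(t,l)=\pi(r(t-1)+l)$. Using the entrywise formula for $L^{\pi}_t(\Theta)$ behind \eqref{eqn:def_L_pi_theta}, one has $(\mathcal{L}^{\pi}(\Theta))_{l l'}=\sum_{t}\e{\ci(\theta_{\sigma(t,l')}-\theta_{\sigma(t,l)})}\braket{v_{\sigma(t,l)}\mid v_{\sigma(t,l')}}$, so expanding the $k$-th power of the trace gives
\[
\trbig{(\mathcal{L}^{\pi}(\Theta))^{k}}=\sum_{\vec l\in[r]^{k}}\ \sum_{\vec t\in[T]^{k}}\ \prod_{a=1}^{k}\e{\ci(\theta_{\sigma(t_a,l_{a+1})}-\theta_{\sigma(t_a,l_a)})}\ \braket{v_{\sigma(t_a,l_a)}\mid v_{\sigma(t_a,l_{a+1})}},
\]
indices of $\vec l$ read cyclically. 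First I take $\expectationbig{\cdot}$ over $\Theta$ with the $\e{\ci\theta_j}$ i.i.d.\ uniform on the circle: the phase average of a summand is $1$ if the multiset $\{(t_a,l_a)\}_{a}$ equals $\{(t_a,l_{a+1})\}_{a}$ --- equivalently, if the closed walk, viewed as a directed multigraph on the slot set $[T]\times[r]$ with edges $(t_a,l_a)\to(t_a,l_{a+1})$, is balanced (equal in- and out-degree at each slot) --- and $0$ otherwise; equivalently, in the block form $\mathcal{L}^{\pi}(\Theta)=\sum_t L^{\pi}_t(\Theta)^{\dagger}L^{\pi}_t(\Theta)$, blocks at distinct times are supported on disjoint frame vectors, so averaging the phases of an isolated block replaces it by $\mathrm{diag}(|v_{\sigma(t,l)}|^{2})_{l}$. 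Thus $F(k)=\sum_{(\vec l,\vec t)\ \mathrm{balanced}}\expectationbig{\prod_{a=1}^{k}\braket{v_{\sigma(t_a,l_a)}\mid v_{\sigma(t_a,l_{a+1})}}}$, the remaining expectation over $\sigma$.

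Next I split the balanced walks into the \emph{pure-diagonal} ones, with $l_1=\dots=l_k$, and the rest. The pure-diagonal part is exactly $\sum_{l=1}^{r}\expectationbig{(\mathcal{L}^{\pi}(\Theta))_{ll}^{k}}=\sum_{l=1}^{r}\expectationbig{\bigl(\textstyle\sum_{t=1}^{T}|v_{\sigma(t,l)}|^{2}\bigr)^{k}}$. For each fixed $l$ this is the $k$-th moment of a sum of $T$ negatively associated $[0,1]$-valued variables $|v_{\sigma(t,l)}|^{2}$, each of mean $m^{-1}\sum_i|v_i|^{2}=T^{-1}$ (using $\sum_i|v_i|^{2}=r$). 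Expanding the power, grouping the indices $\vec t$ by the set $J$ of distinct slots they use, collapsing repeated slots by $|v_i|\le 1$, and using negative association in the form $\expectationbig{\prod_{(t,l)\in J}|v_{\sigma(t,l)}|^{2}}\le T^{-|J|}$, a count of surjections $[k]\to J$ gives $\expectationbig{(\mathcal{L}^{\pi}(\Theta))_{ll}^{k}}\le\sum_{j=1}^{k}j^{k}/j!\le c\,k^{k}$ (a Bell-number bound). Summing over $l$ yields the first term $c\,r\,k^{k}$ of \eqref{eqn:variance_bound_term}.

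The real work --- and the step I expect to be the main obstacle --- is the \emph{off-diagonal} contribution, from balanced walks with non-constant $\vec l$. Here I bound every factor by $|\braket{v_i\mid v_j}|$ (diagonal factors, where $l_a=l_{a+1}$, by $|v_\bullet|^{2}\le1$), replace $\sigma$ by i.i.d.\ uniform sampling from $[m]$ at the cost of a factor $\le 2^{k}$ (this, and the label count below, are where the hypothesis $k\le r$ enters, since it bounds $\prod_{j<s}m/(m-j)$), and reduce the $\sigma$-expectation of a fixed balanced walk to $m^{-s}$ times a nonnegative ``frame tensor'' $\sum_{i\in[m]^{s}}\prod_a|\braket{v_{i_{\alpha(a)}}\mid v_{i_{\beta(a)}}}|$, where $s$ is the number of distinct slots. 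Because the walk is balanced, this tensor factors over the cycles of the slot multigraph (cross terms only increase it), and by Hölder's inequality with $|\braket{v_i\mid v_j}|\le 1$ and the frame identities $\sum_{i,j}|\braket{v_i\mid v_j}|^{2}=r$ and $\sum_i|\braket{v_i\mid v_j}|^{2}=|v_j|^{2}\le1$, each length-$\ell$ cycle contributes at most $m^{\ell-2}r$ and each diagonal self-loop at most $1$, so the tensor is at most $m^{L-2b}r^{b}$ with $L$ the number of transitions ($l_a\neq l_{a+1}$) and $b$ the number of cycles. The remaining ingredient is an enumeration of walk shapes: the coincidence pattern of $\vec l$ is one of at most $k^{k}$ set partitions of $[k]$, assigning concrete label values costs at most $r^{k}$, and the balance condition (matching sources to sinks) leaves at most $k!\,k^{k}$ patterns of $\vec t$; the key point is that a transition $l_a\neq l_{a+1}$ puts two distinct slots at the single time $t_a$, so $s$ exceeds the number of distinct times by at least one, and summing $m^{-s}$ over the free times (each absorbed by a factor $m^{-1}=T^{-1}r^{-1}$, leaving a convergent $\sum_f k^{k}/f!\le c\,k^{k}$ since $T^{f}m^{-f}=r^{-f}\le1$) leaves exactly one surplus factor $T^{-1}$. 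Collecting constants bounds the off-diagonal part by $c\,k!\,k^{2k}r^{k}/T$; and in the regime where this exceeds $r^{k}$ the claim is immediate from $F(k)\le(\trbig{\mathcal{L}^{\pi}(\Theta)})^{k}=r^{k}$, valid since $\mathcal{L}^{\pi}(\Theta)\succeq 0$ has trace $r$. Adding the pure-diagonal bound $c\,r\,k^{k}$ and the off-diagonal bound gives \eqref{eqn:variance_bound_term}. The genuinely delicate point throughout is the Hölder/frame-identity bookkeeping on the cycle decomposition, which must be shown to produce exactly the factor $T^{-1}$ with combinatorial overhead no worse than $k!\,k^{2k}$.
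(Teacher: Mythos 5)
Your overall strategy coincides with the paper's: integrate out the phases so that only balanced closed walks survive, split the surviving terms into the pure-diagonal walks (constant $\vec{l}$, which after phase averaging are exactly the factors built from norms $|v_i|^2$) and the rest, show the diagonal part is $O(rk^k)$, and show the off-diagonal part is $O(k!\,k^{2k}r^k/T)$. The implementation of both halves, however, is genuinely different. For the diagonal part the paper computes the exact probability that a given product of norms appears for a random permutation (Lemma~\ref{lem:bound_prob_cycle}(ii)), organizes the sum by orbits under the $S_k$ and $S_T$ actions, and arrives at the explicit leading term $\sum_l r^{1-l}\sum_{\mathbf{k}\in\mathcal{M}(k,l)}\binom{k}{\mathbf{k}}\prod_i|v_i|^{2k_i}$, which it then bounds by $c\,rk^k$; your negative-association bound for sampling without replacement, combined with $\sum_j j^k/j!\le c\,k^k$, reaches the same conclusion more directly and avoids the orbit bookkeeping entirely. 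For the off-diagonal part the paper bounds every inner product by $|v_i||v_j|$ and counts compatible permutations via Stirling numbers and the observation that a connected transition graph on $n(\Gamma)$ partition sets pins down at least $n(\Gamma)-1$ index positions (Lemma~\ref{lem:bound_prob_cycle}(i) and Lemma~\ref{lem:asymptotic_bound_T}(b), where $d-n(\Delta)\ge1$ produces the $T^{-1}$); you instead keep the inner products, decompose the balanced slot multigraph into cycles, and use Gram-matrix and H\"older bounds per cycle, extracting $T^{-1}$ from the fact that a transition forces two distinct slots at a single time. Both mechanisms for producing the $T^{-1}$ are the same idea in different clothing.

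Two caveats. First, the claim that replacing the random permutation by i.i.d.\ sampling costs only a factor $2^k$ is not justified by $k\le r$ alone: for $T=2$ and $k=r$ a walk can occupy up to $2k=m$ distinct slots and $\prod_{j<s}m/(m-j)$ is unbounded. Your fallback $F(k)\le(\trbig{\mathcal{L}^{\pi}(\Theta)})^k=r^k$ does rescue this, since the off-diagonal estimate is only needed when $T>k!\,k^{2k}$, where the correction factor is $O(1)$ --- but this case split has to be the stated structure of the proof, not a parenthetical remark at the end. Second, and more substantively, the final accounting of the off-diagonal part --- that the coincidence pattern of $\vec{t}$, the cycle structure, the exponents $s$, $L$, $b$, and the sum over free times combine to leave exactly one surplus factor $T^{-1}$ with combinatorial overhead no worse than $k!\,k^{2k}$ --- is asserted rather than carried out; this is precisely where the paper invests Lemmas~\ref{lem:constraints_cycle}--\ref{lem:asymptotic_bound_T} and most of Appendix~\ref{secappendix:proof_of_lemma}. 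As written, the diagonal half is essentially complete (and cleaner than the paper's), while the off-diagonal half is a credible roadmap rather than a proof.
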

\noindent
Let us show first that Lemma~\ref{lem:variance_bound_moment_bound} implies Proposition~\ref{prop:variance_term_bounded}.\\

\noindent
\emph{Proof (of Proposition~\ref{prop:variance_term_bounded})}. Fix $k = \floor{\log(r)}$, and choose $T=T(r) \geq 1$ large enough such that
\begin{equation}
F(k) \leq 2 c r k^{k}.
\end{equation}
Then, there exists a constants $c_1, C>0$ such that for all $r \geq 2$,
\begin{align}
\expectationbig{\opnorm{\mathcal{L}^{\pi}(\Theta)}} & \leq \expectationbig{\opnorm{\mathcal{L}^{\pi}(\Theta)}^{k}}^{1/k} \nonumber \\
& \leq F(k)^{1/k} \nonumber \\
& \leq c_1^{1/\log(r)} r^{1/\log(r)} \log(r) \nonumber \\
& \leq  C \log(r).
\label{eqn:proof_variance_bound_prop5}
\end{align}
From \eqref{eqn:proof_variance_bound_prop5}, we deduce that there exists a permutation $\pi^{\star}$ and frame modifier $\Theta^{\star}$ such that
\begin{equation}
\opnorm{\mathcal{L}^{\pi^{\star}}(\Theta^{\star})} \leq C \log(r).
\label{eqn:moments_bound}
\end{equation}
\qed

The proof of Lemma~\ref{lem:variance_bound_moment_bound} is involved, and is deferred to Appendix~\ref{secappendix:proof_of_lemma}.
The intuition behind the proof is that the correlations that the different vectors $\ket{v_i}$ of the frame may have---encoded in the inner products $\langle v_i | v_j \rangle$---either are small when $T$ is large, or they are concentrated in a handful of vectors.
In the former case, for a random permutation and frame modifier, the matrix $\mathcal{L}^{\pi}(\Theta)$ will be approximately homogeneous and thus have operator norm of order $O(1)$.
In the latter case, most permutations will `separate' the significant vectors well so that we will only need to worry about the diagonal of $\mathcal{L}^{\pi}(\Theta)$. 
We can then estimate the expectation of the operator norm when significant vectors add up in the diagonal entries.
This is quantified by the leading term in \eqref{eqn:variance_bound_term}.
It also turns out that when there are only significant vectors (when $L_1 = \id{\mathcal{S}}$, for example), the order of \eqref{eqn:moments_bound} is almost sharp.
Specifically, the expected operator norm in this case is the expected maximum of a multinomial random variable with uniform probability $(1/r, \ldots, 1/r)$ and with $r$ trials, which has order $\log(r)/\log(\log(r))$ \cite{raab1998balls}.

In \eqref{eqn:proof_variance_bound_prop5} the choice of $k \simeq \log(r)$ is optimal, since for larger or smaller order of $k$, we would obtain a worse bound on $F(k)$.

\subsection{Randomization}
\label{sec:randomization}
We assume for the time being that $T$ is large enough so that by Proposition~\ref{prop:variance_term_bounded} we can choose the equipartition of the frame induced by $\mathcal{P}$ in Lemma~\ref{lem:reduction_sum_PMs} that satisfies both \eqref{eqn:reduction_parition_of_unity} and
\begin{equation}
\opnormbig{L_1^{\dagger} L_1 + \ldots + L_T^{\dagger} L_T} \leq c \log(r).
\end{equation}
We namely omit in the notation the selected permutation $\pi^{\star}$ and phase $\Theta^{\star}$ that Proposition~\ref{prop:variance_term_bounded} guarantees, that is, we will denote $L_t = L^{\pi^{\star}}_{t}(\Theta^{\star})$ for all $t \in [T]$.
Later on in Section~\ref{sec:final_steps}, we will show that we by enlarging the Hilbert space $\mathcal{H}$, the assumption that $T$ is large enough will become superfluous.

Let $g_t$ for $t \in [T]$ be i.i.d.\ $\mathrm{N}(0, 1)$ random variables.
Define the operator-valued random variable
\begin{equation}
\hat{L} = \sum_{t=1}^{T} g_t L_t.
\label{eqn:definition_L_hat}
\end{equation}

By using this random matrix as a weight, we can decouple the dimension $T$ from the sum of $T$ weighted measurments in Lemma~\ref{lem:reduction_sum_PMs}, and transfer its dependence to the variance of $\hat{L}$.
\begin{lemma}
In the same setting of Lemma~\ref{lem:reduction_sum_PMs}, let $\hat{L}$ be defined in \eqref{eqn:definition_L_hat}, and let $\mathcal{Q} \in P(\mathcal{S})$, then
\begin{equation}
\sum_{t=1}^{T} \trbig{\rho \mathcal{Q}_{L_t}[\tau]} \leq 3\expectationBig{\trbig{\rho \mathcal{\mathcal{Q}}_{\hat{L}}[\tau]}}.
\label{eqn:lower_bound_expectation}
\end{equation}
\label{lem:upper_bound_expectation}
\end{lemma}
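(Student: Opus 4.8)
The plan is to prove the a priori stronger inequality
$\sum_{t=1}^{T}\trbig{\rho\,\mathcal{Q}_{L_t}[\tau]}\le\expectationBig{\trbig{\rho\,\mathcal{Q}_{\hat{L}}[\tau]}}$,
which immediately gives \eqref{eqn:lower_bound_expectation}: the right-hand side is nonnegative, since $\mathcal{Q}_{\hat{L}}[\tau]=\hat{L}\,\mathcal{Q}[\hat{L}^{\dagger}\tau\hat{L}]\,\hat{L}^{\dagger}\succeq 0$ (the pinching $\mathcal{Q}[\,\cdot\,]$ is a positive map) and $\rho\succeq 0$, so replacing $1$ by $3$ only loosens the bound.

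First I would record a closed form for $M\mapsto\trbig{\rho\,\mathcal{Q}_M[\tau]}$ valid for every weight $M\in B(\mathcal{S})$. Writing $\mathcal{Q}=(Q_1,\ldots,Q_r)$ with $Q_i=\ket{q_i}\bra{q_i}$, the operator $MQ_iM^{\dagger}=M\ket{q_i}\bra{q_i}M^{\dagger}$ is rank one, so expanding $\mathcal{Q}_M[\tau]=\sum_i MQ_iM^{\dagger}\tau MQ_iM^{\dagger}$ and tracing against $\rho$ collapses each term to scalars and yields
\begin{equation}
\trbig{\rho\,\mathcal{Q}_M[\tau]}=\sum_{i=1}^{r}\norm{\rho^{1/2}M\ket{q_i}}^{2}\,\norm{\tau^{1/2}M\ket{q_i}}^{2}.
\label{eqn:sketch_closed_form}
\end{equation}
Introducing $\ket{a^{i}_{t}}:=L_t\ket{q_i}$ and the $T\times T$ matrices $g^{i}_{st}:=\bra{a^{i}_{s}}\rho\ket{a^{i}_{t}}$, $f^{i}_{st}:=\bra{a^{i}_{s}}\tau\ket{a^{i}_{t}}$, which are Gram matrices of $\{\rho^{1/2}\ket{a^{i}_{t}}\}_{t}$ and $\{\tau^{1/2}\ket{a^{i}_{t}}\}_{t}$ and hence positive semidefinite, \eqref{eqn:sketch_closed_form} with $M=L_t$ gives $\sum_{t=1}^{T}\trbig{\rho\,\mathcal{Q}_{L_t}[\tau]}=\sum_{i=1}^{r}\sum_{t=1}^{T}g^{i}_{tt}f^{i}_{tt}$.

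Next I would feed the random weight $\hat{L}=\sum_{t}g_t L_t$ into \eqref{eqn:sketch_closed_form}. Since $\hat{L}\ket{q_i}=\sum_{t}g_t\ket{a^{i}_{t}}$ and the $g_t$ are real, one has $\norm{\rho^{1/2}\hat{L}\ket{q_i}}^{2}=\sum_{s,t}g_s g_t\,g^{i}_{st}$ and $\norm{\tau^{1/2}\hat{L}\ket{q_i}}^{2}=\sum_{u,v}g_u g_v\,f^{i}_{uv}$; taking expectations and using the Gaussian fourth-moment (Isserlis) identity $\expectation{g_s g_t g_u g_v}=\delta_{st}\delta_{uv}+\delta_{su}\delta_{tv}+\delta_{sv}\delta_{tu}$, together with the Hermitian symmetry $f^{i}_{st}=\overline{f^{i}_{ts}}$ to rewrite the mixed pairing, produces
\begin{equation}
\expectationBig{\trbig{\rho\,\mathcal{Q}_{\hat{L}}[\tau]}}=\sum_{i=1}^{r}\Bigl(\tr{g^{i}}\,\tr{f^{i}}+\tr{g^{i}(f^{i})^{\mathrm{T}}}+\tr{g^{i}f^{i}}\Bigr).
\label{eqn:sketch_expansion}
\end{equation}
All three summands are nonnegative: $\tr{g^{i}}\tr{f^{i}}\ge 0$ as a product of traces of positive semidefinite matrices, $\tr{g^{i}f^{i}}\ge 0$ as a trace of a product of two such matrices, and $\tr{g^{i}(f^{i})^{\mathrm{T}}}\ge 0$ since the transpose of a positive semidefinite matrix is positive semidefinite. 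The last step is to discard the final two terms and bound the first from below by its diagonal, $\tr{g^{i}}\tr{f^{i}}=\sum_{s,t}g^{i}_{ss}f^{i}_{tt}\ge\sum_{t}g^{i}_{tt}f^{i}_{tt}$ (the dropped cross terms are products of nonnegative diagonal entries of positive semidefinite matrices); summing over $i$ and comparing with the closed form for the left-hand side gives $\sum_{t}\trbig{\rho\,\mathcal{Q}_{L_t}[\tau]}\le\expectationBig{\trbig{\rho\,\mathcal{Q}_{\hat{L}}[\tau]}}$, hence the lemma.

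I do not expect a genuine obstacle; the one place that requires care is the fourth-moment bookkeeping in \eqref{eqn:sketch_expansion}, namely tracking which Wick pairing of the four Gaussian factors produces the transposed trace $\tr{g^{i}(f^{i})^{\mathrm{T}}}$ rather than $\tr{g^{i}f^{i}}$ (this is exactly where Hermiticity of $f^{i}$ enters), and observing that a single pairing already dominates the diagonal contribution. Note that the argument uses neither the partition-of-unity identity \eqref{eqn:reduction_parition_of_unity} nor the mutual orthogonality of the $Q_i$, only that they are rank one. If one prefers the ``operator trace inequalities'' route of step (iii), an alternative is to write $\trbig{\rho\,\mathcal{Q}_{\hat{L}}[\tau]}=\trbig{(\hat{L}^{\dagger}\rho\hat{L})\,\mathcal{Q}[\hat{L}^{\dagger}\tau\hat{L}]}$, condition on $\hat{L}$, and combine Jensen's inequality with positivity of the pinching map, but the direct second-moment computation above is shorter.
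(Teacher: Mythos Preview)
Your argument is correct and in fact yields the sharper inequality $\sum_{t}\trbig{\rho\,\mathcal{Q}_{L_t}[\tau]}\le\expectationbig{\trbig{\rho\,\mathcal{Q}_{\hat{L}}[\tau]}}$, so the stated bound with the factor $3$ follows a fortiori. The Isserlis expansion, the identification of the three Wick pairings with $\tr{g^{i}}\tr{f^{i}}$, $\tr{g^{i}(f^{i})^{\mathrm{T}}}$, $\tr{g^{i}f^{i}}$, and the positivity of each are all handled cleanly; the only delicate step is indeed the bookkeeping that produces the transposed trace, and you treat it correctly.

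Your route is genuinely different from the paper's. The paper introduces an independent copy $\hat{B}$ of $\hat{L}$ and a \emph{biweighted} quantity $\trbig{\mathcal{Q}_{\hat{L},\hat{B}}[\rho,\tau]}=\sum_i\trbig{\rho\hat{L}Q_i\hat{L}^{\dagger}}\trbig{\tau\hat{B}Q_i\hat{B}^{\dagger}}$, shows $\sum_t\trbig{\rho\,\mathcal{Q}_{L_t}[\tau]}\le\expectationbig{\trbig{\mathcal{Q}_{\hat{L},\hat{B}}[\rho,\tau]}}$ by keeping only diagonal terms, and then spends a separate lemma (trace convexity from Lieb plus a symmetrization identity and the relation $\hat{L}+\hat{B}\sim\sqrt{2}\hat{L}$) to bound the biweighted expectation by $3\,\expectationbig{\trbig{\rho\,\mathcal{Q}_{\hat{L}}[\tau]}}$. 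In your language, the biweighted expectation is exactly your first Wick pairing $\sum_i\tr{g^{i}}\tr{f^{i}}$; your computation therefore shows directly that this term is already dominated by $\expectationbig{\trbig{\rho\,\mathcal{Q}_{\hat{L}}[\tau]}}$ (the remaining two pairings being nonnegative), which makes the paper's symmetrization lemma superfluous for this step and explains why you recover constant $1$ rather than $3$. The paper's approach is more ``operator-theoretic'' and avoids explicit moment formulas, but your direct computation is shorter, uses only the rank-one property of the $Q_i$ (as you note), and improves the constant.
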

Before showing Lemma~\ref{lem:upper_bound_expectation} at the end of this section, we require some intermediate results first.
First, we find a proxy that in expectation bounds the sum of the weighted \gls{PM} on the left side of \eqref{eqn:lower_bound_expectation}.
Let $\hat{B}$ be an independent copy of $\hat{L}$, and $\mathcal{Q} \in P(\mathcal{S})$.
We denote by $\mathcal{Q}_{\hat{L}, \hat{B}}$ the random biweighted measurement that satisfies
\begin{align}
\trbig{\mathcal{Q}_{\hat{L}, \hat{B}}[\rho, \tau]}  &= \sum_{i=1}^{r} \trBig{\rho \hat{L} Q_i \hat{L}^{\dagger}}\trBig{\tau \hat{B} Q_i \hat{B}^{\dagger}} \nonumber \\
&= \sum_{i=1}^{r} \trBig{\rho \hat{L} Q_i \hat{L}^{\dagger} \otimes \tau \hat{B} Q_i \hat{B}^{\dagger}} \nonumber \\
&= \sum_{i=1}^{r} \trBig{(\rho \otimes \tau) (\hat{L} \otimes \hat{B})  (Q_i \otimes Q_i) (\hat{L}^{\dagger} \otimes \hat{B}^{\dagger})}.
\label{eqn:candidate_variable}
\end{align}
Note that we also have $\trbig{\mathcal{Q}_{\hat{L}, \hat{B}}[\rho, \tau]}  \geq 0$. 
\begin{lemma}
Let $\rho, \tau \in D(\mathcal{S})$, $\mathcal{Q} \in P(\mathcal{S})$, and $\{L_t \}_{t \in [T]}$ be weight matrices.
Let $\hat{L}$, $\hat{B}$ be independent copies of \eqref{eqn:definition_L_hat}. Then,
\begin{equation}
\sum_{t=1}^{T} \trbig{\rho \mathcal{Q}_{L_t}[\tau]} \leq \expectationBig{ \trbig{\mathcal{Q}_{\hat{L}, \hat{B}}[\rho, \tau]} }
\end{equation}
\label{lem:lower_bound_expectation_by_objective_function}
\end{lemma}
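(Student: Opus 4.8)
The plan is to write each side as an explicit sum of nonnegative scalars and then observe that the right-hand side contains the left-hand side as its ``diagonal'' part.

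First I would scalarize both sides using the rank-one structure of $\mathcal{Q}$. Since every $Q_i$ in $\mathcal{Q} \in P(\mathcal{S})$ is a rank-one projector, write $Q_i = \ket{q_i}\bra{q_i}$; then for any $L \in B(\mathcal{S})$ the operator $L Q_i L^{\dagger} = \ket{L q_i}\bra{L q_i}$ is rank one and positive semidefinite, so for density matrices $\rho,\tau$ one has
\[
\trbig{\rho L Q_i L^{\dagger} \tau L Q_i L^{\dagger}} = \trbig{\rho L Q_i L^{\dagger}}\,\trbig{\tau L Q_i L^{\dagger}},
\]
with both factors nonnegative (the trace of a product of positive semidefinite operators is nonnegative). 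Applying this with $L = L_t$ and summing over $t$ gives
\[
\sum_{t=1}^{T}\trbig{\rho \mathcal{Q}_{L_t}[\tau]} = \sum_{t=1}^{T}\sum_{i=1}^{r} \trbig{\rho L_t Q_i L_t^{\dagger}}\,\trbig{\tau L_t Q_i L_t^{\dagger}}.
\]

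Next I would compute the expectation on the right-hand side of the claimed inequality. Because $\hat L = \sum_{t} g_t L_t$ with $g_t$ i.i.d.\ $\mathrm{N}(0,1)$, linearity of expectation together with $\expectationbig{g_s g_t} = \delta_{st}$ yields $\expectationbig{\hat L Q_i \hat L^{\dagger}} = \sum_{t=1}^{T} L_t Q_i L_t^{\dagger}$, and hence $\expectationbig{\trbig{\rho \hat L Q_i \hat L^{\dagger}}} = \sum_{t} \trbig{\rho L_t Q_i L_t^{\dagger}}$; the same identity holds with $\hat B$ and $\tau$. Using that $\hat L$ and $\hat B$ are independent, the definition \eqref{eqn:candidate_variable} then gives
\[
\expectationBig{\trbig{\mathcal{Q}_{\hat L,\hat B}[\rho,\tau]}} = \sum_{i=1}^{r}\Bigl(\sum_{t=1}^{T}\trbig{\rho L_t Q_i L_t^{\dagger}}\Bigr)\Bigl(\sum_{s=1}^{T}\trbig{\tau L_s Q_i L_s^{\dagger}}\Bigr).
\]

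Finally I would compare the two expressions term by term in $i$: fixing $i$ and setting $a_t := \trbig{\rho L_t Q_i L_t^{\dagger}} \geq 0$ and $b_t := \trbig{\tau L_t Q_i L_t^{\dagger}} \geq 0$, the statement reduces to $\sum_t a_t b_t \leq \bigl(\sum_t a_t\bigr)\bigl(\sum_s b_s\bigr)$, which holds because the double sum $\sum_{t,s} a_t b_s$ consists of nonnegative terms and already includes all the diagonal contributions $a_t b_t$; summing over $i \in [r]$ then finishes the proof. I do not expect a substantial obstacle here: the only points needing a little care are the rank-one factorization of the trace in the first step (which genuinely uses $\mathrm{rk}(Q_i) = 1$ and covers the degenerate case $L_t q_i = 0$ for free) and the bookkeeping of the Gaussian second moments in the second step. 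The one structural fact that makes the inequality go in the right direction is the nonnegativity of each scalar $\trbig{\rho L_t Q_i L_t^{\dagger}}$ and $\trbig{\tau L_t Q_i L_t^{\dagger}}$, which is exactly what lets us discard the off-diagonal cross terms without reversing the inequality.
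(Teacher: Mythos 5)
Your proof is correct and follows essentially the same route as the paper's: compute the expectation of the biweighted measurement via independence and the Gaussian second moments $\expectation{g_s g_t}=\delta_{st}$, then discard the nonnegative off-diagonal cross terms $t\neq s$, and identify the diagonal terms with the left-hand side using the rank-one structure of the $Q_i$. The only difference is cosmetic — you work with the scalarized form $\sum_i \trbig{\rho \hat L Q_i \hat L^{\dagger}}\trbig{\tau \hat B Q_i \hat B^{\dagger}}$ while the paper carries out the identical manipulation in the tensor-product notation on $\mathcal{S}\otimes\mathcal{S}$.
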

\begin{proof}
From the fact that $\hat{L}$ and $\hat{B}$ are independent, we readily obtain from \eqref{eqn:candidate_variable} that 
\begin{align}
\expectationBig{ \trbig{\mathcal{Q}_{\hat{L}, \hat{B}}[\rho, \tau]} } &=
\sum_{i=1}^{r} \expectationBig{\trBig{(\rho \otimes \tau) (\hat{L} \otimes \hat{B})  (Q_i \otimes Q_i) (\hat{L}^{\dagger} \otimes \hat{B}^{\dagger})}} \nonumber \\
&= \sum_{i=1}^{r} \sum_{t,z = 1}^{T} \trBig{(\rho \otimes \tau) (L_t \otimes L_z)  (Q_i \otimes Q_i) (L_t^{\dagger} \otimes L_z^{\dagger})} \nonumber \\
&= \sum_{t,z=1}^{T} \sum_{l=1}^{r} \trBig{(\rho \otimes \tau) (L_t \otimes L_z)  (Q_i \otimes Q_i) (L_t^{\dagger} \otimes L_z^{\dagger})} \nonumber \\
& \geq \sum_{t=1}^{T}  \sum_{l=1}^{r} \trBig{(\rho \otimes \tau) (L_t \otimes L_t)  (Q_i \otimes Q_i) (L_t^{\dagger} \otimes L_t^{\dagger})} \tag{all terms positive} \nonumber \\
& = \sum_{t=1}^{T}  \sum_{l=1}^{r} \mathrm{Tr}[\rho L_t  Q_i L_t^{\dagger} \tau L_t  Q_i L_t^{\dagger}] \tag{$\{ Q_i \}_{i=1}^{r}$ are rank one projectors} \nonumber \\
& = \sum_{t=1}^{T} \trbig{\rho \mathcal{Q}_{L_t}[\tau]}.
\end{align}
\end{proof}

The biweighted measurement $\mathcal{Q}_{\hat{L}, \hat{B}}$ in \eqref{eqn:candidate_variable} does not have the single weighted measurement structure that we need.
To overcome this problem we use the following two results based on trace convexity to upper bound $\trbig{\mathcal{Q}_{\hat{L}, \hat{B}}[\rho, \tau]} $ by several weighted measurements.
\begin{lemma}\emph{(Adapted from \cite[Corollary 1.1]{lieb1973convex})}.
Let $A, B \in \nnbsa{\mathcal{S}}$ and let $G_1, G_2 \in B(\mathcal{S})$. For any $\lambda \in [0,1]$, let 
\begin{equation}
G = \lambda G_1 + (1- \lambda) G_2.
\end{equation} 
The following inequality holds
\begin{equation}
\trbig{ A G B  G^{\dagger} } \leq \lambda \trbig{ A G_1 B G_1^{\dagger}} + (1-\lambda) \trbig{ A G_2 B G_2^{\dagger}}.
\end{equation}
\label{lemma:trace_convexity}
\end{lemma}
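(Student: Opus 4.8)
The plan is to reduce the claimed inequality to the joint convexity of the map $(A,B,G) \mapsto \tr{A G B G^{\dagger}}$ in the single variable $G$, with $A, B \in \nnbsa{\mathcal{S}}$ held fixed. Concretely, fixing $A$ and $B$ nonnegative, I would define $f(G) = \tr{A G B G^{\dagger}}$ on $B(\mathcal{S})$ and show that $f$ is a convex function of $G$; the stated inequality is then exactly the two-point convexity estimate $f(\lambda G_1 + (1-\lambda) G_2) \le \lambda f(G_1) + (1-\lambda) f(G_2)$.

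The first step is to observe that $f$ is a quadratic form in $G$: writing $G = \lambda G_1 + (1-\lambda) G_2$ and expanding, the cross term is $\lambda(1-\lambda)\bigl(\tr{A G_1 B G_2^{\dagger}} + \tr{A G_2 B G_1^{\dagger}}\bigr) = 2\lambda(1-\lambda)\,\mathrm{Re}\,\tr{A G_1 B G_2^{\dagger}}$, using that $A=A^{\dagger}$, $B=B^{\dagger}$ so the two trace terms are complex conjugates. A direct algebraic manipulation then shows
\begin{equation}
\lambda f(G_1) + (1-\lambda) f(G_2) - f(G) = \lambda(1-\lambda)\,\trbig{A (G_1 - G_2) B (G_1 - G_2)^{\dagger}},
\end{equation}
so it suffices to prove that $\trbig{A H B H^{\dagger}} \ge 0$ for $H = G_1 - G_2 \in B(\mathcal{S})$ and $A, B$ nonnegative. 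This nonnegativity is standard: write $A = \sum_j a_j \ket{\alpha_j}\bra{\alpha_j}$ and $B = \sum_k b_k \ket{\beta_k}\bra{\beta_k}$ with $a_j, b_k \ge 0$ by the spectral theorem, so that $\trbig{A H B H^{\dagger}} = \sum_{j,k} a_j b_k\, \bra{\alpha_j} H \ket{\beta_k}\bra{\beta_k} H^{\dagger}\ket{\alpha_j} = \sum_{j,k} a_j b_k\,\bigl|\bra{\alpha_j} H \ket{\beta_k}\bigr|^2 \ge 0$.

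There is essentially no hard obstacle here — the lemma is elementary once one spots the quadratic-form structure, and it does not actually need the full strength of Lieb's concavity theorem (the citation to \cite{lieb1973convex} is for the general operator-convexity context; the special case with $A,B$ nonnegative and a linear perturbation of $G$ is the simple computation above). The only point requiring minor care is the handling of the cross term: one must use the self-adjointness of $A$ and $B$ together with cyclicity of the trace to see that $\tr{A G_1 B G_2^{\dagger}}$ and $\tr{A G_2 B G_1^{\dagger}}$ are conjugates, so their sum is real and the difference-of-squares identity goes through over $\C$. I would present the computation in the two lines above and conclude.
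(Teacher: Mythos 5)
Your proof is correct. The algebra checks out: the cross term $\lambda(1-\lambda)\bigl(\tr{A G_1 B G_2^{\dagger}} + \tr{A G_2 B G_1^{\dagger}}\bigr)$ is indeed real because the two traces are complex conjugates (take the adjoint inside the trace and use $A=A^{\dagger}$, $B=B^{\dagger}$), the difference-of-squares identity gives $\lambda f(G_1)+(1-\lambda)f(G_2)-f(G)=\lambda(1-\lambda)\tr{A(G_1-G_2)B(G_1-G_2)^{\dagger}}$, and the spectral decomposition shows $\tr{AHBH^{\dagger}}=\sum_{j,k}a_j b_k|\bra{\alpha_j}H\ket{\beta_k}|^2\geq 0$. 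The paper itself gives no proof of this lemma --- it simply cites Lieb's convexity paper --- so your contribution is to observe that this particular statement does not need any of that machinery: with $A,B$ fixed and nonnegative, $G\mapsto\tr{AGBG^{\dagger}}$ is a positive-semidefinite quadratic form on $B(\mathcal{S})$, and convexity of such a form is immediate. This buys a short, self-contained argument; what the citation buys the paper is brevity and a pointer to the general operator-convexity context (Lieb's results concern the much harder joint concavity/convexity in the operators $A,B$ themselves, which is not what is used here). Your remark that the full strength of Lieb's theorem is not needed is accurate.
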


For the following result we require some additional notation.
For any $J \in B(\mathcal{S})$, and $M \in \bsa{\mathcal{S} \otimes \mathcal{S}}$, we let
\begin{equation}
\mathsf{Ad}_{J}[M] = (J \otimes J) \ M\ (J \otimes J)^{\dagger}.
\label{eqn:def_adjoint}
\end{equation}
We define the set of separable operators of $\nnbsa{\mathcal{S} \otimes \mathcal{S}}$ by 
\begin{equation}
\mathrm{Sep}\bigl( \nnbsa{\mathcal{S} \otimes \mathcal{S}} \bigr)  = \Bigl\{ \sum_{i=1}^{a} \lambda_i A_1^{i} \otimes A_2^{i} \Big| a \in \N, A_1^{i}, A_2^{i} \in \nnbsa{\mathcal{S}}, \text{ and } \lambda_i \geq 0 \text{ for all } i\in [a] \Bigr\}.
\label{eqn:def_separable_operators}
\end{equation}

\begin{lemma}
Let $C, D \in \mathrm{Sep}(\nnbsa{\mathcal{S} \otimes \mathcal{S}})$ and $X, Y \in B(\mathcal{S})$, then in the notation of \eqref{eqn:def_adjoint}, 
\begin{align}
\tr{C &(X \otimes Y)  D (X \otimes Y)^{\dagger}} + \tr{C (Y \otimes X) D (Y \otimes X)^{\dagger}}  \leq  \nonumber \\
\frac{1}{2}\Bigl( &\trbig{C  \mathsf{Ad}_{X + Y}[D]} + \trbig{C \mathsf{Ad}_{X - Y}[D]}\Bigr) + \trbig{C  \mathsf{Ad}_{X}[D]} + \trbig{C  \mathsf{Ad}_{Y}[D]} .
\label{eqn:trace_quadrature}
\end{align}
\label{lem:trace_symmetrization_quadrature}
\end{lemma}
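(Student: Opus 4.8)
The plan is to reduce \eqref{eqn:trace_quadrature} first to the case of product operators, and then to an elementary scalar inequality that follows from Cauchy--Schwarz. Both sides of \eqref{eqn:trace_quadrature} are linear in $C$ and linear in $D$, and in the definition \eqref{eqn:def_separable_operators} the weights $\lambda_i$ are nonnegative; hence it suffices to establish \eqref{eqn:trace_quadrature} when $C = A_1 \otimes A_2$ and $D = B_1 \otimes B_2$ with $A_1, A_2, B_1, B_2 \in \nnbsa{\mathcal{S}}$, the general separable case following by taking nonnegative combinations.

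For such product operators, using $(M \otimes N)(M' \otimes N') = MM' \otimes NN'$ and $\tr{M \otimes N} = \tr{M}\tr{N}$, every trace in \eqref{eqn:trace_quadrature} factorizes. Setting $p(G) = \tr{A_1 G B_1 G^{\dagger}}$ and $q(G) = \tr{A_2 G B_2 G^{\dagger}}$ for $G \in B(\mathcal{S})$, one gets $\trbig{C\,\mathsf{Ad}_{J}[D]} = p(J)\,q(J)$ for every $J \in B(\mathcal{S})$, while $\tr{C(X\otimes Y)D(X\otimes Y)^{\dagger}} = p(X)q(Y)$ and $\tr{C(Y\otimes X)D(Y\otimes X)^{\dagger}} = p(Y)q(X)$. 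Moreover $p(G) = \fnormbig{A_1^{1/2} G B_1^{1/2}}^{2} \geq 0$, and $P(G, H) := \tr{A_1 G B_1 H^{\dagger}}$ defines a Hermitian, positive-semidefinite sesquilinear form with $P(G,G) = p(G)$ (Hermiticity $P(H,G) = \overline{P(G,H)}$ uses self-adjointness of $A_1, B_1$ and cyclicity of the trace); the same holds for $Q(G,H) := \tr{A_2 G B_2 H^{\dagger}}$.

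Next I would expand both sides. Write $a = p(X)$, $b = p(Y)$, $a' = q(X)$, $b' = q(Y)$, and use the polarization identities $p(X \pm Y) = a + b \pm 2\,\mathrm{Re}\,P(X,Y)$ and the analogue for $q$. A direct computation then gives
\begin{equation*}
\tfrac{1}{2}\bigl(p(X+Y)q(X+Y) + p(X-Y)q(X-Y)\bigr) = (a+b)(a'+b') + 4\,\mathrm{Re}\,P(X,Y)\,\mathrm{Re}\,Q(X,Y),
\end{equation*}
so that the right-hand side of \eqref{eqn:trace_quadrature} equals $2aa' + 2bb' + ab' + a'b + 4\,\mathrm{Re}\,P(X,Y)\,\mathrm{Re}\,Q(X,Y)$, while the left-hand side equals $ab' + a'b$. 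Cancelling $ab' + a'b$, the lemma reduces to
\begin{equation*}
0 \leq 2aa' + 2bb' + 4\,\mathrm{Re}\,P(X,Y)\,\mathrm{Re}\,Q(X,Y).
\end{equation*}

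Finally I would close the argument using Cauchy--Schwarz for the positive-semidefinite forms $P$ and $Q$: $|\mathrm{Re}\,P(X,Y)| \leq |P(X,Y)| \leq \sqrt{ab}$ and $|\mathrm{Re}\,Q(X,Y)| \leq \sqrt{a'b'}$, hence $4\,|\mathrm{Re}\,P(X,Y)|\,|\mathrm{Re}\,Q(X,Y)| \leq 4\sqrt{aa'\,bb'} \leq 2(aa' + bb')$ by AM--GM. This yields $2aa' + 2bb' + 4\,\mathrm{Re}\,P(X,Y)\,\mathrm{Re}\,Q(X,Y) \geq 0$, completing the proof. I expect no serious obstacle here: the only point demanding care is confirming that $P$ and $Q$ are genuinely Hermitian and positive semidefinite so that Cauchy--Schwarz applies, and the rest is bookkeeping of the expansion together with the reduction to product operators.
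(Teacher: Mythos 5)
Your proof is correct. The reduction to elementary tensors is legitimate (both sides of \eqref{eqn:trace_quadrature} are bilinear in $(C,D)$ and the weights in \eqref{eqn:def_separable_operators} are nonnegative), the forms $P$ and $Q$ are indeed Hermitian and positive semidefinite so that Cauchy--Schwarz applies, and the bookkeeping checks out: the inequality collapses to $0 \leq 2aa' + 2bb' + 4\,\mathrm{Re}\,P(X,Y)\,\mathrm{Re}\,Q(X,Y)$, which your Cauchy--Schwarz plus AM--GM step settles.

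Your route is genuinely different in one respect. The paper works at the operator level: it sets $\mathcal{M}_+ = X\otimes Y + Y\otimes X$, bounds $\trbig{C\mathcal{M}_+ D\mathcal{M}_+^{\dagger}}$ by the first two right-hand terms using Lieb's trace convexity (Lemma~\ref{lemma:trace_convexity}) applied to the quadrature identity $\mathcal{M}_+ = \tfrac12\bigl((X+Y)^{\otimes 2} - (X-Y)^{\otimes 2}\bigr)$, and only descends to elementary tensors to control the two cross terms via Cauchy--Schwarz. You scalarize everything from the outset, so the convexity step becomes an exact polarization identity rather than an inequality, and Lieb's theorem is never needed. What you lose is that the paper's $\mathcal{M}_+$ bound holds for arbitrary (not necessarily separable) $C, D$, whereas your argument requires separability throughout; since the lemma is only stated for separable operators, this costs nothing here. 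What you gain is a fully elementary, self-contained proof, and as a by-product the observation that for product $C, D$ the first two right-hand terms equal $(a+b)(a'+b') + 4\,\mathrm{Re}\,P(X,Y)\,\mathrm{Re}\,Q(X,Y)$ exactly, which makes transparent where the slack in \eqref{eqn:trace_quadrature} sits.
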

\begin{proof}
We will first change variables. 
Define
\begin{align}
\mathcal{M}_+ &= X \otimes Y +  Y \otimes X.
\end{align}
We can check that expanding the terms of $\mathcal{M}_+$ gives the following identity
\begin{align}
\trbig{C &(X \otimes Y)  D (X \otimes Y)^{\dagger}} + \trbig{C (Y \otimes X) D (Y \otimes X)^{\dagger}} = \trbig{C \mathcal{M}_+ D \mathcal{M}_+^{\dagger}} \nonumber \\
 & \phantom{aaaaaaaaaaaaaaaaaaaaaaa}  -\trbig{C(X \otimes Y) D (Y \otimes X)^{\dagger}} -  \trbig{C(Y \otimes X) D (X \otimes Y)^{\dagger}}\nonumber\\
& \leq \trbig{C \mathcal{M}_+ D \mathcal{M}_+^{\dagger}} + \big|\trbig{C(X \otimes Y) D (Y \otimes X)^{\dagger}}\big| + \big|\trbig{C(Y \otimes X) D (X \otimes Y)^{\dagger}}\big|.
\label{eqn:antisymm1}
\end{align}
We examine the term involving $\mathcal{M}_+$ in \eqref{eqn:antisymm1} closely.
Note the identity
\begin{align}
\mathcal{M}_+ &= \frac{1}{2} \Bigl( (X + Y) \otimes  (X + Y) -  (X - Y) \otimes  (X - Y) \Bigr) \nonumber  \\
&= \frac{1}{2} \Bigl( (X + Y) \otimes  (X + Y) +  \ci (X - Y) \otimes  \ci (X - Y)],
\label{eqn:antisymm_+}
\end{align}
We can use the trace convexity of Lemma~\ref{lemma:trace_convexity} with the indentity \eqref{eqn:antisymm_+} of $\mathcal{M}_+$. 
Trace convexity yields the following inequality
\begin{equation}
\trbig{C \mathcal{M}_+ D \mathcal{M}_+^{\dagger}} \leq \frac{1}{2} \Bigl( \trbig{C  \mathsf{Ad}_{X + Y}[D]} + \trbig{C  \mathsf{Ad}_{X - Y}[D]} \Bigr).
\label{eqn:antisymm2}
\end{equation}

We bound the remaining terms of \eqref{eqn:antisymm1}.
We assume for the time being that $C$, and $D$ are elementary tensors, that is, $C=C_1 \otimes C_2$, and $D = D_1 \otimes D_2$, where $C_1, C_2, D_1, D_2 \in \nnbsa{\mathcal{S}}$.
We will use (i) Cauchy-Schwartz inequality in $\tr{C_1 X D_1 Y^{\dagger}} = \tr{(C_1^{1/2} X D_1^{1/2})(C_1^{1/2} Y D_1^{1/2})^{\dagger}}$, and similarly with $\tr{C_2 Y D_2 X^{\dagger}}$.
The inequality reads
\begin{align}
\big| \trbig{C(X \otimes Y) D & (Y \otimes X)^{\dagger}} \big|  = \Big| \trbig{C_1 X D_1 Y^{\dagger}} \trbig{C_2 Y D_2 X^{\dagger}} \Big| \nonumber \\
& \eqcom{i} \leq \trbig{C_1 X D_1 X^{\dagger}}^{\frac{1}{2}} \trbig{C_1 Y D_1 Y^{\dagger}}^{\frac{1}{2}} \trbig{C_2 X D_2 X^{\dagger}}^{\frac{1}{2}} \trbig{C_2 Y D_2 Y^{\dagger}}^{\frac{1}{2}} \nonumber \\
& = \trbig{C(X \otimes X) D (X \otimes X)^{\dagger}}^{\frac{1}{2}} \trbig{C(Y \otimes Y) D (Y \otimes Y)^{\dagger}}^{\frac{1}{2}} \nonumber\\
& = \trbig{C  \mathsf{Ad}_{X}[D]}^{\frac{1}{2}} \trbig{C  \mathsf{Ad}_{Y}[D]}^{\frac{1}{2}}.
\label{eqn:antisymm3}
\end{align}
We similarly obtain \eqref{eqn:antisymm3} for the remaining term in \eqref{eqn:antisymm1}.
In particular, we have
\begin{align}
\abs{\trbig{C(X \otimes Y) D (Y \otimes X)^{\dagger}}} + \big| \trbig{C(Y \otimes X) & D (X \otimes Y)^{\dagger}} \big|  \leq 2 \trbig{C  \mathsf{Ad}_{X}[D]}^{\frac{1}{2}} \trbig{C  \mathsf{Ad}_{Y}[D]}^{\frac{1}{2}} \nonumber \\
& \leq \trbig{C \mathsf{Ad}_{X}[D]} + \trbig{C  \mathsf{Ad}_{Y}[D]}.
\label{eqn:antisymm4}
\end{align}
Finally, we can substitute the bounds of \eqref{eqn:antisymm2} and \eqref{eqn:antisymm4} into \eqref{eqn:antisymm1} to show the result in the case that $C$, and $D$ are elementary tensors.
To extend the result to all $C, D \in \mathrm{Sep}(\nnbsa{\mathcal{S} \otimes \mathcal{S}})$, note that if $C$ is separable, we can decompose $C = \sum_{i=1}^{a} \lambda_i C_1^{i} \otimes C_2^{i}$ with $C_j^{i} \in \nnbsa{\mathcal{S}}$ for $j \in [2]$, and $\lambda_i \geq 0$ for $i \in [a]$.
We can do a similar decomposition with $D$.
Now, from the bilinearity of both sides of the inequality \eqref{eqn:trace_quadrature} with respect to $C$, and $D$, we can extend \eqref{eqn:trace_quadrature} to all positive linear combinations of tensors in $\nnbsa{\mathcal{S}} \otimes \nnbsa{\mathcal{S}}$.
\end{proof}

We are now in position to show Lemma~\ref{lem:upper_bound_expectation}.\\
\noindent
\emph{Proof (of Lemma~\ref{lem:upper_bound_expectation}).}
From Lemma~\ref{lem:lower_bound_expectation_by_objective_function}, we have the inequality
\begin{equation}
\sum_{t=1}^{T} \trbig{\rho \mathcal{Q}_{L_t}[\tau]} \leq \expectationBig{ \trbig{\mathcal{Q}_{\hat{L}, \hat{B}}[\rho, \tau]} }.
\label{eqn:lower_bound_1}
\end{equation}
We symmetrize the expectation in \eqref{eqn:lower_bound_1} noting that since $\hat{L}$ and $\hat{B}$ are identically and independently distributed we have
\begin{align}
\expectationBig{ \trbig{\mathcal{Q}_{\hat{L}, \hat{B}}[\rho, \tau]} } &= \frac{1}{2} \expectationBig{ \trbig{\mathcal{Q}_{\hat{L}, \hat{B}}[\rho, \tau]} } + \frac{1}{2} \expectationBig{ \trbig{\mathcal{Q}_{\hat{B},\hat{L}}[\rho, \tau]} } \nonumber \\
&= \frac{1}{2} \sum_{i=1}^{r} \expectationBig{\trBig{(\rho \otimes \tau) (\hat{L} \otimes \hat{B})  (Q_i \otimes Q_i) (\hat{L}^{\dagger} \otimes \hat{B}^{\dagger})} \nonumber \\
&\phantom{aaaaaaaaaaaaaaaaaaaa}+ \trBig{(\rho \otimes \tau) (\hat{B} \otimes \hat{L})  (Q_i \otimes Q_i) (\hat{B}^{\dagger} \otimes \hat{L}^{\dagger})}}
\label{eqn:lower_bound_2}
\end{align}
We use now Lemma~\ref{lem:trace_symmetrization_quadrature} for each pair of summands in the expectation of \eqref{eqn:lower_bound_2}.
We obtain
\begin{align}
\eqref{eqn:lower_bound_2} &\leq  \sum_{i=1}^{r} \frac{1}{4} \mathbb{E}\Bigl( \trbig{(\rho \otimes \tau)  \mathsf{Ad}_{\hat{L} + \hat{B}}[Q_i \otimes Q_i]} + \trbig{(\rho \otimes \tau)  \mathsf{Ad}_{\hat{L} - \hat{B}}[Q_i \otimes Q_i]}\Bigr) \nonumber \\
&\phantom{aaaaaaaaaaaaaaaaa} + \frac{1}{2} \mathbb{E}\Bigl( \trbig{(\rho \otimes \tau) \mathsf{Ad}_{\hat{L}}[Q_i \otimes Q_i]}\Bigr) + \frac{1}{2}\mathbb{E}\Bigl( \trbig{(\rho \otimes \tau)  \mathsf{Ad}_{\hat{B}}[Q_i \otimes Q_i]} \Bigr) \nonumber \\
&=  \frac{1}{4} \mathbb{E}\Bigl( \trbig{\rho \mathcal{Q}_{\hat{L} + \hat{B}}[\tau]} + \trbig{\rho \mathcal{Q}_{\hat{L} - \hat{B}}[\tau]} \Bigr) + \mathbb{E}\Bigl(\trbig{\rho \mathcal{Q}_{\hat{L}}[\tau]}\Bigr). \tag{$\hat{B}$ distributed like $\hat{L}$}
\end{align}
From independence and since $\hat{L}$, and $\hat{B}$ have identical symmetric Gaussian distributions, $\hat{L} + \hat{B}$ possesses the same distribution as $\hat{L} - \hat{B}$.
Moreover, from its definition in \eqref{eqn:definition_L_hat}, $\hat{L} + \hat{B}$ is a sum of i.i.d.\ centered Gaussian random matrices with double the variance than $\hat{L}$, that is, $\hat{L} + \hat{B}$ will have the same distribution as $\sqrt{2} \hat{L}$.
Using these identities we can write
\begin{align}
\frac{1}{4} \mathbb{E}\Bigl( \trbig{\rho \mathcal{Q}_{\hat{L} + \hat{B}}[\tau]} + \trbig{\rho \mathcal{Q}_{\hat{L} - \hat{B}}[\tau]} \Bigr) &= \frac{1}{2} \mathbb{E}\Bigl(\trbig{\rho \mathcal{Q}_{\sqrt{2} \hat{L}}[\tau]}\Bigr) \nonumber \\
&= 2 \mathbb{E}\Bigl(\trbig{\rho \mathcal{Q}_{\hat{L}}[\tau]}\Bigr)
\end{align}
Together with the previous inequality, this last bound completes the proof.
\qed\\

\par
In \eqref{eqn:lower_bound_expectation}, we have randomized the weighted measurements with $\hat{L}$, and $\hat{B}$, but we are still left with weighted measurements that cannot be directly compared with \glspl{PM}.
Later in Section~\ref{sec:interpolation_bound}, we show that traces of weighted measurements are upper bounded by traces of \gls{PM} up to a factor depending on the weights.

\subsection{Concentration}
\label{sec:concentration}

In the previous section, we have decoupled the dimension $n$ in the number of summands in the decomposition by weighted \gls{PM} in \eqref{eqn:decomposition_weighted_PM}.
However, the dependence may still be present in the form of large variance of $\hat{L}$.
A key value that characterizes how large $\hat{L}$ can be is its operator norm $\opnorm{\hat{L}}$.
The order of an operator norm of a random matrix has been thoroughly studied for independent series of Gaussian random matrices \cite{tropp2015introduction, tao2012topics}, which is our case.

We will use the following result that relates the magnitude of the operator norm to that of its expected variance.
\begin{lemma}\emph{(Adapted from \cite[Theorem 4.1.1]{tropp2015introduction})}\\
Let $B = \{B_1, \ldots, B_T \}$ be complex matrices of dimension $r$ and define the variance statistic by 
\begin{equation}
\nu(B) = \max \left( \pnormBig{ \sum_{t=1}^{T} B_t B_t^{\dagger}}{\mathrm{op}}, \pnormBig{ \sum_{t=1}^{T} B_t^{\dagger} B_t}{\mathrm{op}} \right).
\end{equation}
Let $g_{1}, \ldots, g_{T}$ be i.i.d.\ real $\mathrm{N}(0,1)$ random variables. 
Then for any $l > 0$, we have
\begin{equation}
\probabilityBig{ \pnormBig{\sum_{t=1}^{T} g_t B_t}{\mathrm{op}} > l } \leq 2r \exp \Bigl( -\frac{l^2}{\nu(B)} \Bigr).
\end{equation}
\label{lem:concentration_matrix_gaussian}
\end{lemma}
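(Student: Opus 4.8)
The plan is to reduce the estimate to a one-sided tail bound for the largest eigenvalue of a self-adjoint Gaussian series and then run the matrix Laplace-transform method. Write $Y = \sum_{t=1}^{T} g_t B_t$. Since the $B_t$ need not be Hermitian, I would first pass to the Hermitian dilation
\[
\mathcal{D}(A) = \begin{pmatrix} 0 & A \\ A^{\dagger} & 0 \end{pmatrix},
\]
which is real-linear in $A$, satisfies $\opnorm{\mathcal{D}(A)} = \opnorm{A}$, and has spectrum symmetric about the origin, so that $\opnorm{\mathcal{D}(A)} = \lambda_{\max}(\mathcal{D}(A))$ holds for every fixed $A$. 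Applying this to $Y$ gives $\opnorm{Y} = \lambda_{\max}(\mathcal{D}(Y))$ with $\mathcal{D}(Y) = \sum_{t=1}^{T} g_t \mathcal{D}(B_t)$ a self-adjoint Gaussian series on a space of dimension $2r$. A short computation gives $\mathcal{D}(B_t)^{2} = (B_t B_t^{\dagger}) \oplus (B_t^{\dagger} B_t)$, hence $\sum_{t=1}^{T}\mathcal{D}(B_t)^{2}$ is positive semidefinite with $\opnormbig{\sum_{t=1}^{T}\mathcal{D}(B_t)^{2}} = \max\bigl(\opnormbig{\sum_{t}B_tB_t^{\dagger}}, \opnormbig{\sum_{t}B_t^{\dagger}B_t}\bigr) = \nu(B)$.

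Next I would apply the matrix Laplace-transform (Chernoff) bound: for every $\theta>0$,
\[
\probabilityBig{\lambda_{\max}(\mathcal{D}(Y)) \geq l} \leq \e{-\theta l}\,\expectationBig{\trbig{\expbig{\theta\,\mathcal{D}(Y)}}}.
\]
To control the expected trace-exponential I would use subadditivity of matrix cumulant generating functions — the standard consequence of Lieb's concavity theorem \cite{lieb1973convex} — which, since the $g_t$ are independent, yields
\[
\expectationBig{\trbig{\expbig{\theta\,\mathcal{D}(Y)}}} \leq \trBig{\expBig{\sum_{t=1}^{T}\log\expectationbig{\expbig{\theta g_t\,\mathcal{D}(B_t)}}}}.
\]
For a fixed self-adjoint $A$ and $g\sim\mathrm{N}(0,1)$ one has the exact identity $\expectationbig{\expbig{\theta g A}} = \expbig{\theta^{2}A^{2}/2}$, obtained by expanding the exponential and using $\expectation{g^{2j}} = (2j-1)!!$; thus the exponent above collapses to a nonnegative quadratic form in $\theta$ built from $\sum_{t}\mathcal{D}(B_t)^{2}$. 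Combining this with the trace-exponential bound $\trbig{\expbig{M}} \leq 2r\,\e{\opnorm{M}}$, valid for self-adjoint $M$ on the $2r$-dimensional dilation space, and with $\opnormbig{\sum_{t}\mathcal{D}(B_t)^{2}} = \nu(B)$, bounds $\expectationBig{\trbig{\expbig{\theta\,\mathcal{D}(Y)}}}$ by $2r$ times the exponential of a quadratic in $\theta$ with leading coefficient governed by $\nu(B)$. Substituting into the Laplace bound and optimizing over $\theta>0$ then produces the tail estimate $\probabilityBig{\opnorm{Y} > l} \leq 2r\,\expbig{-l^{2}/\nu(B)}$ — the event $\{\opnorm{Y}>l\}$ being contained in $\{\opnorm{Y}\geq l\}$ — which is the bound stated in Lemma~\ref{lem:concentration_matrix_gaussian}.

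The step I expect to need the most care is keeping the dimensional prefactor equal to $2r$ rather than $4r$: a naive treatment would control $\opnorm{\mathcal{D}(Y)}$ by union-bounding the events $\{\lambda_{\max}(\mathcal{D}(Y)) \geq l\}$ and $\{-\lambda_{\min}(\mathcal{D}(Y)) \geq l\}$, each of which carries a factor $2r$, whereas the deterministic spectral symmetry $\opnorm{\mathcal{D}(Y)} = \lambda_{\max}(\mathcal{D}(Y))$ lets a single one-sided tail do the whole job. The one genuinely nontrivial ingredient is the subadditivity of the matrix cumulant generating function; this is the standard Ahlswede--Winter/Tropp machinery resting on Lieb's concavity theorem, but it has to be invoked in the complex self-adjoint setting, which is precisely why routing everything through $\mathcal{D}$ rather than through $Y Y^{\dagger}$ (which is only quadratic in the $g_t$ and not a Gaussian series) is the convenient choice. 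The remaining pieces — the exact Gaussian matrix moment generating function, the trace-exponential inequality producing the factor $2r$, and the scalar optimization over $\theta$ — are routine.
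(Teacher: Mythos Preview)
The paper does not prove this lemma; it simply quotes it from Tropp's monograph. Your route---Hermitian dilation, matrix Laplace transform, subadditivity of matrix cumulant generating functions via Lieb, exact Gaussian MGF, then optimize in $\theta$---is precisely the standard proof in that reference, so you are reproducing the cited argument rather than offering an alternative.

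One genuine slip: the final optimization does \emph{not} give the constant you claim. With the Gaussian MGF $\expectationbig{\expbig{\theta g A}} = \expbig{\theta^{2}A^{2}/2}$ you arrive at
\[
\probabilityBig{\lambda_{\max}(\mathcal{D}(Y)) \geq l} \leq 2r\,\expBig{-\theta l + \tfrac{\theta^{2}}{2}\,\nu(B)},
\]
and minimizing the exponent over $\theta>0$ (at $\theta = l/\nu(B)$) yields $-l^{2}/(2\nu(B))$, not $-l^{2}/\nu(B)$. Tropp's Theorem~4.1.1 indeed carries the factor $2$ in the denominator; the paper's restatement appears to have dropped it, and you have reproduced that slip in your last sentence. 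This is immaterial for the paper---every downstream use (Lemma~\ref{lemma:probabilistic_bound_norm_L}, Corollary~\ref{cor:moment_bound_L_hat}) absorbs the constant into an unspecified $c$---but the line ``optimizing over $\theta>0$ then produces the tail estimate $\ldots \leq 2r\,\expbig{-l^{2}/\nu(B)}$'' does not follow from the computation you set up.
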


Recall that we assume for the time being that $T$ is large enough so that $\mathcal{L} = L_1^{\dagger} L_1 + \cdots  + L_T^{\dagger} L_T$ satisfies Proposition~\ref{prop:variance_term_bounded}.
Together with \eqref{eqn:reduction_parition_of_unity}, the variance statistic satisfies for some $c>0$
\begin{align}
\nu(\{L_t\}_{t=1}^{T}) &= \max \Bigl(\opnorm{\id{\mathcal{S}}}, \opnorm{\mathcal{L}} \Bigr) \nonumber \\
&= \max \bigl(1, c \log(r) \bigr).
\label{eqn:variance_statistic}
\end{align}
Using Lemma~\ref{lem:concentration_matrix_gaussian}, we directly obtain a bound on the operator norm of $\hat{L}$.
\begin{lemma}
Suppose $r \geq 2$ and let $\hat{L}$ be defined in \eqref{eqn:definition_L_hat} with weight matrices $\{L_t\}_{t=1}^{T}$ that satisfy Proposition~\ref{prop:variance_term_bounded}. 
There exists $c>0$ such that if $l \geq 0$, with probability at least $1-2r \exp (-l^2/(c \log(r))$,
\begin{equation}
\opnorm{\hat{L}} \leq l.
\end{equation}
\label{lemma:probabilistic_bound_norm_L}
\end{lemma}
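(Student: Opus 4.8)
The plan is to combine Lemma~\ref{lem:concentration_matrix_gaussian} with the variance estimate for the family $\{L_t\}_{t=1}^T$ that has already been established. Recall that $\hat{L} = \sum_{t=1}^{T} g_t L_t$ with $g_1,\ldots,g_T$ i.i.d.\ $\mathrm{N}(0,1)$, so $\hat{L}$ is precisely a Gaussian series of the type covered by Lemma~\ref{lem:concentration_matrix_gaussian} with $B_t = L_t$. Thus the only task is to show that the variance statistic $\nu(\{L_t\}_{t=1}^T)$ is at most $c\log(r)$ for some constant $c>0$, and then to plug $l$ into the tail bound.

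First I would compute $\nu(\{L_t\}_{t=1}^T) = \max\bigl( \opnorm{\sum_{t=1}^T L_t L_t^{\dagger}}, \opnorm{\sum_{t=1}^T L_t^{\dagger} L_t} \bigr)$. For the first term, equation~\eqref{eqn:reduction_parition_of_unity} from Lemma~\ref{lem:reduction_sum_PMs} (which the chosen permuted/phased matrices $L_t = L^{\pi^\star}_t(\Theta^\star)$ still satisfy) gives $\sum_{t=1}^T L_t L_t^{\dagger} = \id{\mathcal{S}}$, hence $\opnorm{\sum_t L_t L_t^{\dagger}} = 1$. For the second term, the matrices were selected via Proposition~\ref{prop:variance_term_bounded} precisely so that $\mathcal{L} = \sum_{t=1}^T L_t^{\dagger} L_t = \mathcal{L}^{\pi^\star}(\Theta^\star)$ satisfies $\opnorm{\mathcal{L}} < c_0 \log(r)$ for a constant $c_0>0$ (valid because we are assuming $T$, i.e.\ $n$, is large enough). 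Combining, $\nu(\{L_t\}_{t=1}^T) = \max(1, c_0\log(r)) \leq c\log(r)$ for a suitable $c$ and all $r \geq 2$.

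Then I would invoke Lemma~\ref{lem:concentration_matrix_gaussian} with $B_t = L_t$ and this bound on $\nu$: for any $l > 0$,
\begin{equation}
\probabilityBig{\opnorm{\hat{L}} > l} \leq 2r \exp\Bigl( -\frac{l^2}{\nu(\{L_t\}_{t=1}^T)} \Bigr) \leq 2r \exp\Bigl( -\frac{l^2}{c\log(r)} \Bigr),
\end{equation}
where the last step uses monotonicity of $x \mapsto e^{-l^2/x}$ in $x$. Taking complements gives that $\opnorm{\hat{L}} \leq l$ with probability at least $1 - 2r\exp(-l^2/(c\log(r)))$, which is exactly the claimed statement (the case $l = 0$ being vacuous since the probability bound is then nonpositive).

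There is no real obstacle here: the lemma is an immediate corollary of the already-cited concentration inequality and the variance computation, both of which are in place. The only mild point of care is to make sure the constant $c$ in the statement absorbs the constant from Proposition~\ref{prop:variance_term_bounded} and the $\max(1,\cdot)$, and to note explicitly that equations~\eqref{eqn:reduction_parition_of_unity} and~\eqref{eqn:partition_identity_transposed} are what make $\nu$ independent of $n$ — this is the conceptual payoff of the earlier sections, even though the proof itself is short.
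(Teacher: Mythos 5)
Your proposal is correct and follows exactly the paper's route: the paper computes the variance statistic $\nu(\{L_t\}_{t=1}^{T}) = \max(1, c\log(r))$ from \eqref{eqn:reduction_parition_of_unity} and Proposition~\ref{prop:variance_term_bounded}, and then applies Lemma~\ref{lem:concentration_matrix_gaussian} directly. No differences worth noting.
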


The following moment estimate will be used later on and is key in determining $K_{n,r}$.
\begin{corollary}
In the same setting as Lemma~\ref{lemma:probabilistic_bound_norm_L}, if the weight matrices $\{L_t\}_{t=1}^{T}$ satisfy Proposition~\ref{prop:variance_term_bounded} there exists $c>0$ such that
\begin{equation}
\expectationbig{\opnorm{\hat{L}}^{4}} \leq c \log(r)^{4}.
\end{equation}
\label{cor:moment_bound_L_hat}
\end{corollary}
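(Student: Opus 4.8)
The plan is to read off the moment bound directly from the sub-Gaussian-type tail estimate of \refLemma{lemma:probabilistic_bound_norm_L} by integrating over the tail. That lemma supplies a constant $c>0$ with
\begin{equation*}
\probability{\opnorm{\hat{L}} > l} \leq 2r \exp\Bigl( -\frac{l^2}{c\log(r)} \Bigr) \qquad \text{for all } l \geq 0 .
\end{equation*}
Using the layer-cake identity $\expectationbig{\opnorm{\hat{L}}^{4}} = \int_0^{\infty} 4 l^3\, \probability{\opnorm{\hat{L}} > l}\, \d{l}$, I would split the range of integration at the threshold
\begin{equation*}
l_0 := \sqrt{c\log(r)\log(2r)} ,
\end{equation*}
chosen precisely so that $2r\exp\bigl(-l_0^2/(c\log r)\bigr) = 1$; that is, $l_0$ is the point at which the dimensional prefactor $2r$ is exactly quenched by the Gaussian decay.

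On $[0,l_0]$ I would discard the tail bound and keep only $\probability{\opnorm{\hat{L}} > l} \leq 1$, which contributes $\int_0^{l_0} 4l^3\,\d{l} = l_0^4 = c^2\log(r)^2\log(2r)^2 \leq 4c^2\log(r)^4$, using $\log(2r)\leq 2\log(r)$ for all $r\geq 2$. On $[l_0,\infty)$ I would insert the tail bound and substitute $u = l^2/(c\log(r))$; the remaining integral becomes
\begin{equation*}
4r\,(c\log(r))^2 \int_{\log(2r)}^{\infty} u\, e^{-u}\, \d{u} = 4r\,(c\log(r))^2 (\log(2r)+1)\, e^{-\log(2r)} = 2(c\log(r))^2(\log(2r)+1) ,
\end{equation*}
where the factor $e^{-\log(2r)} = 1/(2r)$ cancels the prefactor $2r$. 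This is $O(\log(r)^3)$, hence $O(\log(r)^4)$ for $r\geq 2$. Adding the two contributions and absorbing all numerical constants — and the trivial $r=2$ case, where every quantity in sight is a bounded constant — into a single $c>0$ yields $\expectationbig{\opnorm{\hat{L}}^{4}} \leq c\log(r)^4$.

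There is no genuine obstacle here: this is the routine passage from a tail bound to a moment bound, and the dominant term is the elementary integral $l_0^4$ over $[0,l_0]$. The only point deserving attention is the choice of the split $l_0$: it must grow just fast enough for the $2r$ factor in \refLemma{lemma:probabilistic_bound_norm_L} to be cancelled exactly at the crossover, which forces $l_0^2 \asymp \nu\cdot\log(2r)$ with $\nu \leq c\log(r)$ the variance statistic furnished by \refProposition{prop:variance_term_bounded}, hence $l_0 \asymp \log(r)$ and $l_0^4 \asymp \log(r)^4$ — this is the structural source of the exponent $4$, one power of $\log r$ coming from $\nu$ and one from the $2r$ union bound over the $r$ dimensions in the concentration inequality (cf.\ \refRemark{remark:dimensional_dependence}). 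One should also check that the constant $c$ from \refProposition{prop:variance_term_bounded} enters only polynomially, which it plainly does.
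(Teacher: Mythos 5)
Your proposal is correct and follows essentially the same route as the paper: the layer-cake identity, a split at the threshold $l^{\star} \asymp \sqrt{\nu \log(2r)} \asymp \log(r)$ where the tail bound first drops below $1$, the trivial bound $l_0^4 = O(\log(r)^4)$ below the threshold, and the substitution $u = l^2/(c\log r)$ above it, with the $2r$ prefactor cancelled by $e^{-\log(2r)}$. Nothing is missing.
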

\begin{proof}
By Fubini's theorem for the expectation of a positive random variable $X$, we have the identity
\begin{equation}
\expectation{X^{4}} = \int_{0}^{\infty} \probability{X > s} 4s^{3} \mathrm{d}s.
\label{eqn:integration_formula}
\end{equation}
From Lemma~\ref{lemma:probabilistic_bound_norm_L}, for $r \geq 2$ we also have the bound
\begin{equation}
\probabilitybig{ \opnorm{\hat{L}} > s } \leq \min \Bigl(1, 2r \exp \Bigl( -\frac{s^2}{c \log(r)} \Bigr)\Bigr).
\label{eqn:tail_bound_formula}
\end{equation}
The value $1$ is attained by the second factor of \eqref{eqn:tail_bound_formula} at $l^{\star}$, with $c_1 \log(r) \leq l^{\star} \leq c_2 \log(r)$ for some $c_1, c_2>0$.
Hence, using \eqref{eqn:integration_formula} with the bound \eqref{eqn:tail_bound_formula} yields
\begin{align}
\expectationbig{\opnorm{\hat{L}}^{4}} &\leq \int_{0}^{l^{\star}} 4s^3 \mathrm{d}s + \int_{l^{\star}}^{\infty} 8r \exp \Bigl( -\frac{s^2}{c \log(r)} \Bigr) s^3\ \mathrm{d} s \nonumber \\
&\leq (c_2)^{4} \log(r)^{4} + 8r \int_{c_1 \log(r)}^{\infty}  \exp \Bigl( -\frac{s^2}{c \log(r)} \Bigr) s^3\ \mathrm{d} s.
\label{eqn:tail_bound_to_expectation_1}
\end{align}
We can compute the last term in  \eqref{eqn:tail_bound_to_expectation_1} by using the change of variables $ s^2 =c \log(r) y$,
\begin{align}
\int_{c_1 \log(r)}^{\infty}  \exp \Bigl( -\frac{s^2}{c \log(r)} \Bigr) s^3 \mathrm{d} s &= c_4 \log(r)^2 \int_{c_3 \log(r)}^{\infty}  \exp (-y) y\ \mathrm{d} y \nonumber \\
&= c_4 \log(r)^2 \Bigl( -\exp(-y)y - \exp(-y)\Bigr) \Big|_{c_3 \log(r)}^{\infty} \nonumber \\
&\leq c_5 \frac{\log(r)^3}{r},
\label{eqn:tail_bound_to_expectation_2}
\end{align}
for some constants $c_3, c_4, c_5 > 0$ independent of $r \geq 2$.
Finally, substitute \eqref{eqn:tail_bound_to_expectation_2} in \eqref{eqn:tail_bound_to_expectation_1}.
\end{proof}

\begin{remark}
We note that the bound in Proposition~\ref{prop:variance_term_bounded} is sufficient to obtain a polylogarithmic bound on the expected value of $\opnorm{\hat{L}}$, which will eventually determine the order of $K_{n,r}$. 
However, if the logarithmic bound of Proposition~\ref{prop:variance_term_bounded} can be improved to $\opnorm{\hat{L}} = O(1)$, we still have the intrinsic dimensional dependence on $r$ in the the concentration inequality for random matrices; the $r$ factor in Lemma~\ref{lem:concentration_matrix_gaussian}.
The polylogarithmic terms in the final bound of $K_{n,r}$ are thus unavoidable unless there is an additional low-rank structure of the matrices $\{L_t\}_{t \in [T]}$ that can be exploited.
This fact showcases the limits of this approach, as we are not biasing the probability space towards the optimum of \eqref{eqn:maximization_restricted}, e.g., by considering only Parseval frames that have already a certain low-rank structure.
\label{remark:dimensional_dependence}
\end{remark}

\subsection{Interpolation bound}
\label{sec:interpolation_bound}

In Lemma~\ref{lem:upper_bound_expectation} of Section~\ref{sec:concentration}, we have upper bounded the sum of $T$ weighted measurements by an expectation over a single random weighted measurement $\mathcal{Q}_{\hat{L}}$ that depends on the Gaussian random matrix $\hat{L}$.
In this section, we will `extract' the weight $\hat{L}$ of the weighted measurement.
Specifically, we will show the following inequality for weighted measurements; recall Definition~\ref{def:contracted_PM}.

\begin{proposition}
Let $\rho, \tau \in D(\mathcal{S})$ be density matrices, $L \in B(\mathcal{S})$, and $\mathcal{Q} \in P(\mathcal{S})$ be a \gls{PM}. 
There exists $\mathcal{Z} \in P(\mathcal{S})$ such that
\begin{equation}
\trbig{\rho \mathcal{Q}_{L}[\tau]} \leq \opnorm{L}^4 \trbig{\rho \mathcal{Z}[\tau]}
\end{equation}
\label{lem:inequality_bound_norm_fourth}
\end{proposition}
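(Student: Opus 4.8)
\emph{Proof plan.}
By a singular value decomposition write $L=W\Sigma V^{\dagger}$ with $W,V$ unitary on $\mathcal{S}$ and $\Sigma\geq 0$. Since $W^{\dagger}\rho W,\ W^{\dagger}\tau W\in D(\mathcal{S})$, since $(V^{\dagger}Q_{i}V)_{i}$ is again a PM of $\mathcal{S}$, and since conjugating a PM of $\mathcal{S}$ by $W$ yields a PM of $\mathcal{S}$, the identity $\trbig{\rho\,\mathcal{Q}_{L}[\tau]}=\trbig{W^{\dagger}\rho W\,\mathcal{Q}'_{\Sigma}[W^{\dagger}\tau W]}$ with $\mathcal{Q}'=(V^{\dagger}Q_{i}V)_i$ reduces the statement to proving
\begin{equation}
\trbig{\rho\,\mathcal{Q}_{\Sigma}[\tau]}\ \leq\ \opnorm{\Sigma}^{4}\,M^{\star},\qquad M^{\star}:=\max_{\mathcal{Z}\in P(\mathcal{S})}\trbig{\rho\,\mathcal{Z}[\tau]},
\end{equation}
for all $\rho,\tau\in D(\mathcal{S})$, all $\mathcal{Q}\in P(\mathcal{S})$ and all $\Sigma\geq 0$ on $\mathcal{S}$. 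With $\rho,\tau$ fixed, set $N(\Sigma):=\sup_{\mathcal{Q}\in P(\mathcal{S})}\trbig{\rho\,\mathcal{Q}_{\Sigma}[\tau]}$, so $M^{\star}=N(\id{\mathcal{S}})$ and the target is $N(\Sigma)\leq\opnorm{\Sigma}^{4}M^{\star}$. I would first treat $\Sigma$ invertible; the general case follows by replacing $\Sigma$ with $\Sigma+\varepsilon\,\Pi_{\ker\Sigma}$, which leaves $\opnorm{\Sigma}$ unchanged for small $\varepsilon>0$, and letting $\varepsilon\to 0$ using that $\Sigma\mapsto N(\Sigma)$ is continuous and $P(\mathcal{S})$ is compact; $L=0$ is trivial.

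Fix $\mathcal{Q}$, write $Q_{i}=|q_{i}\rangle\langle q_{i}|$, so $\trbig{\rho\,\mathcal{Q}_{\Sigma}[\tau]}=\sum_{i}\langle q_{i}|\Sigma\rho\Sigma|q_{i}\rangle\langle q_{i}|\Sigma\tau\Sigma|q_{i}\rangle$, and introduce on the strip $S=\{z\in\C:0\leq\Re z\leq 1\}$ the function
\begin{equation}
F(z)\ :=\ \sum_{i=1}^{r}\ \langle q_{i}|\,\Sigma^{2z}\,\rho\,\Sigma^{2z}\,|q_{i}\rangle\ \langle q_{i}|\,\Sigma^{2z}\,\tau\,\Sigma^{2z}\,|q_{i}\rangle .
\end{equation}
Since $\Sigma>0$, $\Sigma^{2z}=\exp(2z\log\Sigma)$ is entire in $z$, so $F$ is entire; $\opnormbig{\Sigma^{2z}}=\opnorm{\Sigma}^{2\Re z}$ is bounded on $S$, hence so is $F$; and $F(\tfrac12)=\trbig{\rho\,\mathcal{Q}_{\Sigma}[\tau]}$. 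The crux is the boundary behaviour. On $\Re z=0$, $\Sigma^{2it}$ is unitary with $(\Sigma^{2it})^{\dagger}=\Sigma^{-2it}$, so $\langle q_{i}|\Sigma^{2it}\rho\,\Sigma^{2it}|q_{i}\rangle=\langle\Sigma^{-2it}q_{i}|\rho|\Sigma^{2it}q_{i}\rangle$; applying Cauchy--Schwarz for the positive semidefinite forms $\rho$ and $\tau$ to each summand, and then Cauchy--Schwarz over $i$, gives
\begin{equation}
|F(it)|\ \leq\ \trbig{\rho\,\mathcal{Z}^{-}_{t}[\tau]}^{1/2}\,\trbig{\rho\,\mathcal{Z}^{+}_{t}[\tau]}^{1/2},\qquad \mathcal{Z}^{\pm}_{t}:=\bigl(\Sigma^{\pm 2it}Q_{i}\Sigma^{\mp 2it}\bigr)_{i}\in P(\mathcal{S}),
\end{equation}
so $\sup_{t}|F(it)|\leq M^{\star}$. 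On $\Re z=1$, writing $\Sigma^{2+2it}=\Sigma^{2}\Sigma^{2it}$ with $\Sigma^{2}$ and $\Sigma^{2it}$ commuting, the same two applications of Cauchy--Schwarz yield $|F(1+it)|\leq\trbig{\rho\,(\mathcal{Z}^{-}_{t})_{\Sigma^{2}}[\tau]}^{1/2}\trbig{\rho\,(\mathcal{Z}^{+}_{t})_{\Sigma^{2}}[\tau]}^{1/2}$, hence $\sup_{t}|F(1+it)|\leq N(\Sigma^{2})$.

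By the Hadamard three--lines theorem applied to $F$ at $z=\tfrac12$,
\begin{equation}
\trbig{\rho\,\mathcal{Q}_{\Sigma}[\tau]}=|F(\tfrac12)|\ \leq\ \Bigl(\sup_{t}|F(it)|\Bigr)^{1/2}\Bigl(\sup_{t}|F(1+it)|\Bigr)^{1/2}\ \leq\ (M^{\star})^{1/2}\,N(\Sigma^{2})^{1/2},
\end{equation}
and taking the supremum over $\mathcal{Q}$ gives the self--improving estimate $N(\Sigma)\leq (M^{\star})^{1/2}N(\Sigma^{2})^{1/2}$. Iterating, $N(\Sigma)\leq (M^{\star})^{1-2^{-n}}N(\Sigma^{2^{n}})^{2^{-n}}$; since $\trbig{\rho\,\mathcal{Q}_{\Sigma^{2^{n}}}[\tau]}\leq\sum_{i}|\Sigma^{2^{n}}q_{i}|^{4}\leq r\,\opnorm{\Sigma}^{2^{n+2}}$, we get $N(\Sigma^{2^{n}})^{2^{-n}}\leq r^{2^{-n}}\opnorm{\Sigma}^{4}\to\opnorm{\Sigma}^{4}$, and letting $n\to\infty$ yields $N(\Sigma)\leq\opnorm{\Sigma}^{4}M^{\star}$. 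Choosing $\mathcal{Z}\in P(\mathcal{S})$ to attain $M^{\star}$ and undoing the unitary conjugations from the SVD reduction proves the proposition. The main obstacle is precisely the construction of $F$: one is forced to the \emph{holomorphic} choice $\Sigma^{2z}\rho\,\Sigma^{2z}$ rather than the ``physical'' $\Sigma^{2z}\rho\,\Sigma^{2\bar z}$, and the real work is to recognize, after a single Cauchy--Schwarz, that its boundary values are controlled by genuine PM values on the line $\Re z=0$ but only by $\Sigma^{2}$--weighted PM values on $\Re z=1$; this asymmetry is what produces the recursion and, in the limit, exactly the exponent $4$. (A minor point is the degenerate case $M^{\star}=0$, which cannot occur for density matrices on a space of dimension $r\geq 1$, since the union of the two proper subspaces on which $\rho$, respectively $\tau$, vanish cannot exhaust the sphere.)
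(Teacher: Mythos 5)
Your proof is correct, and it takes a genuinely different route from the paper's. Both arguments reduce to a positive weight (you via SVD, the paper via polar decomposition plus an approximation for the singular case) and both hinge on the Hadamard three-lines theorem, but the mechanisms diverge after that. The paper works on the strip $0\leq\mathrm{Re}(z)\leq 2$ with the target at $z=1$, normalizes $\tilde L=L/\opnorm{L}$ so that the factor $\opnorm{L}^4$ appears up front, and is then left with boundary terms of the form $\sum_i\trbig{\rho\,\tilde L^{2}\tilde Q_i\tau\tilde L^{2}\tilde Q_i}$; to turn the residual $\tilde L^{2}$-conjugation into a genuine projective measurement it needs an additional integral representation (Lemma~\ref{lemma:integral_representation_conjugation}, writing $KAK$ as a Cauchy-distributed average of unitary conjugations) followed by a Cauchy--Schwarz chain. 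You instead evaluate at the midpoint of the strip $0\leq\mathrm{Re}(z)\leq1$ and observe that one boundary line is controlled by genuine projective measurements (giving $M^{\star}$) while the other is controlled by the same functional at $\Sigma^{2}$; this yields the self-improving inequality $N(\Sigma)\leq(M^{\star})^{1/2}N(\Sigma^{2})^{1/2}$, and iterating it together with the crude bound $N(A)\leq r\opnorm{A}^{4}$ makes the dimension factor $r^{2^{-n}}$ vanish in the limit, producing exactly the exponent $4$. Your bootstrapping argument dispenses with the integral-representation lemma entirely and is self-contained modulo standard facts, at the cost of an iteration and a limiting step that the paper's single-pass argument avoids; the constants are identical. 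The small points you leave implicit (continuity in $\Sigma$ for the singular case, the use of $\opnorm{\rho},\opnorm{\tau}\leq1$ in the crude bound, and the degenerate case $M^{\star}=0$, where the iterated inequality already forces $N(\Sigma)=0$) are all routine.
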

Proposition~\ref{lem:inequality_bound_norm_fourth} shows that we can upper bound traces of weighted measurements by traces of \glspl{PM}, and at most a multiplicative factor is added depending only on the weight matrices.

To show Proposition~\ref{lem:inequality_bound_norm_fourth}, we require two intermediate results.
The first lemma shows an integral representation of the conjugation by an operator that has norm less than one. 
\begin{lemma}
Let $K \in \bsa{\mathcal{S}}$ be such that $\opnorm{K} \leq 1$ and let $A \in B(\mathcal{S})$. 
There exists a probability measure $\mu$ over $\R$ and a unitary $D \in U(\mathcal{S})$ such that if $|K| = \sqrt{K K^{\dagger}}$,
\begin{equation}
KAK = \int_{\R} |K|^{-\ci t} DAD |K|^{-\ci t} \mu(dt).
\label{eqn:integral_representation}
\end{equation}
Moreover, if $K \in \nnbsa{\mathcal{S}}$, then $D = \id{\mathcal{S}}$ .
\label{lemma:integral_representation_conjugation}
\end{lemma}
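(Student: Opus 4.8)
\emph{Proof plan.} The plan is to recognize the identity as the operator form of a single scalar fact about the Cauchy (equivalently, Poisson) kernel, and to reduce the conjugation $K(\cdot)K$ to the positive case via the polar decomposition of the self‑adjoint operator $K$.

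First I would fix coordinates. Since $K\in\bsa{\mathcal{S}}$ with $\opnorm{K}\le 1$, the spectral theorem provides an orthonormal eigenbasis $\{\ket{j}\}_{j=1}^{r}$ with $K=\sum_j\kappa_j\ket{j}\bra{j}$, $\kappa_j\in[-1,1]$, hence $|K|=\sum_j|\kappa_j|\ket{j}\bra{j}$. Let $P$ be the projection onto $\mathrm{ran}(K)=\mathrm{span}\{\ket{j}:\kappa_j\neq 0\}$; on $P\mathcal{S}$ the operator $|K|$ is positive definite, and I interpret $|K|^{-\ci t}$ as $\sum_{\kappa_j\neq 0}|\kappa_j|^{-\ci t}\ket{j}\bra{j}$ (equivalently the convention $0^{-\ci t}:=0$), so that $|K|^{-\ci t}$ kills $\ker K$, consistently with $KAK=K(PAP)K$. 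Then I would define the unitary $D=\sum_j\eta_j\ket{j}\bra{j}$ with $\eta_j=\sgn(\kappa_j)$ when $\kappa_j\neq 0$ and $\eta_j=1$ otherwise; it is a self‑adjoint unitary, simultaneously diagonalised with $K$, $|K|$ and every $|K|^{-\ci t}$, it satisfies $|K|D=D|K|=K$, and $D=\id{\mathcal{S}}$ exactly when $K\in\nnbsa{\mathcal{S}}$.

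Next I would pin down $\mu$: take $\mu(dt)=\tfrac{1}{\pi(1+t^2)}\,dt$, the standard Cauchy law, whose characteristic function is $\int_{\R}e^{\ci ut}\mu(dt)=e^{-|u|}$. Writing $u=-\log\lambda$ gives, for $\lambda>0$, $\int_{\R}\lambda^{-\ci t}\mu(dt)=e^{-|\log\lambda|}$, so that for $\lambda,\nu\in(0,1]$ (where $\log(\lambda\nu)\le 0$)
\[
\int_{\R}\lambda^{-\ci t}\nu^{-\ci t}\,\mu(dt)=\int_{\R}(\lambda\nu)^{-\ci t}\,\mu(dt)=\lambda\nu .
\]
This is the complex‑interpolation ingredient: $\tfrac{1}{\pi(1+t^2)}$ is the Poisson kernel of the right half–plane at height $1$, and the entire, $\{\Re z\ge 0\}$‑bounded map $z\mapsto|K|^{z}C|K|^{z}$ equals the Poisson integral of its restriction to the imaginary axis. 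Then, with $C:=DAD$, the integrand $|K|^{-\ci t}C|K|^{-\ci t}$ has operator norm at most $\opnorm{C}\le\opnorm{A}$, so the Bochner integral exists and may be evaluated entrywise in $\{\ket{j}\}$; using the displayed scalar identity with $\lambda=|\kappa_j|$, $\nu=|\kappa_k|$ (both sides vanishing whenever $\kappa_j$ or $\kappa_k$ is $0$),
\[
\int_{\R}|K|^{-\ci t}\,D A D\,|K|^{-\ci t}\,\mu(dt)=\sum_{j,k}|\kappa_j|\,|\kappa_k|\,(DAD)_{jk}\,\ket{j}\bra{k}=|K|\,D A D\,|K|=K A K,
\]
the last equality from $|K|D=D|K|=K$. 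Since $D=\id{\mathcal{S}}$ when $K\in\nnbsa{\mathcal{S}}$, the moreover‑claim follows.

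I do not expect a genuine obstacle. The only points requiring care are (a) making $|K|^{-\ci t}$ and the polar factor $D$ well defined when $K$ is singular, handled by restricting to the support $P\mathcal{S}$, and (b) justifying that the operator‑valued integral is legitimate and interchangeable with extracting matrix entries, which is immediate because $\mu$ is a probability measure and the integrand is uniformly bounded in operator norm.
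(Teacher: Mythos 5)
Your proposal is correct and follows essentially the same route as the paper: the Cauchy law arises as the inverse Fourier transform of $\hat{\mu}(x)=e^{-|x|}$, which reproduces $\lambda_i\lambda_j$ on the spectrum of $|K|$, and the sign unitary $D$ from the polar decomposition handles negative eigenvalues. The only difference is cosmetic: you treat singular $K$ by defining $|K|^{-\ci t}$ on the support with the convention $0^{-\ci t}=0$, whereas the paper approximates $K$ by invertible operators and invokes dominated convergence; both are fine.
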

\begin{proof}
We assume first that $K \in  \pbsa{\mathcal{S}}$ so that $K = \sum_{i=1}^r \lambda_i K_i = |K|$, where $1 \geq \lambda_i > 0$ and $K_i$ are rank-one projectors.
Ket $K^{-\ci t} = \sum_{i=1}^r \e{-\ci x_i t} K_i$, where $x_i = \log(\lambda_i) \leq 0$. 
We expand the terms $K$ and $K^{-\ci t}$ in both sides of \eqref{eqn:integral_representation} in terms of the projectors $K_i$ and examine at the term in the expansion corresponding to $K_iAK_j$. 
For \eqref{eqn:integral_representation} to hold, we must namely have the following equality for all $i,j \in [r]$
\begin{equation}
\lambda_i \lambda_j = \int_{\R} \e{-\ci (x_i + x_j) t} \mu(dt).
\label{eqn:integral_rep_eq1}
\end{equation} 
Observe that the right-hand side of \eqref{eqn:integral_rep_eq1} is the Fourier transform $\hat{\mu}$ of $\mu$ at $x_i + x_j$.
The following equality for all $i, j \in [r]$ must then hold.
\begin{equation}
\lambda_i \lambda_j = \hat{\mu}(x_j + x_i).
\label{eqn:fourier_transform_condition}
\end{equation}

Since $x_i \leq 0$ for all $i \in [r]$, a function $\hat{\mu}$ satisfying \eqref{eqn:fourier_transform_condition} is $\hat{\mu}(x) = \exp(-|x|)$. 
By applying the inverse Fourier transform to the Ansatz for $\hat{\mu}(x)$ yields
\begin{equation}
\mu(t) = \frac{1}{2\pi}\int_{\R} \e{\ci t x} \hat{\mu}(x) dx = \frac{1}{2\pi}\int_{\R} \e{\ci t x} \e{-|x|} dx = \frac{1}{\pi(1+t^2)}.
\end{equation}
Moreover, by integrating directly we have $\int_{\R} \mu(t)dt = 1$, so that $\mu$ is also a distribution over $\R$. 
This shows \eqref{eqn:integral_representation} for the case $K \in \pbsa{\mathcal{S}}$. 

If $K$ is full rank but possesses negative eigenvalues, we can consider the polar decomposition $K = |K| D$, where $|K| = \sqrt{K K^{\dagger}} \in \pbsa{\mathcal{S}}$ and $D \in U(\mathcal{S})$ is such that $|K| D = K$. 
The unitary $D$ will encode the signs of the eigenvalues of $K$, and satisfies $D = D^{\dagger}$. 
Apply the same arguments as in the case $K \in \pbsa{\mathcal{S}}$, now using $|K|$, and $DAD$ instead. 
After doing so, we also obtain
\begin{equation}
KAK = \int_{\R} |K|^{-\ci t} DAD |K|^{-\ci t} \mu(dt).
\end{equation}

Finally, if $K$ has null eigenvalues, there exists a sequence of operators $Z_1, \ldots, Z_j, \ldots \in \bsa{\mathcal{S}}$ such that $Z_j \in \bsa{\mathcal{S}}$, $\opnorm{Z_j} \leq 1$, $\mathrm{ker}(Z_j) = \emptyset$, and $\opnorm{Z_j  - K} \to 0$ as $j \to \infty$. 
From the fact that $\opnorm{|Z_j|^{-\ci t} DAD |Z_j|^{-\ci t}} \allowbreak \leq \opnorm{A}$ for any $j \in \N$, the integral in \eqref{eqn:integral_representation} is uniformly bounded. 
By the dominated convergence theorem, the limit as $j \to \infty$ exists and will be equal to $KAK$.
\end{proof}

The second lemma we need is a commonly used interpolation result from complex analysis:

\begin{lemma}\emph{(Hadamard Three Lines Theorem, \cite[p. 33]{reed1975ii})}
Let $\Psi(z)$ be a complex-valued function, bounded and continious on the closed strip $\mathcal{B} = \{z \in \C : 0 \leq \mathrm{Re}(z) \leq 1 \}$, analytic in the interior of $\mathcal{B}$ and satisfying
\begin{align}
|\Psi(z)| & \leq M_0 \quad \text{ if } \quad \mathrm{Re}(z) = 0 \\
|\Psi(z)| & \leq M_1 \quad \text{ if } \quad \mathrm{Re}(z) = 1.
\end{align}
Then, for $z \in \mathcal{B}$, $|\Psi(z)| \leq M_0^{\mathrm{Re}(z)} M_1^{1-\mathrm{Re}(z)}$.
\label{lemma:hadamard_three_lines}
\end{lemma}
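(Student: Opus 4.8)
The plan is to run the classical Phragm\'en--Lindel\"of argument for a strip. First I would dispose of the degenerate case: if $M_0=0$ or $M_1=0$, replace each $M_j$ by $M_j+\delta$ with $\delta>0$ (the two boundary hypotheses remain valid with these larger constants), prove the estimate in that case, and send $\delta\to 0$ at the end; so I may assume $M_0,M_1>0$. Next I would normalize by the interpolating function $\phi(z)=M_0^{\,1-z}M_1^{\,z}=\e{(1-z)\log M_0+z\log M_1}$, which is entire, nowhere zero, and satisfies $|\phi(z)|=M_0^{\,1-\mathrm{Re}(z)}M_1^{\,\mathrm{Re}(z)}$; on $\mathcal{B}$ this quantity lies between $\min(M_0,M_1)$ and $\max(M_0,M_1)$, so both $\phi$ and $1/\phi$ are bounded there. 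Setting $h(z)=\Psi(z)/\phi(z)$, the function $h$ is analytic in the interior of $\mathcal{B}$, bounded and continuous on $\mathcal{B}$, and $|h|\le 1$ on the two boundary lines $\mathrm{Re}(z)\in\{0,1\}$. The theorem then reduces to showing $|h|\le 1$ on all of $\mathcal{B}$, because that gives $|\Psi(z)|\le|\phi(z)|$, which is exactly the asserted bound (with $M_0$ and $M_1$ in the roles pinned down by the boundary hypotheses).

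The one real difficulty is that $\mathcal{B}$ is unbounded, so the maximum modulus principle cannot be applied to $h$ directly; the standard remedy is a Gaussian-type damping factor. Writing $M:=\sup_{\mathcal{B}}|h|<\infty$, I would introduce, for $\varepsilon>0$, the function $h_\varepsilon(z)=h(z)\,\e{\varepsilon(z^2-1)}$. For $z=x+iy$ with $x\in[0,1]$ one computes $\mathrm{Re}(z^2-1)=x^2-1-y^2\le-y^2$, hence $|h_\varepsilon(z)|\le M\,\e{-\varepsilon y^2}\to 0$ as $|y|\to\infty$, uniformly in $x$; and on the lines $x=0$ and $x=1$ the exponent is $\le 0$, so $|h_\varepsilon|\le|h|\le 1$ there. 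Fixing an interior point $z_0$, I would pick $N>|\mathrm{Im}(z_0)|$ large enough that $|h_\varepsilon|\le 1$ also on the two horizontal segments $\{x\pm iN:x\in[0,1]\}$ (possible by the decay), so that $|h_\varepsilon|\le 1$ on the whole boundary of the rectangle $R=[0,1]\times[-N,N]$. The ordinary maximum modulus principle on $R$ then yields $|h_\varepsilon(z_0)|\le 1$.

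Finally I would let $\varepsilon\to 0$: since $\e{\varepsilon(z_0^2-1)}\to 1$, this gives $|h(z_0)|\le 1$, and $z_0$ was an arbitrary interior point, so $|h|\le 1$ on $\mathcal{B}$. Undoing the normalization (and, if it was used, sending the perturbation $\delta\to 0$) delivers the stated inequality. I expect the only step requiring care to be the passage around the unbounded strip, i.e.\ verifying that $\e{\varepsilon(z^2-1)}$ simultaneously decays in $|\mathrm{Im}(z)|$ and keeps modulus $\le 1$ on the two vertical boundary lines; the remaining ingredients are the elementary identity $|M^{w}|=M^{\mathrm{Re}(w)}$ for real $M>0$ and the maximum modulus principle on a compact rectangle, both routine.
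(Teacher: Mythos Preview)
Your argument is the standard Phragm\'en--Lindel\"of proof and is correct. The paper does not actually prove this lemma; it is simply quoted with a reference to Reed--Simon, so there is no ``paper's own proof'' to compare against. One remark: the interpolating function you use, $\phi(z)=M_0^{1-z}M_1^{z}$, yields the bound $|\Psi(z)|\le M_0^{1-\mathrm{Re}(z)}M_1^{\mathrm{Re}(z)}$, which is the correct form (at $\mathrm{Re}(z)=0$ one must recover $M_0$). The exponents in the paper's statement are swapped, but this is a typo there rather than an error in your proof, and in the only place the paper invokes the lemma it uses merely the consequence $|\Psi(1)|\le\max(M_0,M_1)$, which is insensitive to the swap.
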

We are now in position to show the proposition.\\

\noindent
\emph{Proof (of Proposition~\ref{lem:inequality_bound_norm_fourth}).}
We may assume that $L \in \nnbsa{\mathcal{H}}$ and $L$ is full rank. 
Indeed, by the polar decomposition, if $L \in B(\mathcal{S})$, there exists $U \in U(\mathcal{S})$ such that $L = |L| U$. 
We can then consider the \gls{PM} $\tilde{\mathcal{Q}} = \{U Q_1 U^{\dagger}, \ldots, U Q_r U^{\dagger} \} \in P(\mathcal{S})$ that satisfies the equality $\mathrm{Tr}[\rho \mathcal{Q}_{L}(\tau)] = \mathrm{Tr}[\rho \tilde{\mathcal{Q}}_{|L|}(\tau)]$.
Additionally, if $L$ has null eigenvalues, we can use a sequence $Z_1, \ldots, Z_j, \dots \subset B(\mathcal{S})$ with $Z_j$ of full rank for all $j \in \N$ such that $Z_j \to L$ in operator norm in a similar manner as used in Lemma~\ref{lemma:integral_representation_conjugation} to prove the claim.
We assume in the rest of the proof that $L \in \pbsa{\mathcal{H}}$.

We set $\tilde{L} = L/\opnorm{L}$, so that $\opnorm{\tilde{L}} \leq 1$. 
Therefore,
\begin{equation}
\trbig{\rho \mathcal{Q}_{L}[\tau]} = \opnorm{L}^4 \trbig{\rho \mathcal{Q}_{\tilde{L}}[\tau]}.
\label{eqn:reduction_1}
\end{equation}
We focus on the last term.
Define the complex-valued function 
\begin{equation}
\Psi(z) = \sum_{i=1}^r \trbig{\rho L^{z} Q_i \tilde{L}^{2-z} \tau \tilde{L}^{z} Q_i L^{2-z}}.
\end{equation}
Note that since $\tilde{L}$ is self-adjoint, $\Psi(1) = \trbig{\rho \mathcal{Q}_{\tilde{L}}(\tau)}$.
The conditions $\opnorm{\tilde{L}} \leq 1$ and $\tilde{L} \geq 0$ imply that if $0 \leq \mathrm{Re}(z) \leq 2$, then $\opnorm{\tilde{L}^z} \leq 1$ and $\opnorm{\tilde{L}^{2-z}} \leq 1$. 
Therefore, the function $\Psi$ is analytic, and also bounded and continuous on the strip $\mathcal{S} = \{z \in \C : 0 \leq \mathrm{Re}(z) \leq 2 \}$. 
Define
\begin{align}
M_0 = \sup_{\mathrm{Re}(z) = 0} |\Psi(z)|  = \sup_{t \in \R} |\Psi( 0 + \ci t)|, \nonumber \\
M_2 = \sup_{\mathrm{Re}(z) = 2} |\Psi(z)| = \sup_{t \in \R} |\Psi(2 + \ci t)|. \label{eqn:reduction_interpolation_boundary}
\end{align}
By Lemma~\ref{lemma:hadamard_three_lines}, the upper bound $\Psi(1) \leq \max( M_2, M_0)$ holds.

Without loss of generality, we assume that the maximum is attained at $M_2$---the case of $M_0$ is analogous.
For an expression of the supremum $M_2$ in \eqref{eqn:reduction_interpolation_boundary}, we can consider the continious map $f: \R \to U(\mathcal{S})$ given by $f(t) = \tilde{L}^{\ci t}$, and notice that if $\tilde{\Psi}(t) = \Psi( 2 + it)$, then
\begin{equation}
\tilde{\Psi}(t) = \sum_{i=1}^r \trbig{\rho \tilde{L}^{2} f(t) Q_i f(t)^{\dagger} \tau \tilde{L}^{2} f(t) Q_i f(t)^{\dagger}}.
\end{equation} 
By compactness of $U(\mathcal{S})$, $\sup_{t \in \R} |\tilde{\Psi}(t)|$ is attained at some unitary $U \in U(\mathcal{S})$ in the closure of $\mathrm{Im}(f)$, and satisfies
\begin{equation}
M_2 = \Bigl|\sum_{i=1}^r \trbig{\rho \tilde{L}^{2} U Q_i U^{\dagger} \tau \tilde{L}^2 U Q_i U^{\dagger}}\Bigr|.
\end{equation}
Let $\tilde{\mathcal{Q}} = (U Q_1 U^{\dagger}, \ldots, U Q_r U^{\dagger} ) = (\tilde{Q}_1, \ldots, \tilde{Q}_r ) \in P(\mathcal{S})$. 
Using the inequality $\Psi(1) \leq M_2$, the following holds
\begin{align}
\trbig{ \rho \mathcal{Q}_{\tilde{L}}(\tau)} &\leq \Bigl|    \sum_{i=1}^r \trbig{\rho \tilde{L}^{2} U Q_i U^{\dagger} \tau \tilde{L}^2 U Q_i U^{\dagger}}\Bigr| \nonumber \\
& = \Bigl|    \sum_{i=1}^r \trbig{\rho \tilde{L}^{2} \tilde{Q}_i \tau \tilde{L}^2 \tilde{Q}_i}\Bigr|. 
\label{eqn:reduction_interpolation}
\end{align}

We are now in position to apply \refLemma{lemma:integral_representation_conjugation} to \eqref{eqn:reduction_interpolation}.
\begin{align}
\Bigl| \sum_{i=1}^r \trbig{\rho \tilde{L}^{2} \tilde{Q}_i \tau \tilde{L}^2 \tilde{Q}_i }\Bigr| &= \Bigl|  \int_{\R} \sum_{i=1}^r \trbig{\rho \tilde{L}^{-2 \ci t} \tilde{Q}_i \tau \tilde{L}^{-2 \ci t} \tilde{Q}_i} \mu(dt) \Bigr| \tag{Lemma~\ref{lemma:integral_representation_conjugation}} \nonumber \\ 
& \leq   \int_{\R} \Bigl|\sum_{i=1}^r \trbig{\rho \tilde{L}^{-2 \ci t} \tilde{Q}_i \tau \tilde{L}^{-2 \ci t} \tilde{Q}_i} \Bigr|\mu(dt) \nonumber \\
& \leq   \sup_{t \in \R} \Bigl|\sum_{i=1}^r \trbig{ \rho \tilde{L}^{-2 \ci t} \tilde{Q}_i \tau \tilde{L}^{-2 \ci t} \tilde{Q}_i } \Bigr|. \label{eqn:reduction_norm_bound}
\end{align}
By compactness of $U(\mathcal{S})$, the supremum in \eqref{eqn:reduction_norm_bound} is attained at some $V \in U(\mathcal{S})$ in the closure of $\{ L^{- 2\ci t} | t \in \R\}$.

With $V$ at hand, we have a sequence of inequalities that use the following facts: (i) $\tilde{Q}_i$ are rank one for $i \in [r]$, (ii) the Cauchy-Schwartz inequality $|\tr{AB}| \leq \tr{AA^{\dagger}}^{1/2} \tr{BB^{\dagger}}^{1/2}$ with $A = \rho^{1/2} V \tilde{Q}_i^{1/2}$,  $B = \tilde{Q}_i^{1/2} \rho^{1/2}$ for each $i \in [r]$, and also in the case of $\tau$, respectively. Finally we also use (iii) Cauchy-Schwartz inequality for vectors.
\begin{align}
\Bigl|\sum_{i=1}^r \trbig{\rho V \tilde{Q}_i \tau V \tilde{Q}_i} \Bigr| & \leq \sum_{i=1}^r \Bigl|\trbig{ \rho V \tilde{Q}_i \tau V \tilde{Q}_i} \Bigr| \nonumber \\
&\leq \sum_{i=1}^r \Bigl| \trbig{\rho V \tilde{Q}_i} \tr{\tau V \tilde{Q}_i} \Bigr| 
\tag{i} \nonumber\\
&  \leq \sum_{i=1}^r \trbig{\rho V \tilde{Q}_i V^{\dagger}}^{\frac{1}{2}}\trbig{\rho \tilde{Q}_i}^{\frac{1}{2}} \trbig{\tau V \tilde{Q}_i V^{\dagger}}^{\frac{1}{2}}\trbig{\tau \tilde{Q}_i}^{\frac{1}{2}}  
\tag{ii} \nonumber \\
& \leq \left( \sum_{i=1}^r \trbig{\rho V \tilde{Q}_i V^{\dagger}} \trbig{\tau V \tilde{Q}_i V^{\dagger}} \right)^{\frac{1}{2}} \left( \sum_{i=1}^r \trbig{\rho \tilde{Q}_i)} \trbig{\tau \tilde{Q}_i} \right)^{\frac{1}{2}} \tag{iii} \nonumber \\
& \leq \Bigl( \trbig{\rho \mathcal{Y}[\tau]} \Bigr)^{\frac{1}{2}} \Bigl(\trbig{\rho \tilde{\mathcal{Q}}[\tau]} \Bigr)^{\frac{1}{2}} \tag{i} \nonumber \\
& \leq \max\Bigl( \trbig{\rho \mathcal{Y}(\tau)},  \trbig{\rho \tilde{\mathcal{Q}}(\tau)} \Bigr), \label{eqn:reduction_norm_finalstep}
\end{align}
where $\mathcal{Y} = ( V \tilde{Q}_1 V^{\dagger}, \ldots,  V \tilde{Q}_r V^{\dagger})\in P(\mathcal{S})$.
From \eqref{eqn:reduction_1}, combining the inequalites in \eqref{eqn:reduction_interpolation}, \eqref{eqn:reduction_norm_bound}, and \eqref{eqn:reduction_norm_finalstep} we conclude that there exists $\mathcal{Z} \in P(\mathcal{S})$ such that
\begin{equation}
\trbig{\rho \mathcal{Q}_{L}[\tau]} \leq \opnorm{L}^4 \trbig{\rho \mathcal{Z}[\tau]}.
\end{equation} 
\qed
\begin{remark}
We note that Proposition~\ref{lem:inequality_bound_norm_fourth} can be generalized to measurements other than \glspl{PM}. 
Indeed, we have not used the assumption that $\mathcal{Q} = (Q_{1}, \ldots, Q_{r})$ is a \gls{PM}, only that each $Q_i$ for $i \in [r]$, is symmetric and rank one.
This suggests that we can extend Proposition~\ref{lem:inequality_bound_norm_fourth} to symmetric Kraus representations of rank-one \glspl{POVM}.
\end{remark}

\subsection{Combining all steps}
\label{sec:final_steps}

For any $\mathcal{P} \in P(\mathcal{H})$, we use Lemma~\ref{lem:reduction_sum_PMs} together with Proposition~\ref{lem:inequality_bound_norm_fourth} in Lemma~\ref{lem:upper_bound_expectation}, to conclude that the following inequalities hold.
\begin{align}
\trbig{\rho \mathcal{P}[\tau]} & =  \sum_{i=1}^{T} \trbig{\rho \mathcal{Q}_{L_i}[\tau]} \tag{Lemma~\ref{lem:reduction_sum_PMs}}\nonumber \\
 & \leq  3\expectationBig{\trbig{\rho \mathcal{\mathcal{Q}}_{\hat{L}}[\tau]}}  \tag{Lemma~\ref{lem:upper_bound_expectation}}  \\
& \leq  3 \expectationBig{\opnorm{\hat{L}}^4 \trbig{\rho \mathcal{Z}[\tau]}} \tag{Proposition~\ref{lem:inequality_bound_norm_fourth}} \nonumber \\
& \leq  3 \expectationBig{\opnorm{\hat{L}}^4 \max_{\mathcal{Y} \in P(\mathcal{S})} \trbig{\rho \mathcal{Y}[\tau]}}  \nonumber \\
& \leq  3 \Bigl( \expectationbig{\opnorm{\hat{L}}^4} \Bigr) \max_{\mathcal{Y} \in P(\mathcal{S})} \trbig{\rho \mathcal{Y}[\tau]}. 
\label{eqn:proof_upper_bound_1}
\end{align}
We assume first that $T$---that is, $n$--- is large enough so that Proposition~\ref{prop:variance_term_bounded} holds.
Corollary~\ref{cor:moment_bound_L_hat} then implies that there exists $c>0$ such that for $r \geq 2$,
\begin{align}
\expectationbig{\opnorm{\hat{L}}^{4}} \leq c \log(r)^4.
\label{eqn:proof_upper_bound_4}
\end{align}
In this case, we can use \eqref{eqn:proof_upper_bound_4} in \eqref{eqn:proof_upper_bound_1} to conclude that there exists a constant $c >0$ such that for any $\mathcal{P} \in P(\mathcal{H})$,
\begin{equation}
\trbig{\rho \mathcal{P}[\tau]} \leq c \log(r)^{4} \max_{\mathcal{Y} \in P(\mathcal{S})} \trbig{\rho \mathcal{Y}[\tau]}.
\label{eqn:proof_upper_bound_5}
\end{equation}
Taking the maximum over $\mathcal{P} \in P(\mathcal{H})$ yields Theorem~\ref{thm:main} in the case $n$ is large enough.\\

If $T$---that is, $n$---is small compared to $r$, and the assumptions of Proposition~\ref{prop:variance_term_bounded} do not hold, we can still show the result of Theorem~\ref{thm:main}.
Indeed, \eqref{eqn:proof_upper_bound_5} shows that there exists a constant $C>0$, and $n(r)$ such that when $n \geq n(r)$
\begin{equation}
\max_{\mathcal{P} \in P(\mathcal{H})} \trbig{\rho \mathcal{P}[\tau]} \leq C \log(r)^{4} \max_{\mathcal{Y} \in P(\mathcal{S})} \trbig{\rho \mathcal{Y}[\tau]}.
\label{eqn:proof_upper_bound_7}
\end{equation}
We need to check that \eqref{eqn:proof_upper_bound_7} also holds whenever $r \leq n < n(r)$. 
To do so simply note that we can enlarge the space $\mathcal{H}$ into a larger Hilbert space $\tilde{\mathcal{H}}$ of dimension $n(r) > n$, and consider the same optimization problem in \eqref{eqn:maximization} over $P(\tilde{\mathcal{H}})$ instead.
In this case, there exists an inclusion $\iota: P(\mathcal{H}) \to P(\tilde{\mathcal{H}})$ that preserves the value of $\trbig{\rho \mathcal{P}[\tau]}$ as shown for the case of $P(\mathcal{S})$ in \eqref{eqn:restriction_PM}.
From this inclusion, we can conclude
\begin{align}
\max_{\substack{\mathcal{P} \in P(\mathcal{H})}} \trbig{\rho \mathcal{P}[\tau]} & = \max_{\substack{\mathcal{P} \in P(\mathcal{H})}} \trbig{\rho \iota(\mathcal{P})[\tau]} \tag{\textrm{inclusion} $\iota: P(\mathcal{H}) \to P(\tilde{\mathcal{H}})$ } \nonumber \\
& \leq \max_{\substack{\tilde{\mathcal{P}} \in P(\tilde{\mathcal{H}})}} \trbig{\rho \tilde{\mathcal{P}}[\tau]} \nonumber \\
& \eqcom{\ref{eqn:proof_upper_bound_7}}\leq C \log(r)^{4} \max_{\substack{\mathcal{Y} \in P(\mathcal{S})}}  \trbig{\rho \mathcal{Y}[\tau]}.
\end{align}
This last step shows the claim of Theorem~\ref{thm:main} for any $n \geq r \geq 2$.

\section{Numerical experiments}
\label{sec:numerics}
We numerically examine the value of $K_{n,r} = K_r$, defined in \eqref{eqn:maximization}, and observe its dependency on $r$ for different values.
To do so, in dimension $n$, we use the formulation in \eqref{eqn:maximization_unitary} by setting
\begin{equation}
S_n(U) = \trbig{\rho \mathcal{E}(U)[\tau]} \quad \quad U \in U(n),
\label{eqn:numerics_1}
\end{equation}
where $\mathcal{E}$ is the measurement in the canonical basis of $\C^{n}$, and $\mathcal{E}(U) = (U E_1 U^{\dagger}, \ldots, U E_n U^{\dagger})$.
For $U \in U(n)$, can then find the gradient of $S_n(U)$ restricted to the tangent space of $U(n)$, by projecting
\begin{equation}
P_n(U) = \mathrm{Proj}_{\mathrm{T}_{U} U(n)}(\nabla S_n(U)).
\label{eqn:numerics_2}
\end{equation}
We refer to Appendix~\ref{secappendix:numerical_detials} for additional implementation details such as for \eqref{eqn:numerics_2}. 
The projected gradient ascent on the manifold $U(n)$ is given by initializing $U_0 \in U(n)$, and setting
\begin{equation}
U_{t+1} = \mathrm{exp}_{U_t}\Bigl(\beta_t P_n(U_t) \Bigr) \quad \text{ for } \quad t \in [t_{\max}],
\label{eqn:numerics_3}
\end{equation}
where $\mathrm{exp}_{U_t}$ is the exponential map $\mathrm{exp}: \mathrm{T}_{U_{t}} U(n) \to U(n)$, and $\beta_t > 0$ is the stepsize.
Estimating \eqref{eqn:numerics_3} is computationally expensive so we will use instead the following approximation
\begin{equation}
U_{t+1} = \mathrm{Proj}_{U(n)}\Bigl(U_t + \beta_t P_n(U_t) U_t \Bigr)  \quad \text{ for } \quad t \in [t_{\max}],
\label{eqn:numerics_4}
\end{equation}
where the projection to $U(n)$ is given by the unitary obtained from the polar decomposition.

We will choose $\rho, \tau \in D(\mathcal{S})$ at random, and $U_0 \in U(n)$ such that $U_0|_{\mathcal{S}} \sim \textrm{Unif}(U(r))$, and $U_0|_{\mathcal{S}^{\perp}} = \id{\mathcal{S}^{\perp}}$.
For optimizing $S_r(V)$ we will use the initialization $V_0 \sim \textrm{Unif}(U(r))$.
After $t_{\max}$ iterations of \eqref{eqn:numerics_4}, we denote the empirical estimator of $K_r$ by
\begin{equation}
\hat{K}_r = \frac{\max_{t \in [t_{\max}]} S_n(U_{t})}{\max_{t \in [t_{\max}]} S_r(V_{t})}.
\label{eqn:K_estimator}
\end{equation}
The estimated values of $\hat{K}_r$ can be found in Figure~\ref{fig:numerics}. They seem to suggest $K_{r} \simeq 1$ for $r \leq 20$.
Note that the estimator can satisfy $\hat{K}_r < 1$, since the optimization problem over $U(n)$ for $n>r$ commonly requires a larger horizon $t_{\max}$, and smaller stepsize to achieve the same suboptimality due to larger dimension.
Similarly, the optimization of $S_r(V)$ appears to have saddle points and local maxima that allow for $\hat{K}_r > 1$ to occur when $V_t$ converges to those.
Therefore, we cannot rule out that with other initializations---also for $\rho$, and $\tau$---a different behavior of $\hat{K}_r$ may be observed.
\begin{figure}[!hbt]
\centering
\includegraphics[width=0.8\textwidth]{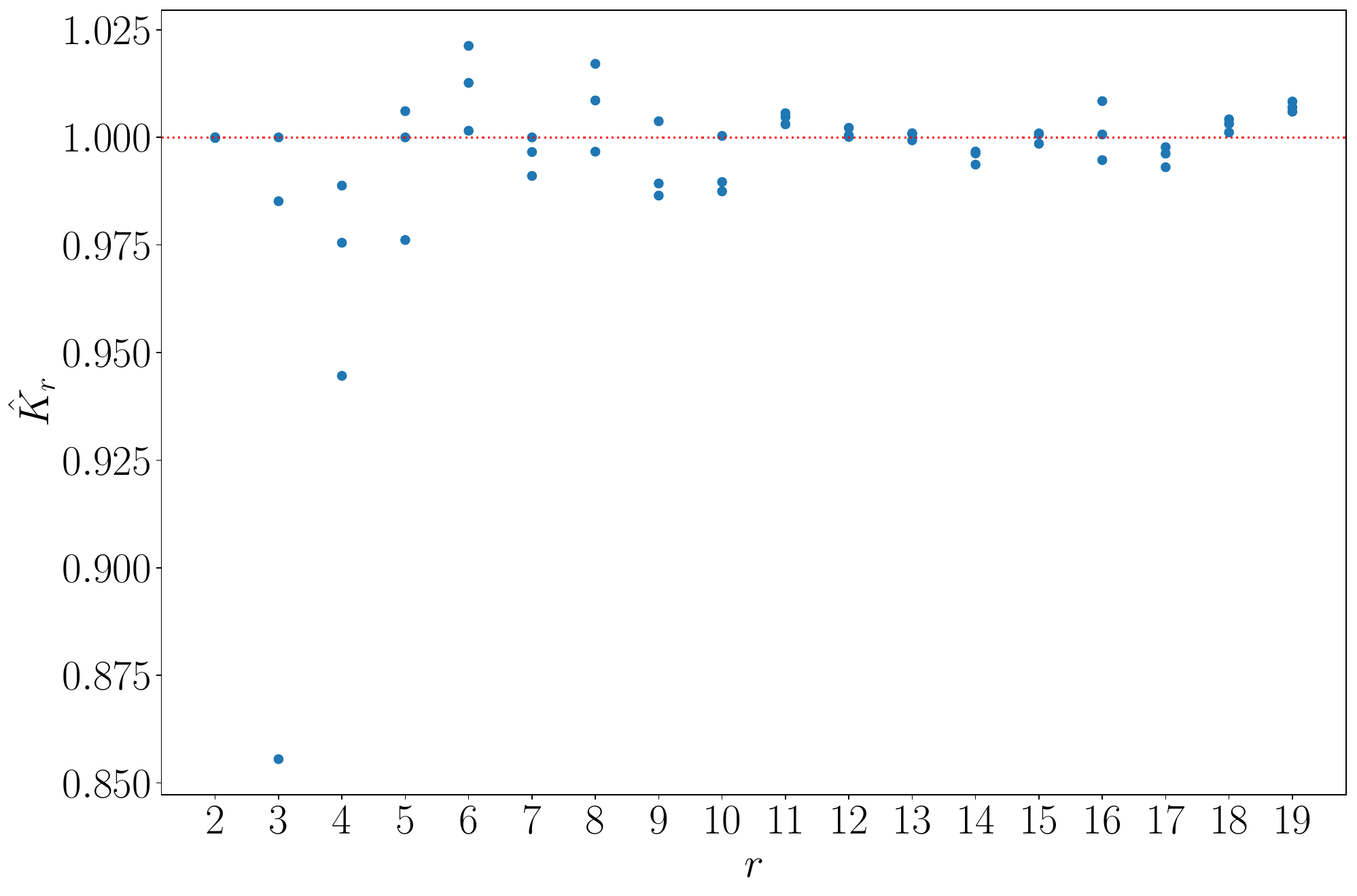}
\caption{Evaluation of $\hat{K}_r$ for $2 \leq r \leq 20$, and $n \in \{ r+1, r+2, r+3\}$ by using \eqref{eqn:K_estimator}, $t_{\max} = 10^{4}$, and $\beta_t = 10^{-2}$.}
\label{fig:numerics}
\end{figure}

It would be interesting to understand if there is no benefit in using \glspl{PM} that are not fully aligned with the subspace $\mathcal{S}$.
In this case, \eqref{eqn:intro_maximization} could be solved more efficiently when $\tau$, and $\rho$ have low-rank.
In this regard, inspired by previous remarks and numerical results we pose the following conjecture relative to \eqref{eqn:maximization}.
\begin{conjecture}
There exists $C \geq 1$ such that $K_{n,r} \leq C$ for all $n,r \geq 1$.
\label{conj:C_is_one}
\end{conjecture}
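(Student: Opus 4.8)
The plan is to bootstrap from the two features of the problem that the probabilistic proof of \refTheorem{thm:main} deliberately discards: the first-order optimality conditions \eqref{eqn:critical_point_condition} for the unconstrained problem \eqref{eqn:maximization_unitary}, and the rigidity of the Parseval frame that an \emph{optimal} $\mathcal{P}^{\star} \in P(\mathcal{H})$ induces on $\mathcal{S}$. The guiding heuristic, consistent with the numerics and with the exact identity $K_{n,2} = 1$, is that a near-optimal measurement in the ambient space cannot effectively exploit ``more than $r$ directions worth'' of $\mathcal{S}$, so that the surplus outcomes it uses are redundant up to a universal constant.

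The first step would be to make precise a coarsening argument in the frame picture of \refLemma{lem:reduction_sum_PMs}, where $\trbig{\rho \mathcal{P}[\tau]} = \sum_{i} \langle v_i | \rho | v_i \rangle \langle v_i | \tau | v_i \rangle$ for a Parseval frame $\{|v_i\rangle\}$ of $\mathcal{S}$. If two frame vectors are parallel, say $|v_i\rangle$ and $|v_j\rangle$ both proportional to a unit vector $|w\rangle$ with lengths $a, b \geq 0$, then replacing them by the single vector of length $\sqrt{a^2+b^2}$ in the direction $|w\rangle$ preserves $\sum_i |v_i\rangle\langle v_i| = \id{\mathcal{S}}$ and does not decrease the objective, since $(a^2+b^2)^2 \geq a^4 + b^4$; hence for the purpose of bounding $K_{n,r}$ one may restrict to frames with pairwise non-parallel vectors, and if such merging ever terminates at exactly $r$ vectors they form an orthonormal basis, i.e. an element of $P(\mathcal{S})$. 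The crux is then a quantitative version: to show that an optimal frame concentrates, after merging almost-parallel vectors, on only $O(1)$ (or at worst $O(r)$) effective directions carrying all but a negligible fraction of the objective. Given such a structural statement, one replaces the resulting frame by a genuine orthonormal basis of $\mathcal{S}$ at the cost of a universal constant, reusing the interpolation machinery of \refProposition{lem:inequality_bound_norm_fourth} and Sections~\ref{sec:interpolation_bound}--\ref{sec:final_steps} but now with weight operators of \emph{bounded rank}, so that \refLemma{lem:concentration_matrix_gaussian} is invoked in a fixed dimension and the logarithmic loss identified in \refRemark{remark:dimensional_dependence} disappears.

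In parallel, I would push the stationarity condition \eqref{eqn:critical_point_condition} directly. Substituting the restricted frame of $\mathcal{P}^{\star}$ and using that $\rho, \tau$ are supported on $\mathcal{S}$ turns \eqref{eqn:critical_point_condition} into a system of commutator equations on $\mathcal{S}$ coupling $\rho$, $\tau$, and the frame operators; in the $r=2$ case this system pins the two surviving projectors down to an eigenbasis-type configuration and yields $K=1$. The aim would be to extract from it either a bound on how many frame vectors can carry a constant fraction of the objective (feeding back into the coarsening step) or a spectral identity comparing $\trbig{\rho \mathcal{P}^{\star}[\tau]}$ with $\max_{\mathcal{Q} \in P(\mathcal{S})} \trbig{\rho \mathcal{Q}[\tau]}$. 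A sensible intermediate milestone is to settle $K_{n,3} = O(1)$, or even $K_{n,3}=1$, by solving \eqref{eqn:critical_point_condition} completely for $r=3$; this both tests the method and should indicate whether the optimal constant is $C=1$.

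The main obstacle is precisely the one flagged in \refRemark{remark:dimensional_dependence}: absent structural information about the global optimizers, every randomization-based argument pays the dimensional price inherent in matrix concentration, and the nonconvex landscape of \eqref{eqn:maximization_unitary} is essentially uncharted — it is not even known how many critical points \eqref{eqn:critical_point_condition} admits. The decisive and hardest step is therefore the ``concentration on $O(1)$ effective directions'' claim for global optima; everything downstream — replacing a frame supported on few directions by an orthonormal basis of $\mathcal{S}$ via interpolation — should follow from the tools already built in Sections~\ref{sec:interpolation_bound}--\ref{sec:final_steps}. If it turns out that the global optimum genuinely uses $\Theta(r)$ comparable directions, then proving the conjecture (if true at all) would instead require a non-probabilistic comparison principle — plausibly a convexity or majorization inequality directly relating the frame functional $\sum_i \langle v_i|\rho|v_i\rangle\langle v_i|\tau|v_i\rangle$ over tight frames to its restriction to orthonormal bases of $\mathcal{S}$.
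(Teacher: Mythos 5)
The statement you are addressing is Conjecture~\ref{conj:C_is_one}, which the paper explicitly leaves \emph{open}: the strongest result actually proven is Theorem~\ref{thm:main}, i.e.\ $K_{n,r} \leq c\log(r)^4$, and the conjecture is motivated only by the numerics of Section~\ref{sec:numerics} and the known case $K_{n,2}=1$. So there is no proof in the paper to compare against, and your text is, by your own framing, a research programme rather than a proof.

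As a programme it is sensible, but it has a genuine gap at exactly the point you identify: the ``concentration on $O(1)$ effective directions'' claim for global optima is asserted as a milestone, not established, and nothing in your sketch constrains how a globally optimal Parseval frame distributes its mass. The parts that are rigorous are the easy parts. Merging exactly parallel frame vectors $a|w\rangle, b|w\rangle$ into $\sqrt{a^2+b^2}\,|w\rangle$ does preserve $\sum_i |v_i\rangle\langle v_i| = \id{\mathcal{S}}$ and does not decrease the objective since $(a^2+b^2)^2 \geq a^4+b^4$; but generic Parseval frames with $m>r$ vectors have no parallel pairs at all, so this merging typically terminates immediately without reducing the number of directions, and the quantitative ``almost-parallel'' version you would need is precisely where all the difficulty sits. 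Likewise, the stationarity condition \eqref{eqn:critical_point_condition} is only a first-order condition satisfied by every critical point of a landscape whose structure the paper states is unknown; extracting from it a bound on the number of frame vectors carrying a constant fraction of the objective, or even settling $r=3$, is not carried out. Until one of these two structural inputs is actually proven, the downstream use of Proposition~\ref{lem:inequality_bound_norm_fourth} and Lemma~\ref{lem:concentration_matrix_gaussian} in fixed dimension cannot be invoked, and the conjecture remains open.
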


If Conjecture~\ref{conj:C_is_one} holds for $C = 1$, it would additionally distance the maximization problems in \eqref{eqn:maximization} from those in quantum information that bound maximal correlations of (entangled) quantum states, showcased by, e.g., Grothendieck's inequality \cite{blei2014grothendieck}.

\bibliography{biblio}
\bibliographystyle{plainnat}

\appendix 

\section{Proof of Lemma~\ref{lem:variance_bound_moment_bound}}
\label{secappendix:proof_of_lemma}
The proof of Lemma~\ref{lem:variance_bound_moment_bound} consists of tracking and estimating the moments of the matrix $\mathcal{L}^{\pi}(\Theta)$.
In Appendix~\ref{secappendix:preliminaries}, we introduce the problem and the objects that will help with counting, such as cycles, and factors.
In Appendix~\ref{secappendix:tracking} we estimate the chance that certain factors appear in the expansion, and how they depend on $T$.
Next in Appendix~\ref{secappendix:symmetries} we characterize symmetries in the expansion of $F(k)$ that will simplify the counting later on.
In Appendix~\ref{secappendix:combinatorial}, we bound some of the products in $F(k)$, and finally in Appendix~\ref{secappendix:moment_estimate} we use all previous work to estimate the asymptotic expansion as $T$ becomes large; see Lemma~\ref{lem:moment_bound_variance}.

\subsection{Preliminaries}
\label{secappendix:preliminaries}
For simplicity we will assume that there are $m = T r \geq n$ frame vectors exactly---we complete with the null vector, if necessary.
Recall \eqref{eqn:def_L_pi_theta}, and define for all $t \in [T]$, and $i,j \in [r]$,
\begin{align}
\mathcal{L}_{t}^{\pi}(\Theta) & = L_{t}^{\pi}(\Theta)^{\dagger}L_{t}^{\pi}(\Theta) \quad \quad  \text{ and } \nonumber \\
(\mathcal{L}_{t}^{\pi}(\Theta))_{ij} & = \braket{v_{\pi(r(t-1) + i)}(\Theta) \mid v_{\pi(r(t-1) + j}(\Theta))} \nonumber \\
& = \exp\bigl(\ci \theta_{\pi(r(t-1) + j)} - \ci \theta_{\pi(r(t-1) + i)} \bigr)\braket{v_{\pi(r(t-1) + i)} \mid v_{\pi(r(t-1) + j)}}.
\label{eqn:appendix_L_pi_theta_t}
\end{align}
In this notation, we have
\begin{align}
\mathcal{L}^{\pi}(\Theta) &= \sum_{t = 1}^{T} \mathcal{L}_{t}^{\pi}(\Theta), \quad \quad \text{ and }\nonumber \\
(\mathcal{L}^{\pi}(\Theta))_{ij} & = \sum_{t=1}^{T} \braket{v_{\pi(r(t-1) + i)}(\Theta) \mid v_{\pi(r(t-1) + j}(\Theta))} \nonumber \\
& = \sum_{t=1}^{T} \exp\bigl(\ci (\theta_{\pi(r(t-1) + j)} - \theta_{\pi(r(t-1) + i)} \bigr)\braket{v_{\pi(r(t-1) + i)} \mid v_{\pi(r(t-1) + j)}}
\label{eqn:appendix_L_pi_theta}
\end{align}

Recall that
\begin{equation}
F(k) = \expectationBigWrt{\trbig{\mathcal{L}^{\pi}(\Theta)^{k}}}{\pi, \Theta}.
\label{eqn:def_F_k}
\end{equation}
For a fixed $t \in [T]$, consider one of the factors appearing in the expansion of $\tr{\mathcal{L}^{\pi}(\Theta)^{k}}$, say
\begin{equation}
\langle v_{\pi(i_1)}(\Theta) \mid v_{\pi(i_2)}(\Theta) \rangle \cdots \langle v_{\pi(i_{k})}(\Theta) \mid v_{\pi(i_{k+1})}(\Theta) \rangle.
\label{eqn:appendix_inner_product_vectors}
\end{equation}
Corresponding to an index $\pi(i_1)$, if the vector $\ket{v_{\pi(i_1)}(\Theta)}$ does not appear the same number of times in the primal and dual variables of of inner product $\langle \cdot \mid \cdot \rangle$ in \eqref{eqn:appendix_inner_product_vectors}, the expectation over $\Theta$ will be zero.
This is due to the fact that if $\theta \sim \mathrm{Unif}([0,2\pi])$,
\begin{equation}
\expectationbigWrt{\exp(\ci l \theta)}{\theta} = \begin{cases}
0 \text{ if } l \in \Z \backslash \{0\} \\
1 \text{ if } l = 0.
\end{cases}
\end{equation}
Only vectors with the same number of appearances in the primal and dual arguments will thus avoid the cancellation by the expectation over the phase modifier $\Theta$.
For other matrix products appearing in the expansion $\trbig{\mathcal{L}^{\pi}(\Theta)^{k}}$ the same argument applies.
This limits the amount of possible factors such as \eqref{eqn:appendix_inner_product_vectors} appearing in $F(k)$.
For a fixed $\pi$, we need first to characterize the type of factors that appear in the expansion
\begin{equation}
F(k, \pi) = \expectationBigWrt{\trbig{\mathcal{L}^{\pi}(\Theta)^{k}}}{\Theta}.
\label{eqn:def_F_k_pi}
\end{equation}

A permutation $\pi \in S_{m}$ will induce a partition $\mathcal{V}^{\pi}$ of $[m]$, given by the partition sets
\begin{align}
\mathcal{V}_1^{\pi} &= \bigl\{ \pi(1), \cdots, \pi(r) \bigr\} \nonumber \\
& \cdots  \nonumber \\
\mathcal{V}_{T}^{\pi} &= \bigl\{ \pi(r(T-1) + 1), \cdots, \pi(Tr) \bigr\}.
\label{eqn:def_partitions_permuted}
\end{align}
If $i \in \mathcal{V}_t^{\pi}$, the matrix $L_t^{\pi}(\Theta)$ will contain the vector $\ket{v_i}$. 
Then, only when $i,j \in \mathcal{V}_t^{\pi}$ for some $t \in [T]$, can the inner products $\braket{v_i \mid v_j}$ or $\braket{v_j \mid v_i}$ appear in the expansion of terms in \eqref{eqn:def_F_k_pi} that contain the matrix $\mathcal{L}_t^{\pi}(\Theta)$.
Similarly, if the term $\braket{v_i \mid v_j}  \braket{v_k \mid v_l}$ appears adjacent in the trace of some factor in \eqref{eqn:def_F_k_pi}, but for that permutation $\pi$ we have that $j \in \mathcal{V}_t^{\pi}$, and $k \in \mathcal{V}_s^{\pi}$ belong to different partitions, it implies that $j$, and $k$ have same position as vectors in the matrices $\mathcal{L}_t^{\pi}(\Theta)$, and $\mathcal{L}_s^{\pi}(\Theta)$ respectively, and the matrix product $\mathcal{L}_t^{\pi}(\Theta)\mathcal{L}_s^{\pi}(\Theta)$ appears in the trace.\\

A useful representation for a partition \eqref{eqn:def_partitions_permuted} is encoded in its associated graph, depicted in Figure~\ref{fig:cycle_representation}.
If $K_{z}$ is the complete graph with $z$ vertices, the graph associated to $\pi$ is isomorphic to the Cartesian product $K_{r} \Box K_{T}$.
The Cartesian product satisfies that $(i,t), (j,s) \in K_{r} \Box K_{T}$ are adjacent if and only if $i = j$, and $t$ is adjacent to $s$ or $i$ is adjacent to $j$, and $t =s$.
In our setting, only the edges corresponding to the component $K_r$ will give factors $\braket{v_i \mid v_j}$, while each edge on the component $K_{T}$ indicate that the partition has changed, that is, the trace contains a product of different matrices, e.g., $\mathcal{L}_t^{\pi}(\Theta)\mathcal{L}_s^{\pi}(\Theta)$.
The type of factors that appear in $F(k, \pi)$ will depend on cycles on $K_{r} \Box K_{T}$ that are compatible with the permutation. 
We characterize such cycles in the following definition.

\begin{figure}[!hbt]
\centering
\begin{subfigure}{0.45\textwidth}
\centering

\tikzset{every picture/.style={line width=0.75pt}} %

\begin{tikzpicture}[x=0.75pt,y=0.75pt,yscale=-0.9,xscale=0.9]
\draw   (38.2,43.62) .. controls (38.2,35.21) and (45.01,28.4) .. (53.42,28.4) .. controls (61.83,28.4) and (68.64,35.21) .. (68.64,43.62) .. controls (68.64,52.03) and (61.83,58.84) .. (53.42,58.84) .. controls (45.01,58.84) and (38.2,52.03) .. (38.2,43.62) -- cycle ;
\draw   (38.2,85.98) .. controls (38.2,77.57) and (45.01,70.76) .. (53.42,70.76) .. controls (61.83,70.76) and (68.64,77.57) .. (68.64,85.98) .. controls (68.64,94.38) and (61.83,101.2) .. (53.42,101.2) .. controls (45.01,101.2) and (38.2,94.38) .. (38.2,85.98) -- cycle ;
\draw   (38.2,130.98) .. controls (38.2,122.57) and (45.01,115.76) .. (53.42,115.76) .. controls (61.83,115.76) and (68.64,122.57) .. (68.64,130.98) .. controls (68.64,139.38) and (61.83,146.2) .. (53.42,146.2) .. controls (45.01,146.2) and (38.2,139.38) .. (38.2,130.98) -- cycle ;
\draw   (132.2,43.38) .. controls (132.2,34.97) and (139.01,28.16) .. (147.42,28.16) .. controls (155.83,28.16) and (162.64,34.97) .. (162.64,43.38) .. controls (162.64,51.78) and (155.83,58.6) .. (147.42,58.6) .. controls (139.01,58.6) and (132.2,51.78) .. (132.2,43.38) -- cycle ;
\draw   (132.2,85.74) .. controls (132.2,77.33) and (139.01,70.52) .. (147.42,70.52) .. controls (155.83,70.52) and (162.64,77.33) .. (162.64,85.74) .. controls (162.64,94.14) and (155.83,100.96) .. (147.42,100.96) .. controls (139.01,100.96) and (132.2,94.14) .. (132.2,85.74) -- cycle ;
\draw   (132.2,130.74) .. controls (132.2,122.33) and (139.01,115.52) .. (147.42,115.52) .. controls (155.83,115.52) and (162.64,122.33) .. (162.64,130.74) .. controls (162.64,139.14) and (155.83,145.96) .. (147.42,145.96) .. controls (139.01,145.96) and (132.2,139.14) .. (132.2,130.74) -- cycle ;
\draw   (227.2,43.38) .. controls (227.2,34.97) and (234.01,28.16) .. (242.42,28.16) .. controls (250.83,28.16) and (257.64,34.97) .. (257.64,43.38) .. controls (257.64,51.78) and (250.83,58.6) .. (242.42,58.6) .. controls (234.01,58.6) and (227.2,51.78) .. (227.2,43.38) -- cycle ;
\draw   (227.2,85.74) .. controls (227.2,77.33) and (234.01,70.52) .. (242.42,70.52) .. controls (250.83,70.52) and (257.64,77.33) .. (257.64,85.74) .. controls (257.64,94.14) and (250.83,100.96) .. (242.42,100.96) .. controls (234.01,100.96) and (227.2,94.14) .. (227.2,85.74) -- cycle ;
\draw   (227.2,130.74) .. controls (227.2,122.33) and (234.01,115.52) .. (242.42,115.52) .. controls (250.83,115.52) and (257.64,122.33) .. (257.64,130.74) .. controls (257.64,139.14) and (250.83,145.96) .. (242.42,145.96) .. controls (234.01,145.96) and (227.2,139.14) .. (227.2,130.74) -- cycle ;
\draw  [dash pattern={on 4.5pt off 4.5pt}]  (38.2,43.62) .. controls (8.06,43.99) and (5.11,83.55) .. (36.25,85.88) ;
\draw [shift={(38.2,85.98)}, rotate = 181.38] [fill={rgb, 255:red, 0; green, 0; blue, 0 }  ][line width=0.08]  [draw opacity=0] (12,-3) -- (0,0) -- (12,3) -- cycle    ;
\draw    (68.64,85.98) -- (130.2,85.74) ;
\draw [shift={(132.2,85.74)}, rotate = 179.78] [fill={rgb, 255:red, 0; green, 0; blue, 0 }  ][line width=0.08]  [draw opacity=0] (12,-3) -- (0,0) -- (12,3) -- cycle    ;
\draw  [dash pattern={on 4.5pt off 4.5pt}]  (132.2,85.74) .. controls (94.42,86.78) and (96.18,130.65) .. (128.73,131.21) ;
\draw [shift={(130.24,131.2)}, rotate = 178.65] [fill={rgb, 255:red, 0; green, 0; blue, 0 }  ][line width=0.08]  [draw opacity=0] (12,-3) -- (0,0) -- (12,3) -- cycle    ;
\draw  [dash pattern={on 0.84pt off 2.51pt}] (28.64,18.54) -- (75.84,18.54) -- (75.84,153.6) -- (28.64,153.6) -- cycle ;
\draw  [dash pattern={on 0.84pt off 2.51pt}] (123.82,18.54) -- (171.02,18.54) -- (171.02,153.6) -- (123.82,153.6) -- cycle ;
\draw  [dash pattern={on 0.84pt off 2.51pt}] (218.82,18.54) -- (266.02,18.54) -- (266.02,153.6) -- (218.82,153.6) -- cycle ;
\draw    (131.9,80.9) -- (67.9,81.38) ;
\draw [shift={(65.9,81.4)}, rotate = 359.57] [fill={rgb, 255:red, 0; green, 0; blue, 0 }  ][line width=0.08]  [draw opacity=0] (12,-3) -- (0,0) -- (12,3) -- cycle    ;
\draw  [dash pattern={on 4.5pt off 4.5pt}]  (162.64,130.74) .. controls (200.42,129.69) and (196.76,86.27) .. (164.15,85.73) ;
\draw [shift={(162.64,85.74)}, rotate = 358.65] [fill={rgb, 255:red, 0; green, 0; blue, 0 }  ][line width=0.08]  [draw opacity=0] (12,-3) -- (0,0) -- (12,3) -- cycle    ;
\draw  [dash pattern={on 4.5pt off 4.5pt}]  (68.64,85.98) .. controls (106.42,84.93) and (102.76,41.51) .. (70.15,40.97) ;
\draw [shift={(68.64,40.98)}, rotate = 358.65] [fill={rgb, 255:red, 0; green, 0; blue, 0 }  ][line width=0.08]  [draw opacity=0] (12,-3) -- (0,0) -- (12,3) -- cycle    ;

\draw (41.11,36.64) node [anchor=north west][inner sep=0.75pt]  [font=\scriptsize]  {$\pi ( 1)$};
\draw (41.11,79) node [anchor=north west][inner sep=0.75pt]  [font=\scriptsize]  {$\pi ( 2)$};
\draw (41.11,124) node [anchor=north west][inner sep=0.75pt]  [font=\scriptsize]  {$\pi ( 3)$};
\draw (135.11,36.4) node [anchor=north west][inner sep=0.75pt]  [font=\scriptsize]  {$\pi ( 4)$};
\draw (135.11,78.76) node [anchor=north west][inner sep=0.75pt]  [font=\scriptsize]  {$\pi ( 5)$};
\draw (135.11,123.76) node [anchor=north west][inner sep=0.75pt]  [font=\scriptsize]  {$\pi ( 6)$};
\draw (230.11,36.4) node [anchor=north west][inner sep=0.75pt]  [font=\scriptsize]  {$\pi ( 7)$};
\draw (230.11,78.76) node [anchor=north west][inner sep=0.75pt]  [font=\scriptsize]  {$\pi ( 8)$};
\draw (230.11,123.76) node [anchor=north west][inner sep=0.75pt]  [font=\scriptsize]  {$\pi ( 9)$};
\draw (40.4,156.8) node [anchor=north west][inner sep=0.75pt]    {$\mathcal{V}_{1}^{\pi }$};
\draw (133.6,156.8) node [anchor=north west][inner sep=0.75pt]    {$\mathcal{V}_{2}^{\pi }$};
\draw (230.8,156.8) node [anchor=north west][inner sep=0.75pt]    {$\mathcal{V}_{3}^{\pi }$};

\end{tikzpicture}
 \caption{}
\end{subfigure}
\hfill
\begin{subfigure}{0.45\textwidth}
\centering

\tikzset{every picture/.style={line width=0.75pt}} %

\begin{tikzpicture}[x=0.75pt,y=0.75pt,yscale=-0.9,xscale=0.9]
\draw   (58.2,63.62) .. controls (58.2,55.21) and (65.01,48.4) .. (73.42,48.4) .. controls (81.83,48.4) and (88.64,55.21) .. (88.64,63.62) .. controls (88.64,72.03) and (81.83,78.84) .. (73.42,78.84) .. controls (65.01,78.84) and (58.2,72.03) .. (58.2,63.62) -- cycle ;
\draw   (58.2,105.98) .. controls (58.2,97.57) and (65.01,90.76) .. (73.42,90.76) .. controls (81.83,90.76) and (88.64,97.57) .. (88.64,105.98) .. controls (88.64,114.38) and (81.83,121.2) .. (73.42,121.2) .. controls (65.01,121.2) and (58.2,114.38) .. (58.2,105.98) -- cycle ;
\draw   (58.2,150.98) .. controls (58.2,142.57) and (65.01,135.76) .. (73.42,135.76) .. controls (81.83,135.76) and (88.64,142.57) .. (88.64,150.98) .. controls (88.64,159.38) and (81.83,166.2) .. (73.42,166.2) .. controls (65.01,166.2) and (58.2,159.38) .. (58.2,150.98) -- cycle ;
\draw   (152.2,63.38) .. controls (152.2,54.97) and (159.01,48.16) .. (167.42,48.16) .. controls (175.83,48.16) and (182.64,54.97) .. (182.64,63.38) .. controls (182.64,71.78) and (175.83,78.6) .. (167.42,78.6) .. controls (159.01,78.6) and (152.2,71.78) .. (152.2,63.38) -- cycle ;
\draw   (152.2,105.74) .. controls (152.2,97.33) and (159.01,90.52) .. (167.42,90.52) .. controls (175.83,90.52) and (182.64,97.33) .. (182.64,105.74) .. controls (182.64,114.14) and (175.83,120.96) .. (167.42,120.96) .. controls (159.01,120.96) and (152.2,114.14) .. (152.2,105.74) -- cycle ;
\draw   (152.2,150.74) .. controls (152.2,142.33) and (159.01,135.52) .. (167.42,135.52) .. controls (175.83,135.52) and (182.64,142.33) .. (182.64,150.74) .. controls (182.64,159.14) and (175.83,165.96) .. (167.42,165.96) .. controls (159.01,165.96) and (152.2,159.14) .. (152.2,150.74) -- cycle ;
\draw   (247.2,63.38) .. controls (247.2,54.97) and (254.01,48.16) .. (262.42,48.16) .. controls (270.83,48.16) and (277.64,54.97) .. (277.64,63.38) .. controls (277.64,71.78) and (270.83,78.6) .. (262.42,78.6) .. controls (254.01,78.6) and (247.2,71.78) .. (247.2,63.38) -- cycle ;
\draw   (247.2,105.74) .. controls (247.2,97.33) and (254.01,90.52) .. (262.42,90.52) .. controls (270.83,90.52) and (277.64,97.33) .. (277.64,105.74) .. controls (277.64,114.14) and (270.83,120.96) .. (262.42,120.96) .. controls (254.01,120.96) and (247.2,114.14) .. (247.2,105.74) -- cycle ;
\draw   (247.2,150.74) .. controls (247.2,142.33) and (254.01,135.52) .. (262.42,135.52) .. controls (270.83,135.52) and (277.64,142.33) .. (277.64,150.74) .. controls (277.64,159.14) and (270.83,165.96) .. (262.42,165.96) .. controls (254.01,165.96) and (247.2,159.14) .. (247.2,150.74) -- cycle ;
\draw  [dash pattern={on 4.5pt off 4.5pt}]  (58.2,63.62) .. controls (28.06,63.99) and (25.11,103.55) .. (56.25,105.88) ;
\draw [shift={(58.2,105.98)}, rotate = 181.38] [fill={rgb, 255:red, 0; green, 0; blue, 0 }  ][line width=0.08]  [draw opacity=0] (12,-3) -- (0,0) -- (12,3) -- cycle    ;
\draw    (88.64,105.98) -- (150.2,105.74) ;
\draw [shift={(152.2,105.74)}, rotate = 179.78] [fill={rgb, 255:red, 0; green, 0; blue, 0 }  ][line width=0.08]  [draw opacity=0] (12,-3) -- (0,0) -- (12,3) -- cycle    ;
\draw  [dash pattern={on 4.5pt off 4.5pt}]  (152.2,105.74) .. controls (114.42,106.78) and (116.18,150.65) .. (148.73,151.21) ;
\draw [shift={(150.24,151.2)}, rotate = 178.65] [fill={rgb, 255:red, 0; green, 0; blue, 0 }  ][line width=0.08]  [draw opacity=0] (12,-3) -- (0,0) -- (12,3) -- cycle    ;
\draw    (182.64,150.74) -- (245.2,150.74) ;
\draw [shift={(247.2,150.74)}, rotate = 180] [fill={rgb, 255:red, 0; green, 0; blue, 0 }  ][line width=0.08]  [draw opacity=0] (12,-3) -- (0,0) -- (12,3) -- cycle    ;
\draw  [dash pattern={on 4.5pt off 4.5pt}]  (247.2,150.74) .. controls (181.26,152.15) and (187.8,65.15) .. (245.44,63.4) ;
\draw [shift={(247.2,63.38)}, rotate = 180] [fill={rgb, 255:red, 0; green, 0; blue, 0 }  ][line width=0.08]  [draw opacity=0] (12,-3) -- (0,0) -- (12,3) -- cycle    ;
\draw  [dash pattern={on 4.5pt off 4.5pt}]  (247.2,63.38) .. controls (210.08,43.5) and (251.27,15.42) .. (261.95,46.69) ;
\draw [shift={(262.42,48.16)}, rotate = 253.62] [fill={rgb, 255:red, 0; green, 0; blue, 0 }  ][line width=0.08]  [draw opacity=0] (12,-3) -- (0,0) -- (12,3) -- cycle    ;
\draw  [dash pattern={on 0.84pt off 2.51pt}] (48.64,38.54) -- (95.84,38.54) -- (95.84,173.6) -- (48.64,173.6) -- cycle ;
\draw  [dash pattern={on 0.84pt off 2.51pt}] (143.82,38.54) -- (191.02,38.54) -- (191.02,173.6) -- (143.82,173.6) -- cycle ;
\draw  [dash pattern={on 0.84pt off 2.51pt}] (238.82,38.54) -- (286.02,38.54) -- (286.02,173.6) -- (238.82,173.6) -- cycle ;
\draw  (262.42,48.16) .. controls (237.84,29.2) and (207.84,27.6) .. (168.64,28) .. controls (130.22,28.39) and (99.09,31.47) .. (74.89,47.41) ;
\draw [shift={(73.42,48.4)}, rotate = 325.45] [fill={rgb, 255:red, 0; green, 0; blue, 0 }  ][line width=0.08]  [draw opacity=0] (12,-3) -- (0,0) -- (12,3) -- cycle    ;

\draw (61.11,56.64) node [anchor=north west][inner sep=0.75pt]  [font=\scriptsize]  {$\pi ( 1)$};
\draw (61.11,99) node [anchor=north west][inner sep=0.75pt]  [font=\scriptsize]  {$\pi ( 2)$};
\draw (61.11,144) node [anchor=north west][inner sep=0.75pt]  [font=\scriptsize]  {$\pi ( 3)$};
\draw (155.11,56.4) node [anchor=north west][inner sep=0.75pt]  [font=\scriptsize]  {$\pi ( 4)$};
\draw (155.11,98.76) node [anchor=north west][inner sep=0.75pt]  [font=\scriptsize]  {$\pi ( 5)$};
\draw (155.11,143.76) node [anchor=north west][inner sep=0.75pt]  [font=\scriptsize]  {$\pi ( 6)$};
\draw (250.11,56.4) node [anchor=north west][inner sep=0.75pt]  [font=\scriptsize]  {$\pi ( 7)$};
\draw (250.11,98.76) node [anchor=north west][inner sep=0.75pt]  [font=\scriptsize]  {$\pi ( 8)$};
\draw (250.11,143.76) node [anchor=north west][inner sep=0.75pt]  [font=\scriptsize]  {$\pi ( 9)$};
\draw (60.4,176.8) node [anchor=north west][inner sep=0.75pt]    {$\mathcal{V}_{1}^{\pi }$};
\draw (153.6,176.8) node [anchor=north west][inner sep=0.75pt]    {$\mathcal{V}_{2}^{\pi }$};
\draw (250.8,176.8) node [anchor=north west][inner sep=0.75pt]    {$\mathcal{V}_{3}^{\pi }$};

\end{tikzpicture}
 \caption{}
\end{subfigure}
\caption{The graph associated to a partition $\mathcal{V}^{\pi}$ with
examples of cycles starting at $\pi(1)$. 
Solid lines represent the transitions between partition sets, and can only be horizontal. 
Dashed lines represent transitions within partitions and add a factor $\langle v_i, v_j \rangle$ to a product such as \eqref{eqn:appendix_inner_product_vectors}. 
(a) An example of a representation of a nonvanishing cycle $\gamma = \pi(1, 2, 5, 6, 6, 5, 2, 1)$. 
(b) Example of a cycle $\delta$ that will not appear in $F(k)$.
Indeed, there are edges within partitions that are visited only once. Hence, the product in \eqref{eqn:appendix_inner_product_vectors} associated to the cycle will have expectation zero over $\Theta$.}
\label{fig:cycle_representation}
\end{figure}
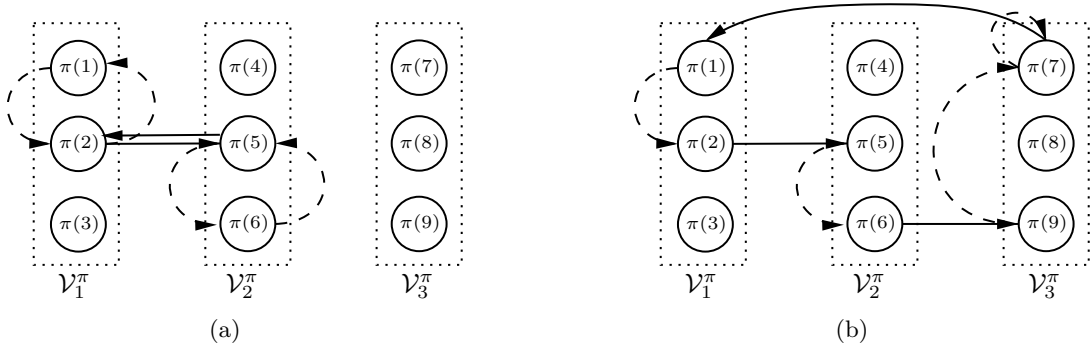

\begin{definition}
For a permutation $\pi$, a valid \emph{cycle} $\gamma$ (in our context) is a concatenation of $2k$ indices in $[m]$ up to permutation that generate a cycle in the graph induced by the partition $\mathcal{V}^{\pi}$, and such that 
\begin{itemize}
\item[(i)] Restricted to each partition set $\mathcal{V}^{\pi}_t$ for $t \in [T]$, the cycle is union of disjoint cycles.
\item[(ii)] Transitions of the cycle $\gamma$ between partition sets only occur between indices, $i,j \in [m]$ such that $\pi^{-1}(i) \equiv  \pi^{-1}(j) \mod r$.
\end{itemize}
\label{def:cycle}
\end{definition}
\noindent
If $\gamma$ is a valid cycle for $\pi$, the expectation over $\Theta$ of its associated factor
\begin{equation}
V_{\gamma}(\Theta) = \langle v_{\gamma_1}(\Theta), v_{\gamma_2}(\Theta) \rangle \cdots \langle v_{\gamma_k}(\Theta), v_{\gamma_{k+1}}(\Theta) \rangle,
\label{eqn:appendix_V_gamma}
\end{equation}
does not vanish because of Definition~\ref{def:cycle} (i). That is, $V_{\gamma}(\Theta) = V_{\gamma}$ is independent of $\Theta$.
Moreover, the transitions between partitions of $\gamma$ will constrain the matrix product in \eqref{eqn:def_F_k_pi} that generated $V_{\gamma}$.

\begin{figure}[!hbt]
\centering

\tikzset{every picture/.style={line width=0.75pt}} %

\begin{tikzpicture}[x=0.75pt,y=0.75pt,yscale=-1,xscale=1]
\draw   (38.2,43.62) .. controls (38.2,35.21) and (45.01,28.4) .. (53.42,28.4) .. controls (61.83,28.4) and (68.64,35.21) .. (68.64,43.62) .. controls (68.64,52.03) and (61.83,58.84) .. (53.42,58.84) .. controls (45.01,58.84) and (38.2,52.03) .. (38.2,43.62) -- cycle ;
\draw   (38.2,85.98) .. controls (38.2,77.57) and (45.01,70.76) .. (53.42,70.76) .. controls (61.83,70.76) and (68.64,77.57) .. (68.64,85.98) .. controls (68.64,94.38) and (61.83,101.2) .. (53.42,101.2) .. controls (45.01,101.2) and (38.2,94.38) .. (38.2,85.98) -- cycle ;
\draw   (38.2,130.98) .. controls (38.2,122.57) and (45.01,115.76) .. (53.42,115.76) .. controls (61.83,115.76) and (68.64,122.57) .. (68.64,130.98) .. controls (68.64,139.38) and (61.83,146.2) .. (53.42,146.2) .. controls (45.01,146.2) and (38.2,139.38) .. (38.2,130.98) -- cycle ;
\draw   (132.2,43.38) .. controls (132.2,34.97) and (139.01,28.16) .. (147.42,28.16) .. controls (155.83,28.16) and (162.64,34.97) .. (162.64,43.38) .. controls (162.64,51.78) and (155.83,58.6) .. (147.42,58.6) .. controls (139.01,58.6) and (132.2,51.78) .. (132.2,43.38) -- cycle ;
\draw   (132.2,85.74) .. controls (132.2,77.33) and (139.01,70.52) .. (147.42,70.52) .. controls (155.83,70.52) and (162.64,77.33) .. (162.64,85.74) .. controls (162.64,94.14) and (155.83,100.96) .. (147.42,100.96) .. controls (139.01,100.96) and (132.2,94.14) .. (132.2,85.74) -- cycle ;
\draw   (132.2,130.74) .. controls (132.2,122.33) and (139.01,115.52) .. (147.42,115.52) .. controls (155.83,115.52) and (162.64,122.33) .. (162.64,130.74) .. controls (162.64,139.14) and (155.83,145.96) .. (147.42,145.96) .. controls (139.01,145.96) and (132.2,139.14) .. (132.2,130.74) -- cycle ;
\draw   (227.2,43.38) .. controls (227.2,34.97) and (234.01,28.16) .. (242.42,28.16) .. controls (250.83,28.16) and (257.64,34.97) .. (257.64,43.38) .. controls (257.64,51.78) and (250.83,58.6) .. (242.42,58.6) .. controls (234.01,58.6) and (227.2,51.78) .. (227.2,43.38) -- cycle ;
\draw   (227.2,85.74) .. controls (227.2,77.33) and (234.01,70.52) .. (242.42,70.52) .. controls (250.83,70.52) and (257.64,77.33) .. (257.64,85.74) .. controls (257.64,94.14) and (250.83,100.96) .. (242.42,100.96) .. controls (234.01,100.96) and (227.2,94.14) .. (227.2,85.74) -- cycle ;
\draw   (227.2,130.74) .. controls (227.2,122.33) and (234.01,115.52) .. (242.42,115.52) .. controls (250.83,115.52) and (257.64,122.33) .. (257.64,130.74) .. controls (257.64,139.14) and (250.83,145.96) .. (242.42,145.96) .. controls (234.01,145.96) and (227.2,139.14) .. (227.2,130.74) -- cycle ;
\draw  [dash pattern={on 4.5pt off 4.5pt}]  (38.2,43.62) .. controls (8.06,43.99) and (5.11,83.55) .. (36.25,85.88) ;
\draw [shift={(38.2,85.98)}, rotate = 181.38] [fill={rgb, 255:red, 0; green, 0; blue, 0 }  ][line width=0.08]  [draw opacity=0] (12,-3) -- (0,0) -- (12,3) -- cycle    ;
\draw    (68.64,85.98) -- (130.2,85.74) ;
\draw [shift={(132.2,85.74)}, rotate = 179.78] [fill={rgb, 255:red, 0; green, 0; blue, 0 }  ][line width=0.08]  [draw opacity=0] (12,-3) -- (0,0) -- (12,3) -- cycle    ;
\draw  [dash pattern={on 4.5pt off 4.5pt}]  (132.2,85.74) .. controls (94.61,86.78) and (103.04,115.23) .. (129.27,123.89) ;
\draw [shift={(130.9,124.4)}, rotate = 196.22] [fill={rgb, 255:red, 0; green, 0; blue, 0 }  ][line width=0.08]  [draw opacity=0] (12,-3) -- (0,0) -- (12,3) -- cycle    ;
\draw  [dash pattern={on 0.84pt off 2.51pt}] (28.64,19.2) -- (75.84,19.2) -- (75.84,153.6) -- (28.64,153.6) -- cycle ;
\draw  [dash pattern={on 0.84pt off 2.51pt}] (123.82,18.54) -- (171.02,18.54) -- (171.02,152.94) -- (123.82,152.94) -- cycle ;
\draw  [dash pattern={on 0.84pt off 2.51pt}] (218.82,18.54) -- (266.02,18.54) -- (266.02,152.94) -- (218.82,152.94) -- cycle ;
\draw    (131.9,82.9) -- (67.9,83.38) ;
\draw [shift={(65.9,83.4)}, rotate = 359.57] [fill={rgb, 255:red, 0; green, 0; blue, 0 }  ][line width=0.08]  [draw opacity=0] (12,-3) -- (0,0) -- (12,3) -- cycle    ;
\draw  [dash pattern={on 4.5pt off 4.5pt}]  (162.64,130.74) .. controls (200.42,129.69) and (196.76,86.27) .. (164.15,85.73) ;
\draw [shift={(162.64,85.74)}, rotate = 358.65] [fill={rgb, 255:red, 0; green, 0; blue, 0 }  ][line width=0.08]  [draw opacity=0] (12,-3) -- (0,0) -- (12,3) -- cycle    ;
\draw  [dash pattern={on 4.5pt off 4.5pt}]  (68.64,85.98) .. controls (106.42,84.93) and (102.76,41.51) .. (70.15,40.97) ;
\draw [shift={(68.64,40.98)}, rotate = 358.65] [fill={rgb, 255:red, 0; green, 0; blue, 0 }  ][line width=0.08]  [draw opacity=0] (12,-3) -- (0,0) -- (12,3) -- cycle    ;
\draw    (132.9,127.9) -- (68.9,128.38) ;
\draw [shift={(66.9,128.4)}, rotate = 359.57] [fill={rgb, 255:red, 0; green, 0; blue, 0 }  ][line width=0.08]  [draw opacity=0] (12,-3) -- (0,0) -- (12,3) -- cycle    ;
\draw    (68.64,130.98) -- (130.2,130.74) ;
\draw [shift={(132.2,130.74)}, rotate = 179.78] [fill={rgb, 255:red, 0; green, 0; blue, 0 }  ][line width=0.08]  [draw opacity=0] (12,-3) -- (0,0) -- (12,3) -- cycle    ;
\draw  [dash pattern={on 4.5pt off 4.5pt}]  (53.42,115.76) .. controls (46.01,86.84) and (4.67,117.74) .. (36.69,130.41) ;
\draw [shift={(38.2,130.98)}, rotate = 199.43] [fill={rgb, 255:red, 0; green, 0; blue, 0 }  ][line width=0.08]  [draw opacity=0] (12,-3) -- (0,0) -- (12,3) -- cycle    ;

\draw (41.11,36.64) node [anchor=north west][inner sep=0.75pt]  [font=\scriptsize]  {$\pi ( 1)$};
\draw (41.11,79) node [anchor=north west][inner sep=0.75pt]  [font=\scriptsize]  {$\pi ( 2)$};
\draw (41.11,124) node [anchor=north west][inner sep=0.75pt]  [font=\scriptsize]  {$\pi ( 3)$};
\draw (135.11,36.4) node [anchor=north west][inner sep=0.75pt]  [font=\scriptsize]  {$\pi ( 4)$};
\draw (135.11,78.76) node [anchor=north west][inner sep=0.75pt]  [font=\scriptsize]  {$\pi ( 5)$};
\draw (135.11,123.76) node [anchor=north west][inner sep=0.75pt]  [font=\scriptsize]  {$\pi ( 6)$};
\draw (230.11,36.4) node [anchor=north west][inner sep=0.75pt]  [font=\scriptsize]  {$\pi ( 7)$};
\draw (230.11,78.76) node [anchor=north west][inner sep=0.75pt]  [font=\scriptsize]  {$\pi ( 8)$};
\draw (230.11,123.76) node [anchor=north west][inner sep=0.75pt]  [font=\scriptsize]  {$\pi ( 9)$};
\draw (40.4,156.8) node [anchor=north west][inner sep=0.75pt]    {$\mathcal{V}_{1}^{\pi }$};
\draw (133.6,156.8) node [anchor=north west][inner sep=0.75pt]    {$\mathcal{V}_{2}^{\pi }$};
\draw (230.8,156.8) node [anchor=north west][inner sep=0.75pt]    {$\mathcal{V}_{3}^{\pi }$};

\end{tikzpicture}
\caption{For the permutation $\pi$, the depicted cycle appears in the expansion of the trace of (omitting $\pi$ and $\Theta$) $\mathcal{L}_1^2 \mathcal{L}_2 \mathcal{L}_1 \mathcal{L}_2$, but not in the trace of $\mathcal{L}_1^3 \mathcal{L}_2^2$. Indeed, there must be 4 transitions between partition sets.}
\label{fig:cycle_representation_2}
\end{figure}
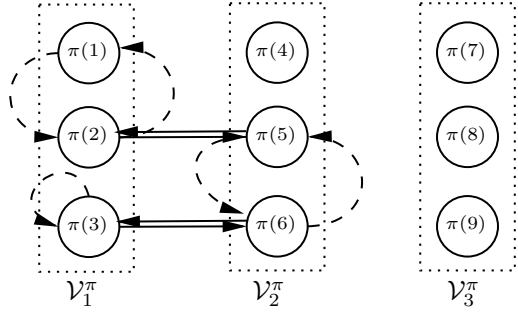
The order of the product elements in a multinomial product of matrices in \eqref{eqn:def_F_k} will determine the constraints of the cycles---the transition between partitions---that can appear from this product.
For a fixed permutation $\pi$, depending on the matrix multinomials in the expansion of \eqref{eqn:def_F_k_pi} and how `commutative' they are, a cycle $\gamma$ may not be in the trace of that product.
See Figure~\ref{fig:cycle_representation_2} for an example.
Alternatively, if the permutation $\pi$ is random, the more a matrix product is `mixed', the larger amount of constraints imposes to the cycles appearing in its trace, and so generally less cycles will appear.
We will thus need to track the exponents of the matrices in the expansion of $F(k)$ to determine how often a cycle appears for a (random) permutation.

\subsection{Tracking factors and their chance of occurring}
\label{secappendix:tracking}

We examine the noncommutative expansion of $F(k)$, and we track the exponents depending on the matrix indices.
We define the set $\mathfrak{L}(k)$ of possible indices for the $k$ different matrices appearing on each term in the expansion of \eqref{eqn:def_F_k_pi}. 
Concretely, the set of \textsf{exp}-indices is defined by
\begin{equation}
\mathfrak{L}(k) = [T]^{k}.
\end{equation}
and each $\ell = (\ell_1, \ldots, \ell_{k}) \in \mathfrak{L}(k)$ will track the matrix indices in the expansion of
\begin{equation}
\expectationBigWrt{\trBig{\prod_{j=1}^{k} \mathcal{L}_{\ell_j}^{\pi}(\Theta)}}{\Theta}.
\label{eqn:def_F_pi_intro}
\end{equation}
For $\ell \in \mathfrak{L}(k)$, we will denote the set of \emph{different} indices appearing in $\ell$ by $N(\ell) = \cup_{i=1}^{k} \{ \ell_i \}$, and $n(\ell) = |N(\ell)|$ its amount.
Recall \eqref{eqn:appendix_V_gamma}. 
Given $\ell \in \mathfrak{L}(k)$, for a fixed permutation $\pi \in S_m$, we will denote the set of factors in \eqref{eqn:def_F_pi_intro} as
\begin{align}
B(\ell, \pi) &= \Bigl\{ V_{\gamma} ~\Big|~  \text{ for some valid } \gamma,  V_{\gamma} \text{ appears in } \expectationbigWrt{\tr{\prod_{j=1}^{k} \mathcal{L}_{\ell_j}^{\pi}(\Theta)}}{\Theta} \Bigr\}, \quad \text{ and define } \nonumber \\
 B(\ell) &= \cup_{\pi \in S_m} B(\ell, \pi),
\label{eqn:def_B_set}
\end{align}
as well as the set
\begin{equation}
\mathcal{G}(k)= \bigsqcup_{\ell \in \mathfrak{L}(k)} \{ \ell, B(\ell) \}.
\label{eqn:def_mathcal_G}
\end{equation}
\begin{definition}
We will say that a factor $V$ \emph{appears} in $\Gamma = \{ \ell, B(\ell) \} \in \mathcal{G}(k)$ for a permutation $\pi \in S_m$, and denote so by $V \in B(\Gamma, \pi)$, if $V \in B(\ell, \pi)$.
We will also simply say that a factor $V$ appears in $\Gamma$ if $V \in B(\Gamma)$.
\label{def:L_index_set}
\end{definition}

We denote the set of different indices in $\Gamma = \{\ell, B(\ell)\} \in \mathcal{G}(k)$ and their number by $N(\Gamma) = N(\ell)$, and $n(\Gamma) = n(\ell)$, respectively.
Similarly, we will denote the vectors appearing in a factor $V \in B(\Gamma)$ by $N(V)$, and their number by $n(V)$.
If a factor $V$ appears in $\Gamma$, the probability that $V$ appears for a random permutation $\pi \in S_{m}$ is positive, and will depend on the constraints imposed by $\Gamma$, and its valid cycles.

With the notation of Definition~\ref{def:L_index_set}, if $V \in B(\Gamma, \pi)$ for some permutation $\pi$, each index in $N(V) \subset [m]$ must belong to only one of the partition sets from \eqref{eqn:def_partitions_permuted} with label in $N(\Gamma) \subset [T]$.
Moreover, in this case a cycle $\gamma$ such that $V_{\gamma} = V$ will satisfy the constraints that come from the order in the multiplication of the matrices $\{ L^{\pi}_{t}\}_{t \in [T]}$ associated to $\Gamma$, encoded by the \textsf{exp}-index of $\Gamma$. 
With this information, in the following Lemma~\ref{lem:constraints_cycle} we obtain necessary conditions for $\gamma$ to satisfy $V_{\gamma} \in B(\Gamma, \pi)$, which we will use to bound the chance that this happens when $\pi$ is random later on.
\begin{lemma}
For a permutation $\pi \in S_m$ and $\Gamma \in \mathcal{G}(k)$, if a cycle $\gamma$ satisfies $V_{\gamma} \in B(\Gamma, \pi)$, then the following conditions hold.
\begin{itemize}
\item[\emph{(\textsf{A})}] The cycle $\gamma$ restricted to each partition set of $\mathcal{V}^{\pi}$ in \eqref{eqn:def_partitions_permuted}, will be a disjoint union of cycles (or the emptyset). 
Only partitions with index in $N(\Gamma)$ may have nonempty intersection with $\gamma$, and any index in $N(V_{\gamma}) \cap \mathcal{V}_t^{\pi}$ for $t \in N(\Gamma)$ will appear as many times in the primal as in the dual arguments of the inner products of $V_{\gamma}$.
\item[\emph{(\textsf{B})}] The cycle $\gamma$ changes partitions in the graph associated to the permutation $\pi$ only between indices $i,j \in N(V_{\gamma})$ that satisfy $\pi^{-1}(i) \equiv \pi^{-1}(j) \mod r$, that is, if $i \in \mathcal{V}_t^{\pi}$, $j \in \mathcal{V}_{t^{\prime}}^{\pi}$, both indices appear in the same relative position within their respective partitions $t$, and $t^{\prime}$.
\end{itemize}
\label{lem:constraints_cycle}
\end{lemma}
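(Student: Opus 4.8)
The plan is to unpack the definition of $V_{\gamma}\in B(\Gamma,\pi)$ and to read off both conditions directly from the algebra of the noncommutative trace expansion together with the vanishing of the uniform--phase moments.

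First I would fix the \textsf{exp}-index $\ell=(\ell_1,\dots,\ell_k)\in\mathfrak{L}(k)$ attached to $\Gamma$ and expand, with the cyclic convention $i_{k+1}=i_1$,
\[
\expectationBigWrt{\trbig{\textstyle\prod_{j=1}^{k}\mathcal{L}_{\ell_j}^{\pi}(\Theta)}}{\Theta}
=\sum_{i_1,\dots,i_k\in[r]}\ \expectationBigWrt{\textstyle\prod_{j=1}^{k}\bigl(\mathcal{L}_{\ell_j}^{\pi}(\Theta)\bigr)_{i_j i_{j+1}}}{\Theta}.
\]
By \eqref{eqn:appendix_L_pi_theta_t} the $j$-th factor equals $\e{\ci(\theta_{b_j}-\theta_{a_j})}\braket{v_{a_j}\mid v_{b_j}}$ with $a_j=\pi(r(\ell_j-1)+i_j)$ and $b_j=\pi(r(\ell_j-1)+i_{j+1})$, so each surviving monomial corresponds to a choice of $(i_1,\dots,i_k)$ whose associated cycle $\gamma$ visits the $2k$ indices $a_1,b_1,a_2,b_2,\dots,a_k,b_k$ in the graph $K_r\,\Box\,K_T$ and carries the factor $V_{\gamma}$ of \eqref{eqn:appendix_V_gamma}. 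Since $a_j,b_j\in\mathcal{V}^{\pi}_{\ell_j}$ and $\ell_j\in N(\ell)=N(\Gamma)$, only partition sets indexed by $N(\Gamma)$ are met by $\gamma$, the others being disjoint from it; moreover the step joining the $j$-th and $(j{+}1)$-th inner products goes from $b_j$ to $a_{j+1}$, whose $\pi$-preimages $r(\ell_j-1)+i_{j+1}$ and $r(\ell_{j+1}-1)+i_{j+1}$ are congruent modulo $r$. This is exactly condition $(\textsf{B})$.

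Next I would use that the $\theta_\ell$ are i.i.d.\ uniform, so $\expectationbigWrt{\prod_{j}\e{\ci(\theta_{b_j}-\theta_{a_j})}}{\Theta}\neq 0$ only if the net exponent of every $\theta_\ell$ vanishes, i.e.\ $\#\{j:b_j=\ell\}=\#\{j:a_j=\ell\}$ for all $\ell\in[m]$; restricting this to $\ell\in\mathcal{V}^{\pi}_t$ with $t\in N(\Gamma)$ yields the primal/dual--balance clause of $(\textsf{A})$. To obtain the ``disjoint union of cycles'' clause, for each $t\in N(\Gamma)$ I would form the directed multigraph on $\mathcal{V}^{\pi}_t$ with one edge $a_j\to b_j$ per inner-product factor of $V_{\gamma}$ having $\ell_j=t$; by the balance just established every vertex has equal in- and out-degree, and the standard Eulerian decomposition then writes this edge set as a disjoint union of directed cycles, which is precisely the restriction of $\gamma$ to $\mathcal{V}^{\pi}_t$.

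The main obstacle I expect is the bookkeeping needed to make the correspondence ``surviving monomial $\longleftrightarrow$ cycle $\gamma$'' well defined modulo the equivalences already built into the definitions (cyclic rotation of the trace, the ``up to permutation'' relabelling of $\gamma$, and the freedom in which between-partition edges are traversed), and in particular verifying that the within-partition edge multigraphs are genuinely balanced once the interleaved between-partition edges of the closed walk are accounted for. Once this is in place the remaining steps are a routine unwinding of the definitions of $B(\ell,\pi)$, $\mathcal{G}(k)$, and of a valid cycle in Definition~\ref{def:cycle}.
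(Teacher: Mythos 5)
Your proposal is correct and follows essentially the same route as the paper: condition (\textsf{B}) is read off from the matrix-multiplication rule (shared column/row index $i_{j+1}$ forces the two $\pi$-preimages to agree modulo $r$), and condition (\textsf{A}) comes from the vanishing of the uniform-phase moments forcing primal/dual balance, with the disjoint-cycle structure on each partition set then following from the validity of $\gamma$ (Definition~\ref{def:cycle}), which you re-derive explicitly via the Eulerian decomposition of the balanced within-partition multigraph. Your write-up is more detailed than the paper's terse argument but contains no gaps.
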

\begin{proof}
For a fixed $\pi$ and $\Gamma \in \mathcal{G}(k)$, condition (\textsf{A}) is implied by the cycle being valid, and not vanishing in $F(k, \pi)$.
Thus, in each partition set $\mathcal{V}_t^{\pi}$ the condition must also hold.

For condition (\textsf{B}), note that transitions between partitions correspond to changes of matrix index $l \in N(\Gamma)$ as they appear in the noncommutative product corresponding to $\prod_{j=1}^{k} \mathcal{L}_{\ell_j}^{\pi}(\Theta)$ for $\Gamma = \{ \ell, N(\ell)\}$. 
Indices of $\gamma$ will change partitions in the graph associated to $\pi$ if their corresponding vectors belong to different matrices that are consecutive in the matrix product. 
For two indices $i,j \in N(V_{\gamma})$ that change between partitions in the graph, from the multiplication rule of matrices, the locations of these indices in the matrices share row and column number. 
This implies that the vectors corresponding to these indices are in the same position within their respective partitions.
If $i= \pi(d_i)$, and $j=\pi(d_j)$, for some $d_i, d_j \in [m]$, then $d_i$ must be the same as $d_j$ modulo $r$.
\end{proof}

The rules in Lemma~\ref{lem:constraints_cycle} yield necessary conditions for $\gamma$ to satisfy $V_{\gamma} \in B(\Gamma, \pi)$.
For example, if a product $\langle v_{i_1}, v_{i_2} \rangle \langle v_{i_3}, v_{i_4} \rangle$ appears in $V$ and $i_2$ and $i_3$ belong to different partition sets in the graph of $\gamma$, then the factor $\langle v_{i_4}, v_{i_1}\rangle$ cannot appear in $V_{\gamma}$ by condition (\textsf{A}).
In the following lemmas, we use the previous necessary conditions to bound the probability that a factor $V$ appears for a random permutation $\pi$ and $\Gamma \in \mathcal{G}(k)$.
We take especial interest in the contributions by the diagonal terms of the matrices $\mathcal{L}_t^{\pi}(\Theta)$---factors $V$ that contain only norms---in the expansion of $F(k)$; they will become leading terms later on.
\begin{lemma}
Suppose that $V$ appears in $\Gamma$ and is such that $n(V) = n(\Gamma)$. 
Then $V$ is composed only of diagonal elements of $\{\mathcal{L}_t^{\pi}(\Theta) \}_{i\in [T]}$ and thus $V = \prod_{i \in N(V)} |v_i|^{2k_i}$ where $\mathbf{k} = (k_1, \ldots, k_{n(\Gamma)})$ are the multiplicities of the vectors with indices in $N(V)$.
\label{lem:diagonal_elements}
\end{lemma}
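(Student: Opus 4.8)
The plan is to unwind what ``$V$ appears in $\Gamma$'' means and then pit the counts $n(V)$ and $n(\Gamma)$ against one another. By \refDefinition{def:L_index_set}, $V \in B(\Gamma)$ means $V = V_{\gamma}$ for some permutation $\pi \in S_m$ and some valid cycle $\gamma$ whose factor appears in $\expectationbigWrt{\trbig{\prod_{j=1}^{k}\mathcal{L}_{\ell_j}^{\pi}(\Theta)}}{\Theta}$, where $\Gamma = \{\ell, B(\ell)\}$; I would fix one such $\pi$ and $\gamma$. Expanding that trace as a sum of products of matrix entries, the term that produces $V_{\gamma}$ draws exactly one inner-product factor from each of the $k$ matrices $\mathcal{L}_{\ell_1}^{\pi}(\Theta), \ldots, \mathcal{L}_{\ell_k}^{\pi}(\Theta)$, and by the entry formula \eqref{eqn:appendix_L_pi_theta_t} the two arguments of the factor contributed by $\mathcal{L}_{t}^{\pi}(\Theta)$ both lie in the partition set $\mathcal{V}_t^{\pi}$. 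Hence every vector index appearing in $V$ lies in $\bigcup_{t \in N(\Gamma)} \mathcal{V}_t^{\pi}$ (cf.\ \refLemma{lem:constraints_cycle}(\textsf{A})); moreover, since the expanded product uses every matrix index in $N(\ell) = N(\Gamma)$ at least once, for each $t \in N(\Gamma)$ the set $S_t := N(V) \cap \mathcal{V}_t^{\pi}$ is nonempty.

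Next I would run the pigeonhole step. The partition sets $\{\mathcal{V}_t^{\pi}\}_{t\in[T]}$ being pairwise disjoint, $N(V) = \bigsqcup_{t \in N(\Gamma)} S_t$ is a disjoint union of nonempty sets, so $n(V) = \sum_{t \in N(\Gamma)} |S_t| \geq |N(\Gamma)| = n(\Gamma)$, with equality iff every $S_t$ is a singleton. The hypothesis $n(V) = n(\Gamma)$ therefore forces $S_t = \{w_t\}$ for each $t \in N(\Gamma)$, so every factor of $V$ coming from a copy of $\mathcal{L}_t^{\pi}(\Theta)$ has both arguments equal to $v_{w_t}$, i.e.\ equals $\langle v_{w_t} \mid v_{w_t}\rangle = |v_{w_t}|^2$, a diagonal entry of $\mathcal{L}_t^{\pi}(\Theta)$ --- which is the first claim. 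Grouping the $k$ factors of $V$ by the matrix index they originate from and letting $k_t$ denote the multiplicity of $t$ in $\ell$, this gives $V = \prod_{t \in N(\Gamma)} |v_{w_t}|^{2k_t}$. Since the map $t \mapsto w_t$ is injective (the $S_t$ are disjoint) and $|N(\Gamma)| = n(\Gamma) = n(V) = |N(V)|$, it is a bijection $N(\Gamma) \to N(V)$; relabelling $i = w_t$ and setting $k_i := k_t$ yields $V = \prod_{i \in N(V)} |v_i|^{2k_i}$ with $\mathbf{k} = (k_i)_{i \in N(V)}$ the multiplicities, as claimed.

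The only delicate point is the assertion that each matrix index of $N(\ell)$ forces $S_t \neq \emptyset$, since this is what makes $n(V) \geq n(\Gamma)$ tight exactly in the all-diagonal case. It holds because in the trace expansion every single term uses all $k$ matrices $\mathcal{L}_{\ell_1}^{\pi}, \ldots, \mathcal{L}_{\ell_k}^{\pi}$; hence for $t \in N(\Gamma)$ the matrix $\mathcal{L}_t^{\pi}$ occurs and contributes a factor whose two (possibly equal) arguments lie in $\mathcal{V}_t^{\pi}$. The remaining ingredients --- survival of $V_{\gamma}$ under the $\Theta$-average and compatibility with \refLemma{lem:constraints_cycle} --- are already in hand, so the argument reduces to a disjointness-plus-pigeonhole computation once the trace is expanded.
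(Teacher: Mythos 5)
Your proof is correct and follows essentially the same route as the paper's: the paper also argues that $n(V)=n(\Gamma)$ forces exactly one frame index per partition set in $N(\Gamma)$, which in turn forces every factor to be a diagonal entry $|v_{w_t}|^2$. Your version simply makes the disjointness-plus-pigeonhole step and the nonemptiness of each $S_t$ explicit, which the paper leaves implicit.
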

\begin{proof}
If $n(V) = n(\Gamma)$ for $\Gamma = \{\ell, N(\ell)\}$, the product $\prod_{i=1}^{k} \mathcal{L}_{\ell_i}^{\pi}(\Theta)$ that has $n(\Gamma)$ distinct matrices---not counting multiplicies---can only have a cycle $\gamma$ with $n(V)$ different indices if and only if for each index in $N(\Gamma)$, there is exactly one index in $N(V_{\gamma})$. 
This happens only when diagonal elements of $\mathcal{L}_t^{\pi}(\Theta)$ appear in $V_{\gamma}$. 
Therefore, the identity $V = V_{\gamma} = \prod_{i \in N(V)} |v_i|^{2k_i}$ must hold for some $\gamma$, where $\mathbf{k} = (k_1, \ldots, k_{n(\Gamma)})$ are the multiplicities of the indices $t \in N(\Gamma)$ in $\ell$, as well as of $N(V)$.
\end{proof}

\begin{lemma}\emph{(Adapted from \cite[Sec. 1.9]{stanley2011enumerative})}
Let $d \geq n$. The number of ways to partition $d$ labeled items into $n$ labeled classes such that no class is empty is given by
\begin{equation}
n! \stirling{d}{n} = \sum_{j=0}^{n-1} (-1)^{j} (n - j)^{d} \binom{d}{j},
\end{equation}
where $\stirling{d}{n}$ is the Stirling number of the second kind that satisfies
$\stirling{d}{d} = \stirling{d}{1} = 1$, and $\stirling{d}{n} \leq \frac{n^{d}}{n!}$.
\label{lem:stirling}  
\end{lemma}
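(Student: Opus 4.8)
The plan is to identify the quantity being counted with the number of surjections $[d]\to[n]$, evaluate that number by inclusion--exclusion, and then read off both the boundary values and the inequality. First I would note that partitioning $d$ labeled items into $n$ labeled nonempty classes is the same datum as a surjective map $f\colon[d]\to[n]$, where $f(a)$ records the class of item $a$. A surjection $f$ determines the ordered tuple of its fibers $(f^{-1}(1),\dots,f^{-1}(n))$, i.e.\ a list of $n$ pairwise disjoint nonempty blocks covering $[d]$; forgetting the ordering yields an (unordered) set partition of $[d]$ into exactly $n$ nonempty blocks, and by definition there are $\stirling{d}{n}$ of these. Conversely, each such set partition is the fiber partition of exactly $n!$ surjections, one for each bijection between its $n$ blocks and the $n$ labels. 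Hence the number of surjections $[d]\to[n]$ equals $n!\,\stirling{d}{n}$.

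Next I would evaluate this number by inclusion--exclusion on the omitted values. There are $n^{d}$ maps $[d]\to[n]$ in total. For $i\in[n]$ let $A_i$ be the set of maps whose image omits $i$, so that a map is surjective exactly when it belongs to no $A_i$. For a subset $J\subseteq[n]$ with $|J|=j$, the maps avoiding every element of $J$ are precisely the maps into $[n]\setminus J$, of which there are $(n-j)^{d}$, and there are $\binom{n}{j}$ such subsets. Inclusion--exclusion then gives
\begin{equation}
n!\,\stirling{d}{n}=\sum_{j=0}^{n}(-1)^{j}\binom{n}{j}(n-j)^{d}.
\end{equation}
Since $d\ge n\ge 1$ we have $0^{d}=0$, so the $j=n$ term vanishes and the sum may be truncated at $j=n-1$, which is the asserted identity (with the binomial coefficient in its standard form $\binom{n}{j}$).

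Finally I would settle the two remaining assertions directly from the definition of $\stirling{d}{n}$ as the number of set partitions of $[d]$ into $n$ nonempty blocks: when $n=d$ the only admissible partition is into singletons, and when $n=1$ the only one is $\{[d]\}$, so $\stirling{d}{d}=\stirling{d}{1}=1$. For the bound, the surjections $[d]\to[n]$ form a subset of all maps $[d]\to[n]$, hence $n!\,\stirling{d}{n}\le n^{d}$, i.e.\ $\stirling{d}{n}\le n^{d}/n!$.

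There is no genuine obstacle here: this is the classical surjection/Stirling-number identity, reproduced for completeness (see \cite{stanley2011enumerative}). The only place calling for a little care is the bookkeeping in the inclusion--exclusion step, together with the observation that the vanishing $j=n$ term is precisely what legitimizes writing the sum only up to $j=n-1$.
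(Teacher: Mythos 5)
Your proof is correct and is exactly the classical surjection/inclusion--exclusion argument that the cited reference contains; the paper itself offers no proof beyond the citation, so there is nothing to compare beyond noting that your argument is the standard one. You were also right to flag that the binomial coefficient in the paper's display should read $\binom{n}{j}$ rather than $\binom{d}{j}$; the latter is a typo, and your derivation confirms the correct form.
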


A cycle $\gamma$ that satisfies condition (\textsf{A}), on each partition it is decomposed in smaller disjoint cycles $\delta_1, \ldots, \delta_{s}$ for some $s \geq 1$.
We can choose the decomposition such that each cycle cannot be further decomposed, and the indices of each subcycle belong to a single partition.
We denote the maximal number of disjoint cycles that decompose $\gamma$ by $d(\gamma)$.
This maximal decomposition is independent of partition set $\mathcal{V}^{\pi}$, and satisfies
\begin{equation}
V_{\gamma} = V_{\delta_1} \cdots V_{d(\gamma)}.
\label{eqn:maximal_decomposition}
\end{equation}
Note that the subcycles $\delta_1, \ldots, \delta_{d(\gamma)}$ do not keep track of the transitions between the partitions of $\gamma$; this is instead in $\Gamma$. 
Crucially, for any $\Gamma \in \mathcal{G}(k)$ such that condition (\textsf{A}) holds for some $\gamma$, we must then have $n(\Gamma) \leq d(\gamma) \leq n(V_{\gamma})$.

The number of maximal subcycles allows us to bound in the next lemma how many permutations $\pi$ allow the expansion term in \eqref{eqn:def_F_k_pi} corresponding to $\Gamma$ to have a factor $V$ appearing.
For the following Lemma, if $\mathbf{k}(\Gamma) = (k_1, \ldots, k_{n(\Gamma)})$ are the multiplicities of the \textsf{exp}-index $\ell$ of $\Gamma$, we let $z_{1}$ be the different number of times that $k_1$ appears in $\mathbf{k}(\Gamma)$, $z_2$ the number of times a different multiplicity than $k_1$ appears in $\mathbf{k}(\Gamma)$, and so on.
Denote the number different multiplicities of $\mathbf{k}(\Gamma)$ in this way by $Z(\Gamma)$.
The vector 
\begin{equation}
\mathbf{z}(\Gamma) = (z_1, \ldots, z_{Z(\Gamma)})
\label{eqn:def_z_Gamma}
\end{equation}
will contain the multiplicities of $\mathbf{k}(\Gamma)$, and
\begin{align}
\sum_{i= 1}^{Z(\Gamma)} z_i = n(\Gamma).
\end{align}
With slight abuse of notation, the vectors $\mathbf{k}(\Gamma)$, or $\mathbf{z}(\Gamma)$ will also be considered as lists whenever used in a multinomial number, e.g, $\binom{k}{\mathbf{k}(\Gamma)}$.
\begin{lemma}
Suppose that $\Gamma \in \mathcal{G}(k)$, and let $V \in B(\Gamma)$. 
If $\pi \sim \mathrm{Unif}(S_{m})$, then

\noindent
\emph{(i)} If $n(V) > n(\Gamma)$,
\begin{equation}
C_{V, \Gamma} = \probabilitybig{V \in B(\Gamma, \pi)} \leq \frac{n(\Gamma)! \stirling{d(V)}{n(\Gamma)} r^{n(V) - n(\Gamma) + 1}}{\binom{Tr}{n(V)} n(V)!}.
\end{equation}
\emph{(ii)} If $n(V) = n(\Gamma)$, let $\mathbf{z}(\Gamma)$ be the multiplicities of $\mathbf{k}(\Gamma)$ defined in \eqref{eqn:def_z_Gamma}, then
\begin{equation}
C_{V, \Gamma} = \probabilitybig{V \in B(\Gamma, \pi)} = \frac{r}{\binom{Tr}{n(V)} \binom{n(\Gamma)}{\mathbf{z}(\Gamma)}}.
\end{equation}
\label{lem:bound_prob_cycle}
\end{lemma}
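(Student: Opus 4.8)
The plan is to compute, for a uniformly random permutation $\pi \sim \mathrm{Unif}(S_m)$ with $m = Tr$, the probability of the event $\{V \in B(\Gamma,\pi)\}$ by combining the combinatorial constraints of Lemma~\ref{lem:constraints_cycle}. The key observation is that whether $V$ appears in $\Gamma$ for the permutation $\pi$ depends \emph{only} on how $\pi$ distributes the indices of $N(V)$ among the partition sets $\mathcal{V}_1^\pi, \ldots, \mathcal{V}_T^\pi$ and on the relative positions of those indices within their partition sets (conditions (\textsf{A}) and (\textsf{B})). Since $\pi$ is uniform, I would first condition on the unordered partition $\{\mathcal{V}_1^\pi, \ldots, \mathcal{V}_T^\pi\}$ and on the position-assignment modulo $r$, and count the favorable configurations. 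Concretely, there are $\binom{Tr}{n(V)}$ ways to choose which $n(V)$-element subset of $[m]$ is occupied by the support of $V$ (all equally likely for uniform $\pi$, after symmetrizing over which labels land where), so $1/\binom{Tr}{n(V)}$ is the baseline factor; then I count in how many of those choices the indices of $N(V)$ can be laid out to satisfy (\textsf{A}) and (\textsf{B}).

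For case (ii), where $n(V) = n(\Gamma)$, Lemma~\ref{lem:diagonal_elements} forces $V = \prod_{i \in N(V)} |v_i|^{2k_i}$ — each of the $n(\Gamma)$ distinct matrix indices of $\Gamma$ carries exactly one vector index of $V$. So the requirement is a bijection between $N(V)$ and $N(\Gamma)$ compatible with $\pi$: each chosen vector index must sit in the partition set with the matching label, and condition (\textsf{B}) forces all of these vector indices to occupy the \emph{same} relative position (row/column slot) within their respective partitions, giving the factor $r$ (choice of which of the $r$ slots). The multinomial $\binom{n(\Gamma)}{\mathbf{z}(\Gamma)}$ in the denominator accounts for the overcounting coming from matrix indices of $\Gamma$ with equal multiplicity: permuting vector indices among equal-multiplicity matrix slots yields the same factor $V$, so such bijections are identified. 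Putting the three factors together gives exactly
\[
C_{V,\Gamma} = \frac{r}{\binom{Tr}{n(V)}\binom{n(\Gamma)}{\mathbf{z}(\Gamma)}}.
\]
Here the equality (rather than an inequality) is justified because in this diagonal case the constraints are not merely necessary but also sufficient: any such compatible placement does produce the factor $V$.

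For case (i), where $n(V) > n(\Gamma)$, I would only establish an upper bound, using that conditions (\textsf{A}) and (\textsf{B}) are necessary. Fix the target subset of size $n(V)$: the factor $1/\binom{Tr}{n(V)}$ again normalizes. By (\textsf{A}), $\gamma$ restricted to the partition sets decomposes maximally into $d(V)$ subcycles, and the $n(V)$ vector indices must be distributed among the $n(\Gamma)$ active partition sets so that no active set is empty; by Lemma~\ref{lem:stirling} there are at most $n(\Gamma)!\stirling{d(V)}{n(\Gamma)}$ ways to assign the $d(V)$ subcycles to the $n(\Gamma)$ labeled partition sets with none empty (using that indices within a subcycle stay in one partition, so the assignment is really of subcycles, not of individual indices). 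Then the within-partition slot constraint from (\textsf{B}): each transition between partitions forces matching relative positions, and once one slot is pinned the remaining $n(V) - n(\Gamma)$ "extra" indices can be placed in at most $r$ slots each, but a careful bookkeeping of the cycle structure shows the slot freedom contributes at most $r^{\,n(V)-n(\Gamma)+1}$ (the $+1$ being the initial slot choice). Dividing by $n(V)!$ to pass from ordered to unordered index sets (or equivalently to match the normalization in $\binom{Tr}{n(V)}$) yields
\[
C_{V,\Gamma} \le \frac{n(\Gamma)!\stirling{d(V)}{n(\Gamma)}\, r^{\,n(V)-n(\Gamma)+1}}{\binom{Tr}{n(V)}\, n(V)!}.
\]

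I expect the main obstacle to be the careful verification of the slot-counting in case (i): making precise that the combined effect of condition (\textsf{B}) across all inter-partition transitions of the (maximally decomposed) cycle $\gamma$ limits the relative-position assignments to the claimed $r^{\,n(V)-n(\Gamma)+1}$ factor, rather than something larger like $r^{n(V)}$. The point is that transitions between partitions tie together the slots of the two endpoint indices, so the connectivity of $\gamma$ (as a graph on the partition sets, with $n(\Gamma)$ "super-nodes") propagates a single slot choice through a spanning structure, leaving only the indices not so constrained — bounded by $n(V)-n(\Gamma)$ of them after one initial free choice — genuinely free. I would handle this by setting up the bijection between valid placements and spanning-tree-plus-residual-choice data on the associated graph, then bounding crudely. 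The normalization subtleties (ordered vs. unordered, and exactly matching the $\binom{Tr}{n(V)}$ denominator) I would double-check by testing the formula on the trivial case $n(V) = n(\Gamma) = 1$, where both formulas should reduce to $1/(Tr)$ as expected.
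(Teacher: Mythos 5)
Your proposal is correct and follows essentially the same route as the paper: normalizing by the $\binom{Tr}{n(V)}\,n(V)!$ ordered placements, obtaining the factor $r\prod_i z_i!$ (equivalently $r/\binom{n(\Gamma)}{\mathbf{z}(\Gamma)}$) in the diagonal case, and in case (i) combining the Stirling-number count of subcycle-to-partition assignments with the connectivity argument that the inter-partition transitions pin down $n(\Gamma)-1$ of the slot choices, leaving at most $r^{n(V)-n(\Gamma)+1}$ free. One small slip in your planned sanity check: for $n(V)=n(\Gamma)=1$ the formula evaluates to $1/T$, not $1/(Tr)$, which is indeed the right answer since the single vector index only needs to land in the correct partition set (any of its $r$ slots).
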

\begin{proof}

We count first the number of ways that permutations of $[m]$ can rearrange the indices of $V$.
Out of $m = Tr$ possible total index locations, $n(V)$ must be chosen.
By symmetry, there are 
\begin{equation}
\frac{(Tr)!}{(Tr - n(V))!} = \binom{Tr}{n(V)} n(V)!
\label{eqn:proof_lemma_bound_prob_cycle_5}
\end{equation}
possible ordered choices of locations for the $n(V)$ indices that can be attained by a permutation $\pi \in S_m$.
We bound how many of these permutations allow the existance of a cycle $\delta$ compatible with $\Gamma$ and the permutation $\pi$, such that $V_{\delta} = V$.
We show first part (ii).

\textbf{Case (ii)} If $n(V) = n(\Gamma)$, by Lemma~\ref{lem:diagonal_elements}, we only need to look at the diagonal elements of the matrices $\mathcal{L}^{\pi}_t$ for $t \in N(\Gamma)$ and see where the indices of $V$ are sent by $\pi$.
From the assumption $V \in B(\Gamma)$, by Lemma~\ref{lem:diagonal_elements} we must have $V = \prod_{i \in N(V)} |v_i|^{2k_i}$, where $k_i$ coincide with the multiplicities of the $\ell$ index of $\Gamma$.
For any permutation $\pi$ such that $V \in B(\Gamma, \pi)$, each $\ket{v_i}$ with $i \in N(V)$ must belong to a unique partition $\mathcal{V}_t^{\pi}$ with $t \in N(\Gamma)$, and also satisfy condition (\textsf{B}) from Lemma~\ref{lem:constraints_cycle}.
If there exists a cycle $\delta$ such that $V_{\delta} = V$, then all indices of $V$ must satisfy condition (\textsf{B}) pairwise since they are all adjacent to each other.
There are $r$ different values modulo $r$ such that condition (\textsf{B}) can be satisfied simultaneously for all $i \in N(V)$.
Furthermore, for each of the choices, we can try to assign differently the indices of $N(\gamma)$ within the partitions available in $N(\Gamma)$. 
However, the multiplicities of $\Gamma$ have to match the multiplicity of the vectors in $V$ for the factor to appear for that index assignment.
Only if multiplicities coincide will the factor $V$ appear for the permutation.
For $i,j \in N(\gamma)$, we can only exchange $i$ with the position of $j$, if $k_i = k_j$.
Let $\mathbf{z}(\Gamma)$ be the vector of multiplicities of $\{k_i\}_{i \in N(\Gamma)}$.
From the previous discussion, for a fixed value modulo $r$, there are exactly $\prod_{i=1}^{z(\Gamma)} z_i!$ ways to permute indices in $N(V)$ within the different partitions corresponding to $N(\Gamma)$ such that the product $V$ appears for that permutation.
There are at most $r \prod_{i=1}^{z(\Gamma)} z_i!$ permutations that satisfy $V \in B(\Gamma, \pi)$.
Conversely, any permutation satisfying these constraints will also imply that $V \in B(\Gamma, \pi)$. 
We conclude
\begin{equation}
\probabilitybig{V \in B(\Gamma, \pi)} = \frac{r \prod_{i=1}^{z(\Gamma)} z_i! }{\binom{Tr}{n(V)} n(V)!} = \frac{r}{\binom{Tr}{n(V)} \binom{n(\Gamma)}{\mathbf{z}(\Gamma)}}.
\end{equation}

\textbf{Case (i)} We assume that $n(V) > n(\Gamma)$.
To upper bound the amount of permutations that allow  $V \in B(\Gamma, \pi)$, we consider cycles that satisfy the necessary conditions (\textsf{A}) and (\textsf{B}) from Lemma~\ref{lem:constraints_cycle}.
Let $\Gamma =\{\ell, B(\ell)\} \in \mathcal{G}(k)$, and $\gamma$ a cycle satisfying $V = V_{\gamma}$.
The chance that $V \in B(\Gamma, \pi)$ is at most the chance that there exists $\gamma$ such that $V = V_{\gamma}$ and $\gamma$ satisfies the necessary conditions (\textsf{A}) and (\textsf{B}) in Lemma~\ref{lem:constraints_cycle} for $\pi$ and $\Gamma$,
\begin{align}
\probability{V \in B(\Gamma, \pi)} &\leq \probability{~\exists \gamma \text{ s.t. } V_{\gamma}=V, \gamma \text{ satisfies (\textsf{A}) and (\textsf{B}) for } \pi \text{ and } \Gamma  } %
\label{eqn:proof_lemma_bound_prob_cycle}
\end{align}
We count how many cycles $\gamma$ may satisfy all conditions simultaneously.
For a permutation $\pi$, if a cycle $\gamma$ exists that satisfies $V = V_{\gamma}$, for each index $t \in N(\Gamma)$ corresponding to a partition set $\mathcal{V}_t^{\pi}$, there exist at least one constraint for the position of the indices of $N(V) \cap \mathcal{V}_t^{\pi}$ when the cycle $\gamma$ either arrives from, or leaves to another partition.
This constraint is imposed by the fact that the positions must be up to $\pi$ equal modulo $r$---see Lemma~\ref{lem:constraints_cycle}.

Denote by $\gamma|_{\mathcal{V}_{t}^{\pi}}$ for each $t \in N(\Gamma)$ the maximal set of subcycles of $\gamma$ that are contained in the partition set $\mathcal{V}_{t}(\pi)$ .
Since there are at least two transitions---arrival and departure---of $\gamma$ for each partition set, we show that there are at least $n(\Gamma) - 1$ indices of $V$ whose positions in the associated graph are fixed by the positions of other indices in $N(V)$ in the graph. 
Indeed, for each $t \in N(\Gamma)$ at least one index $u_t \in N(V) \cap \mathcal{V}_{t}^{\pi}$ must be adjacent in the graph of $\gamma$ to at least another index $u_{t^{\prime}}$ belonging to another partition $t^{\prime} \in N(\Gamma)$. 
If this was not the case, then for at least one index $t \in N(\Gamma)$, $\gamma$ would arrive to the partition corresponding to $t$, but not leave, which would contradict the fact that $\gamma$ is a cycle.
A connected graph with the $n(\Gamma)$ partitions is generated by $\gamma$ tracking arrivals and departures of partition sets.
It is well-known that the the connected graph with minimal amount of edges that we can construct with $n(\Gamma)$ vertices has $n(\Gamma) - 1$ edges.
Hence, there will be at least $n(\Gamma) - 1$ indices in $N(V)$ whose positions are determined by other indices of $N(V)$ whenever there is a cycle $\gamma$ such that $V = V_{\gamma}$.

We now upper bound how many permutations $\pi$ allow for a cycle $\gamma$ to exist with $V = V_{\gamma}$ satisfying conditions (\textsf{A}), and (\textsf{B}).
Conditional on $V = V_{\gamma}$, consider the maximal decomposition of $V$ by disjoint subcycles described in \eqref{eqn:maximal_decomposition}. 
Let $d(V)$ be the number of disjoint subcycles for this decomposition and denote the set of subcycles by $D(V) = \{ \delta_1, \ldots, \delta_{d(V)}\}$. 
Condition (\textsf{A}) requires that the indices of any subcycle $\delta_i$, $i \in [d(V)]$ belong to the same partition.
Any $\gamma$ satisfying $V = V_{\gamma}$, and (\textsf{A}) will have these $d(V)$ labeled maximal subcycles distributed among $n(\Gamma)$ labeled partition sets, with at least one subcycle per partition set.
By Lemma~\ref{lem:stirling}, the different number of ways to distribute the subcycles is
\begin{equation}
n(\Gamma)! \stirling{d(V)}{n(\Gamma)} = \sum_{i=0}^{n(\Gamma)-1} (-1)^{j} (n(\Gamma) - j)^{d(V)} \binom{d(V)}{j},
\end{equation}
where $\stirling{d(V)}{n(\Gamma)}$ is the Stirling number of second kind.

For each fixed distribution of the subcycles, each index $i \in N(V)$ within a partition $t \in N(\Gamma)$ can be assigned to at most $r$ positions in its corresponding partition.
However, from the previous discussion, in order to generate a connected cycle $\gamma$ such that $V_{\gamma} = V$, there are at least $n(\Gamma) - 1$  indices of $V$ whose locations are determined by other indices in $N(V)$.
Therefore, there are at most $r^{n(V) - n(\Gamma) + 1}$ possible ways to distribute the indices for each given subcycle assignment, and satisfy (\textsf{B}).
Combining the previous bounds, the number of different ways to allow a cycle $\gamma$ to exists with $V = V_{\gamma}$, while also satisfying (\textsf{B}) and (\textsf{A}) for $\Gamma$ is at most 
\begin{equation}
n(\Gamma)! \stirling{d(V)}{n(\Gamma)} r^{n(V) - n(\Gamma) + 1}.
\label{eqn:proof_lemma_bound_prob_cycle_3}
\end{equation}
Combining \eqref{eqn:proof_lemma_bound_prob_cycle_3} with \eqref{eqn:proof_lemma_bound_prob_cycle_5}, we obtain that
\begin{equation}
\probability{V \in B(\Gamma, \pi)} \leq \frac{n(\Gamma)! \stirling{d(V)}{n(\Gamma)}  r^{n(V) - n(\Gamma) + 1}}{\binom{Tr}{n(V)} n(V)!}.
\label{eqn:proof_lemma_bound_prob_cycle_2}
\end{equation}
\end{proof}

Lemma~\ref{lem:bound_prob_cycle} bounds the probability $C_{V, \Gamma}$ of a factor $V$ appearing for $\Gamma \in \mathcal{G}(k)$ and a random permutation.
In the next section, we examine the symmetries of $\mathcal{G}(k)$ that will simplify the computations later, and are key in obtaining a leading order of $F(k)$ independent of $T$.

\subsection{Characterizing symmetries in the expansion}
\label{secappendix:symmetries}
The set $\mathcal{G}(k)$ possesses several additional symmetries that we can exploit by using groups.
There are namely two group actions of $S_k$ and $S_{T}$ on $\mathcal{G}(k)$.
Both actions will allow us to decouple the combinatorial problem of counting \textsf{exp}-indices to counting only factors.

\paragraph{First action by $S_k$.} We define the action $\mathfrak{h}: S_{k} \to \mathrm{Aut}(\mathcal{G}(k))$ that permutes the location of the indices of $\Gamma = \{\ell, B(\ell)\} \in \mathcal{G}(k)$, that is, for $g \in S_{k}$,
\begin{align}
g(\ell_1, \ldots, \ell_{k}) &= (\ell_{g(1)}, \ldots, \ell_{g(k)}) \nonumber \\
\mathfrak{h}[g] \bigl( \{\ell, B(\ell)\} \bigr) &= \{g(\ell), B(g(\ell)) \}.
\end{align}
This group action tracks how many $\Gamma^{\prime} \in \mathcal{G}(k)$ in the expansion of \eqref{eqn:def_F_k} have similar indices to those of $\Gamma$, up to reordering.
For example, they track if
\begin{equation}
\expectationBigWrt{\trBig{\bigl(\mathcal{L}_{1}^{\pi}(\Theta)\bigr)^2 \bigl(\mathcal{L}_{2}^{\pi}(\Theta)\bigr)^2}}{\Theta, \pi} \text{ and } \expectationBigWrt{\trBig{\mathcal{L}_{1}^{\pi}(\Theta) \mathcal{L}_{2}^{\pi}(\Theta) \mathcal{L}_{1}^{\pi}(\Theta) \mathcal{L}_{2}^{\pi}(\Theta)}}{\Theta, \pi},
\label{eqn:example_S_T}
\end{equation}
may share some of the factors.
We define the set of orbits of this action on $\mathcal{G}(k)$ by
\begin{equation}
\bar{\mathcal{G}}(k) = \frac{\mathcal{G}(k)}{S_{k}}. 
\label{eqn:def_mathcal_G_bar}
\end{equation}
We bound the number of elements of an orbit $\bar{\Gamma} \in \bar{\mathcal{G}}(k)$.
Given $\Gamma = \{ \ell, B(\ell)\} \in \mathcal{G}(k)$ we denote the multiplicities of $\ell$ by $\mathbf{k}(\Gamma) = (k_1, \ldots, k_{n(\Gamma)})$.
A group element $g \in S_{k}$ will permute the entries of the vector $\ell$ but not their multiplicities.
The orbit of $\Gamma \in \mathcal{G}(k)$ thus contains at most all possible different ways of choosing out of $k$ locations in the \textsf{exp}-indices $\ell$, $n(\Gamma)$ labels with their respective multiplicities in $\mathbf{k}(\Gamma)$.
If $\bar{\Gamma}$ denotes the orbit of $\Gamma$ under this action, the number of elements is given by the multinomial number
\begin{equation}
|\bar{\Gamma}| = \binom{k}{k_1, \ldots, k_{n(\Gamma)}} = \binom{k}{\mathbf{k}(\Gamma)}.
\label{eqn:proof_variance_bounded_combinatorial_num}
\end{equation}
The number, and index labels of $\Gamma \in \mathcal{G}(k)$ are invariant under $\mathfrak{h}$.
We can define $n(\bar{\Gamma}) = n(\Gamma)$, and $N(\bar{\Gamma})) = N(\Gamma)$ for any representative $\Gamma \in \bar{\Gamma}$.

Note that the action of $S_{k}$ is in general not well-defined for factors of $\Gamma$ since $B(\ell) \neq B(g(\ell))$, as the following example shows.
\begin{example}
For $k=4$, let $\ell_1 = (1,1,2,2)$, and $\ell_2=(1,2,1,2)$---in \eqref{eqn:example_S_T}---be \textsf{exp}-indices in the same equivalence class under action $\mathfrak{h}$, and consider the following factor corresponding to some cycle $\gamma$
\begin{equation}
V_{\gamma} = \langle v_1, v_2 \rangle \langle v_2, v_1 \rangle \langle v_3, v_4 \rangle  \langle v_4, v_3 \rangle \in B(\ell_1),
\end{equation}
where $\{1,2\} \subset \mathcal{V}_1$,  $\{3,4\} \subset \mathcal{V}_2$ belong to different partition sets.
In order for $V_{\gamma}$ to also appear in $B(\ell_2)$, only the following product or its conjugate could appear in the expansion corresponding to $B(\ell_2)$ according to Definition~\ref{def:cycle},
\begin{equation}
\langle v_1, v_2 \rangle \langle v_3, v_4 \rangle \langle v_2, v_1 \rangle  \langle v_4, v_3 \rangle.
\end{equation}
Here, the indices $2$ and $3$ satisfy constrain \textsf{(B)} of Lemma~\ref{lem:constraints_cycle}, and appear in the same position for both partitions.
However, the same should occur with indices $2$ and $4$. 
The constraints \textsf{(B)} cannot be satisfied and so $V_{\gamma} \notin B(\ell_2)$.
\label{example:action_not_well_defined}
\end{example}

Nonetheless, the action of $\mathfrak{h}$ will be well-defined for some factors; see Remark~\ref{remark:action_commutes_diagonal_elements} below. We will use this symmetry to estimate their contribution to $F(k)$, which will become the leading term.
\begin{remark}
For a factor $V \in B(\ell)$ composed only of diagonal elements, that is, $V$ is a product of norms of vectors $\ket{v_i}$, the action $\mathfrak{h}$ is well-defined. 
Indeed, the diagonal parts of the matrices $\{ \mathcal{L}_{t}^{\pi}(\Theta)\}_{t \in [T]}$ commute for any \textsf{exp}-index, that is, $V \in B(g(\ell))$ for any $g \in S_{k}$.
\label{remark:action_commutes_diagonal_elements}
\end{remark}

\paragraph{Second action by $S_T$.} We can also define an action $\mathfrak{g}: S_{T} \to \mathrm{Aut}(\bar{\mathcal{G}}(k))$. 
A permutation $h \in S_{T}$ acts on $\ell \in [T]^{k}$ by permuting its indices
\begin{equation}
h(\bar{\ell}_1, \ldots, \bar{\ell}_{k}) = (h(\ell_1), \ldots, h(\ell_{k})).
\end{equation}
The action on a representative $[\{\bar{\ell}, B(\bar{\ell}) \}] \in \bar{\Gamma}$ is
\begin{align}
\mathfrak{g}[h]\bigl( [\{\bar{\ell}, B(\bar{\ell}) \}] \bigr) = [\{h(\bar{\ell}), B(h(\bar{\ell})) \}].
\end{align}
Differently to $\mathfrak{h}$, the action $\mathfrak{g}$ is well defined for factors due to symmetry, e.g., we expect the same factors $V$ occurring in 
\begin{equation}
\expectationBigWrt{\trBig{\bigl(\mathcal{L}_{1}^{\pi}(\Theta)\bigr)^2 \bigl(\mathcal{L}_{2}^{\pi}(\Theta)\bigr)^2}}{\Theta, \pi} \text{ and } \expectationBigWrt{\trBig{\bigl(\mathcal{L}_{3}^{\pi}(\Theta)\bigr)^2 \bigl(\mathcal{L}_{4}^{\pi}(\Theta)\bigr)^2}}{\Theta, \pi}.
\label{eqn:def_F_pi_intro}
\end{equation}
Indeed, if a factor $V \in B(\Gamma, \pi)$, the effect of permuting the indices of $\Gamma$ with $h$ for a fixed $\pi$ is equivalent to permuting the partition sets with label $t \in N(\Gamma)$, while leaving $V$ and $\pi$ fixed.
We can then compensate this change by using another permutation $\pi^{\prime} \in S_{m}$ such that $V \in B(h(\Gamma), \pi^{\prime})$ instead. 
From the definition of $B(\ell)$ in \eqref{eqn:def_B_set} we obtain that the factors appearing for a representative of $\bar{\Gamma}$ are invariant under the action $\mathfrak{g}$.
We denote the set of orbits under this action by
\begin{equation}
G(k) = \frac{\bar{\mathcal{G}}(k)}{S_{T}}
\label{eqn:def_mathcal_G}
\end{equation}

Let $\Delta \in G(k)$ be the orbit of $\bar{\Gamma}$ under $\mathfrak{g}$.
The number of different \textrm{exp}-indices of any of its representatives in $\bar{\Gamma}$ is $n(\Delta) = n(\bar{\Gamma})$, and their multiplicities are $\mathbf{k}(\Delta) = \mathbf{\bar{\Gamma}}$.
We will also use the notation $\mathbf{z}(\Delta)$ to denote the multiplicities of $\mathbf{k}(\Delta)$ similarly to \eqref{eqn:def_z_Gamma}.

We compute the size of $\Delta$. 
Recall that $\bar{\Gamma}$ contains all permutations of the \textsf{exp}-index $\ell$ of a representative $\Gamma \in \bar{\Gamma}$. 
We find a representative $\Gamma$ such that its \textrm{exp}-indices are equal adjacent to each other
\begin{equation}
(\overbrace{\ell_1, \cdots, \ell_1}^{k_1 \text{times}}, \cdots, \overbrace{\ell_{n(\Gamma)}, \ldots, \ell_{n(\Gamma)}}^{k_{n(\Gamma)} \text{times}})
\end{equation}
In this manner, we can identify $\bar{\Gamma}$ with a commutative multinomial of order $k$ in $[T]$ variables
\begin{equation}
\prod_{i=1}^{n(\Gamma)} (x_{\ell_i})^{k_i}.
\label{eqn:counting_polys}
\end{equation} 
A permutation $h \in S_{T}$ sends the labels $i \in N(\Gamma)$ to any other different labels in $[T]$ while leaving the multiplicities invariant.
If all multiplicities are different, then the amount of different multinomials under this action will be equivalent to the number of injective maps from the set $N(\Gamma)$ to $[T]$, which is given by the multinomial coefficient
\begin{equation}
\frac{T!}{(T - n(\Gamma))!} = \binom{T}{n(\Gamma)} n(\Gamma)! = \binom{T}{n(\Delta)} n(\Delta)!.
\label{eqn:order_orbit_delta_distinct_multiplicities}
\end{equation}
However, if some of the multiplicities in $\mathbf{k}(\Gamma)$ are equal, some of the previous maps for the labels in $N(\Gamma)$ will be sent to the same multinomial.
Let $\mathbf{z}(\Gamma)$ be the vector of size $n(\Gamma)$ with the multiplicities of $\mathbf{k}(\Gamma)$.
This vector satisfies
\begin{equation}
n(\Gamma) = \sum_{i=1}^{Z(\Gamma)} z_i.
\end{equation}
For a fixed multinomial of the type \eqref{eqn:counting_polys} with multiplicities $\mathbf{k}(\Gamma)$ there are $\prod_{i=1}^{Z(\Gamma)} z_i!$ different relabeling of its indices that leave the multinomial invariant.
Therefore, the number of commutative monomials of order $k$ in $[T]$ variables with multiplicities $\mathbf{k}(\Gamma)$ is given by
\begin{equation}
|\Delta| = \binom{T}{n(\Gamma)} \binom{n(\Gamma)}{\mathbf{z}(\Gamma)} = \binom{T}{n(\Delta)} \binom{n(\Delta)}{\mathbf{z}(\Delta)}.
\label{eqn:order_orbit_delta}
\end{equation}

\subsection{Combinatorial inequalities}
\label{secappendix:combinatorial}
We have examined \textsf{exp}-indices and their symmetries. 
We now define sets that will help us with the counting of factors. 
\begin{align}
\mathcal{M}(k, l)=\Bigl\{ (k_1, \ldots, k_{m}) \in \bigl( [k] \cup \{0\} \bigr)^{\times m} \Big| \sum_{i=1}^{m} k_i = k, k_i > 0 \text{ for \emph{exactly} } l \text{ indices } i \in [m] \Bigr\}.
\label{eqn:def_commutative_polinomials_boundeds}
\end{align}
The set $\mathcal{M}(k, l)$ can be identified with the set of different multinomials of order $k$ with $l$ variables out of $m$.
The set of different multinomials of order $k$ with $m$ variables $\mathcal{M}(k)$ is then one-to-one with
\begin{align}
\mathcal{M}(k) &= \bigsqcup_{\substack{1 \leq l \leq k}} \mathcal{M}(k, l). \label{eqn:def_commutative_polinomials_all}
\end{align}
For a vector $\mathbf{k} \in \mathcal{M}(k)$ (of dimension $m$), we denote its number of nonzero entries by $n(\mathbf{k})$.
\begin{definition}
Let a vector $\mathbf{k} \in \mathcal{M}(k)$, we denote its frequency by $f(\mathbf{k}) = (k_{[1]}, \ldots, k_{[k]})$, where $k_{[i]}$ is the $i$th largest entry of $\mathbf{k}$, and we complete with zeros if necessary.
Similarly, for any $\Gamma  \in \bar{\Gamma} \in \Delta \in G(k)$, we define $f(\Delta) = f(\bar{\Gamma}) = f(\Gamma) = (\mathbf{k}(\Gamma), 0_{k - n(\Gamma)})$, which is the vector of multiplicities of $\Gamma$ in order with added zeros if necessary.
We denote the set of frequencies with at most $k$ different items by $\mathfrak{F}(k) = f(\mathcal{M}(k))$.
\label{def:frequency}
\end{definition}
Note that any $f \in \mathfrak{F}(k)$ can be represented by a decreasing vector $(f_1, \ldots, f_k)$, where we complete with zeros if necessary. 
If $\Delta \in \mathcal{G}(k)$ and $n(\Delta)=l$, we can define the subset of \eqref{eqn:def_commutative_polinomials_boundeds} that have the same frequencies as $\Delta$.
\begin{equation}
\mathcal{M}(k, l)[\Delta]=\Bigl\{ \mathbf{k} \in \mathcal{M}(k, l) ~\Big|~ f(\mathbf{k}) = f(\Delta) \Bigr\}.
\label{eqn:def_commutative_polinomials_boundeds_delta}
\end{equation}
The set $\mathcal{M}(k, l)[\Delta]$ is one-to-one with the set of multinomials of order $k$ with $l$ variables out of $m$, such that their vectors of multiplicities is $\mathbf{k}(\Delta)$ up to reordering.
Crucially, the frequencies of different elements in $G(k)$ are also different as shown in the following lemma.
\begin{lemma}
Let $\Delta_1, \Delta_2 \in G(k)$.
Then $f(\Delta_1) = f(\Delta_2)$, or equivalently $\mathbf{k}(\Delta_1)$ is $\mathbf{k}(\Delta_2)$ up to reordering, if and only if $\Delta_1 = \Delta_2$.
\label{lem:disjoint_union_frequencies}
\end{lemma}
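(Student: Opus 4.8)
The plan is to recognize that $G(k)$ is nothing but the set of degree-$k$ monomials in $T$ variables considered up to reordering of the exponents and relabeling of the variables, and that $f$ is precisely the invariant recording the sorted exponent tuple of such a monomial. Concretely, an element of $\mathcal{G}(k)$ is indexed by its \textsf{exp}-index $\ell \in [T]^{k}$ alone, since the factor set $B(\ell)$ is a function of $\ell$. The $\mathfrak{h}$-orbit $\bar{\Gamma} \in \bar{\mathcal{G}}(k)$ of $\Gamma = \{\ell, B(\ell)\}$ remembers only the multiset $\{\ell_1, \dots, \ell_k\}$ of entries of $\ell$ — equivalently, the commutative monomial $\prod_{i=1}^{n(\Gamma)} x_{\ell_i}^{k_i}$ — because permuting the $k$ positions of $\ell$ leaves this multiset fixed (and, by Example~\ref{example:action_not_well_defined}, nothing finer is preserved). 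Passing further to the $S_T$-orbit $\Delta \in G(k)$, relabeling the $T$ variables identifies two such monomials exactly when their sorted exponent tuples coincide. Thus $\Delta \mapsto f(\Delta)$ is literally the map sending a monomial to its partition of exponents, and the lemma is the statement that this map is a bijection onto its image.

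First I would check that $f$ is well defined on $G(k)$, which also yields the ``if'' direction. By Remark~\ref{remark:action_commutes_diagonal_elements} and the discussion preceding it, $\mathfrak{h}$ permutes the entries of $\ell$ without changing their multiplicities, so $f(\Gamma) = (\mathbf{k}(\Gamma), 0_{k - n(\Gamma)})$ is constant on $\bar{\Gamma}$; and $\mathfrak{g}$ relabels the distinct entries of $\ell$ by an injection $[T] \to [T]$, again leaving the multiset of multiplicities — hence the sorted, zero-padded vector $f$ — unchanged. Therefore $f$ factors through $G(k)$, and $\Delta_1 = \Delta_2$ forces $f(\Delta_1) = f(\Delta_2)$. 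The equivalence stated in the lemma is then immediate: two sorted, zero-padded multiplicity vectors agree iff one of $\mathbf{k}(\Delta_1)$, $\mathbf{k}(\Delta_2)$ is a reordering of the other.

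For the ``only if'' direction I would argue as follows. Suppose $f(\Delta_1) = f(\Delta_2)$, and pick representatives $\Gamma_1 \in \bar{\Gamma}_1 \in \Delta_1$ and $\Gamma_2 \in \bar{\Gamma}_2 \in \Delta_2$, with \textsf{exp}-indices $\ell^{(1)}, \ell^{(2)} \in [T]^{k}$. The hypothesis says $n := n(\Gamma_1) = n(\Gamma_2)$ and that, after sorting, the multiplicity vectors of $\ell^{(1)}$ and $\ell^{(2)}$ coincide; hence there is a bijection $\phi \colon N(\Gamma_1) \to N(\Gamma_2)$ between the sets of distinct entries — each of size $n \le T$, since the entries lie in $[T]$ — that matches multiplicities. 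Extend $\phi$ arbitrarily to a permutation $h \in S_T$. Then $h(\ell^{(1)})$ and $\ell^{(2)}$ have the same multiset of entries, so $\{h(\ell^{(1)}), B(h(\ell^{(1)}))\}$ lies in the same $\mathfrak{h}$-orbit as $\{\ell^{(2)}, B(\ell^{(2)})\}$, i.e.\ $\mathfrak{g}[h](\bar{\Gamma}_1) = \bar{\Gamma}_2$, and therefore $\Delta_1 = \Delta_2$.

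There is essentially no deep obstacle here; the only points needing care are (a) that an element of $\mathcal{G}(k)$ carries no information beyond its \textsf{exp}-index, so the factor sets $B(\cdot)$ — whose bad behaviour under $\mathfrak{h}$ is exactly Example~\ref{example:action_not_well_defined} — never enter the argument, and (b) that $n(\Gamma) \le T$, which guarantees the multiplicity-matching bijection $\phi$ extends to an element of $S_T$. Once these are in place, the lemma is a restatement of the elementary fact that the $S_T$-orbits of degree-$k$ monomials in $T$ variables are classified by partitions of $k$ into at most $T$ parts.
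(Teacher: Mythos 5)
Your proposal is correct and follows essentially the same route as the paper's proof: both directions reduce to constructing a bijection between the distinct \textsf{exp}-indices of two representatives that matches multiplicities, extending it to a permutation $h \in S_T$, and concluding the orbits coincide. Your write-up is somewhat more careful than the paper's (you explicitly verify well-definedness of $f$ on $G(k)$ and note that the final identification also uses the $S_k$-action to reorder positions), but these are refinements of the same argument, not a different approach.
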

\begin{proof}
If $f(\Delta_1) = f(\Delta_2)$, then $\Delta_1$ has a representative $\Gamma_1 = \{\ell_1, B(\ell_1)\}$ that satisfies $i_1 \in \ell_1$ $k_1$ times, $i_2 \in \ell_1$ $k_2$ times, \ldots, and  $i_{n(\Gamma_1)} \in \ell_1$ $k_{n(\Gamma_1)}$ times for some $i_1, \ldots, i_{n(\Gamma_1)} \in N(\Gamma_1)$.
Similarly for $\Delta_2$ with a representative $\Gamma_2 = \{\ell_2, B(\ell_2)\}$ that satisfies $j_1 \in \ell_2$ $k_1$ times, $j_2 \in \ell_2$ $k_2$ times, \ldots, and  $j_{n(\Gamma_1)} \in \ell_2$ $k_{n(\Gamma_1)}$ times for some $j_1, \ldots, j_{n(\Gamma_2)} \in N(\Gamma_2)$.
Since $n(\Delta_1) = n(\Gamma_1) = n(\Gamma_2) = n(\Delta_2)$, there exists a permutation $h \in S_{T}$ that takes $i_1, \ldots, i_{n(\Gamma_1)}$ to $j_1, \ldots, j_{n(\Gamma_2)}$. 
In particular, $\Delta_1$, and $\Delta_2$ have the same representatives, and so they must be equal.
\end{proof}

If we consider the disjoint union of \eqref{eqn:def_commutative_polinomials_boundeds_delta} over all  $\Delta \in G(k)$, we recover \eqref{eqn:def_commutative_polinomials_boundeds}
\begin{equation}
\mathcal{M}(k, l) = \bigsqcup_{\substack{\Delta \in G(k)\\ n(\Delta)=l}} \mathcal{M}(k, l)[\Delta].
\label{eqn:def_M_k_l}
\end{equation}
We introduce a partial order over frequencies in $\mathfrak{F}(k)$.
\begin{definition}
A frequency $f=(f_1, \ldots, f_k)$ is dominated by $g=(g_1, \ldots, g_k)$ in $\mathfrak{F}(k)$, denoted by $g \prec f$, if for a partition $\{ \mathcal{A}_i\}_{i=1}^{l}$ of $[k]$ with some $l \geq 1$, we have for all $j \in [k]$ such that $f_j \neq 0$,
\begin{align}
f_j  = \sum_{i \in \mathcal{A}_j} g_i.
\end{align}
\label{def:domination}
\end{definition}
\noindent
For $\Delta \in G(k)$ such that $d > n(\Delta)$, we define
\begin{align}
\mathcal{M}(k, d)[\Delta] =\Bigl\{ \mathbf{k} \in \mathcal{M}(k,d) ~\Big|~ f(\mathbf{k}) \prec f(\Delta) \Bigr\},
\label{eqn:def_commutative_polinomials_boundeds_delta_larger}
\end{align}
We use the previous definitions to simplify the computations with the following results. 
\begin{lemma}
In the notation of this section, let $\Gamma \in \bar{\Gamma} \in \Delta \in G(k)$ be such that $n(\Gamma) = l$.
\begin{equation}
\sum_{\substack{V \in B(\Gamma)\\ n(V) = l}} C_{V, \Gamma} V = \frac{r}{\binom{Tr}{n(\Delta)} \binom{n(\Delta)}{\mathbf{z}(\Delta)}} \sum_{\mathbf{k} \in \mathcal{M}(k,l)[\Delta]} \prod_{i=1}^{m} |v_{i}|^{2k_i},
\label{eqn:counting_cycles_to_polynomials_same_length}
\end{equation}
and
\begin{equation}
\sum_{\substack{V \in B(\Gamma)\\ n(V) > l}} C_{V, \Gamma} V \leq 
\sum_{d=l+1}^{k} \frac{n(\Gamma)! \stirling{d}{n(\Gamma)}  r^{d - n(\Gamma) + 1}}{\binom{Tr}{d} d!} \sum_{\substack{\mathbf{g} \in \mathcal{M}(k,d)[\Delta]}} \binom{k}{\mathbf{g}} \prod_{i=1}^{m} |v_i|^{2g_i}.
\label{eqn:counting_cycles_to_polynomials}
\end{equation}
\label{lem:counting_cycles}
\end{lemma}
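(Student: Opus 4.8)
The plan is to expand both left-hand sides over the set of factors $V \in B(\Gamma)$ and substitute the two probability estimates of Lemma~\ref{lem:bound_prob_cycle}, treating the regime $n(V)=l$ (exact, via part (ii) of that lemma) and $n(V)>l$ (an inequality, via part (i)) separately. For \eqref{eqn:counting_cycles_to_polynomials_same_length}, I would first invoke Lemma~\ref{lem:diagonal_elements}: every $V\in B(\Gamma)$ with $n(V)=n(\Gamma)=l$ is a product of squared norms $V=\prod_{i=1}^m|v_i|^{2k_i}$, whose exponent vector $\mathbf{k}$ has exactly $l$ nonzero entries, and whose multiset of multiplicities equals that of $\mathbf{k}(\Gamma)=\mathbf{k}(\Delta)$, i.e.\ $f(\mathbf{k})=f(\Delta)$, so $\mathbf{k}\in\mathcal{M}(k,l)[\Delta]$. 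Conversely, given $\mathbf{k}\in\mathcal{M}(k,l)[\Delta]$ I would realize $\prod_i|v_i|^{2k_i}$ as a factor: place each of the $l$ indices with nonzero $k_i$ at the same relative position inside one of the $l$ partition sets labelled by $N(\Gamma)$, matching an index of multiplicity $k_i$ to a label $t$ with $k_t=k_i$ (possible since the two multisets of multiplicities coincide); the self-loop cycle on these indices satisfies conditions (\textsf{A}) and (\textsf{B}) of Lemma~\ref{lem:constraints_cycle}, and by Remark~\ref{remark:action_commutes_diagonal_elements} a diagonal factor appears in the expansion irrespective of the ordering of the exp-index, so $V\in B(\Gamma)$. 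This yields a bijection between $\{V\in B(\Gamma): n(V)=l\}$ and $\mathcal{M}(k,l)[\Delta]$. Finally, Lemma~\ref{lem:bound_prob_cycle}(ii) gives $C_{V,\Gamma}=r/\big(\binom{Tr}{l}\binom{l}{\mathbf{z}(\Delta)}\big)$, which does not depend on the particular $V$ (only on $n(V)=l$ and $\mathbf{z}(\Gamma)=\mathbf{z}(\Delta)$); factoring this constant out of the sum gives the identity.

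For \eqref{eqn:counting_cycles_to_polynomials} I would group the factors $V\in B(\Gamma)$ with $n(V)>l$ by the value $d:=n(V)$, which ranges over $l+1\le d\le k$ (condition (\textsf{A}) forces each index to occupy at least two of the $2k$ slots, so at most $k$ distinct indices occur). For a fixed $d$, Lemma~\ref{lem:bound_prob_cycle}(i) bounds $C_{V,\Gamma}$; since $n(\Gamma)\le d(V)\le n(V)=d$ and $\stirling{\cdot}{n(\Gamma)}$ is nondecreasing in its top argument, I would replace $\stirling{d(V)}{n(\Gamma)}$ by $\stirling{d}{n(\Gamma)}$, getting a bound on $C_{V,\Gamma}$ uniform over all $V$ with $n(V)=d$. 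Next, bound $|V|=\prod_{a=1}^{k}|\langle v_{p_a},v_{q_a}\rangle|\le\prod_{a=1}^{k}|v_{p_a}||v_{q_a}|=\prod_{i=1}^{m}|v_i|^{2g_i}$ by Cauchy–Schwarz on each inner product, where $g_i$ is the number of primal occurrences of $i$ (equal to its number of dual occurrences, by (\textsf{A})); then $\sum_i g_i=k$ and $\mathbf{g}$ has $d$ nonzero entries, so $\mathbf{g}\in\mathcal{M}(k,d)$. Because the $k_t$ occurrences of the matrix $\mathcal{L}_t^{\pi}$ (for $t\in N(\Gamma)$) each contribute exactly one primal index lying in $\mathcal{V}_t^{\pi}$, the multiplicities $\{g_i\}$ split into $n(\Gamma)$ blocks with sums $k_1,\dots,k_{n(\Gamma)}$, i.e.\ $f(\mathbf{g})\prec f(\Delta)$, so $\mathbf{g}\in\mathcal{M}(k,d)[\Delta]$ in the sense of \eqref{eqn:def_commutative_polinomials_boundeds_delta_larger}. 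Finally, for a fixed profile $\mathbf{g}$ the number of distinct factors $V\in B(\Gamma)$ with this profile is at most $\binom{k}{\mathbf{g}}$, because such a factor is determined by assigning to each of the $k$ matrix slots of the product which of the $d$ indices occupies its primal slot (with index $i$ used $g_i$ times), the dual slots then being fixed by the cyclic structure. Collecting these estimates gives $\sum_{V:n(V)=d}C_{V,\Gamma}V\le \frac{n(\Gamma)!\stirling{d}{n(\Gamma)}r^{d-n(\Gamma)+1}}{\binom{Tr}{d}d!}\sum_{\mathbf{g}\in\mathcal{M}(k,d)[\Delta]}\binom{k}{\mathbf{g}}\prod_i|v_i|^{2g_i}$, and summing over $d=l+1,\dots,k$ yields \eqref{eqn:counting_cycles_to_polynomials}.

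The first display is then essentially a direct consequence of Lemmas~\ref{lem:diagonal_elements} and~\ref{lem:bound_prob_cycle}(ii) plus the realizability argument for diagonal cycles. The main obstacle is the bookkeeping in the second display: making rigorous, through the noncommutative expansion of $\mathbb{E}_{\Theta}[\operatorname{Tr}(\prod_j\mathcal{L}_{\ell_j}^{\pi}(\Theta))]$ and the cycle formalism of Definition~\ref{def:cycle}, that (i) within each used partition the primal multiplicities sum to the corresponding exponent $k_t$ of $\Delta$ (which is what forces $f(\mathbf{g})\prec f(\Delta)$), and (ii) at most $\binom{k}{\mathbf{g}}$ distinct factors share a given profile $\mathbf{g}$; both require carefully tracking how primal and dual arguments of the inner products in $V_{\gamma}$ correspond to the rows and columns of the matrices $\mathcal{L}_t^{\pi}$ in the product.
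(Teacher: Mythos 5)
Your proposal follows essentially the same route as the paper's proof: Lemma~\ref{lem:diagonal_elements} plus Lemma~\ref{lem:bound_prob_cycle}(ii) and the identification of diagonal factors with $\mathcal{M}(k,l)[\Delta]$ for the equality, and Cauchy--Schwarz, the domination condition $f(\mathbf{g}) \prec f(\Delta)$, the multinomial count $\binom{k}{\mathbf{g}}$ per profile, and Lemma~\ref{lem:bound_prob_cycle}(i) grouped by $d = n(V)$ for the inequality. The only differences are that you make explicit two details the paper leaves implicit (the converse realizability of each $\mathbf{k} \in \mathcal{M}(k,l)[\Delta]$ as a diagonal factor, and the replacement of $\stirling{d(V)}{n(\Gamma)}$ by $\stirling{d}{n(\Gamma)}$ via monotonicity), which is a faithful filling-in rather than a different argument.
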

\begin{proof}
We show \eqref{eqn:counting_cycles_to_polynomials_same_length} first.
Since we consider the case $n(V)= n(\Gamma)=l$, by Lemma~\ref{lem:diagonal_elements}, $V$ is product of norms of the vectors $\{ \ket{v_i}\}_{i \in [m]}$, and there will be exactly $n(\Gamma)$ different norms in the product, up to multiplicities.
The indices $t \in N(\Gamma)$ are one to one with indices in $N(V)$, and the multiplicites will be $\mathbf{k}(\Gamma) = (k_1, \ldots, k_{l})$.
Thus, we can find as many factors $V \in B(\Gamma)$ with exactly $n(V) = n(\Gamma)$ different norms as there are multinomials of the type
\begin{equation}
 x_{i_1}^{k_1} \cdots x_{i_l}^{k_l},
\label{eqn:proof_counting1}
\end{equation}
with $i_1, \ldots, i_l \in [m]$ different, that is, vectors $\mathbf{k} \in \mathcal{M}(k)$, with the frequencies equal to $f(\Gamma) = f(\Delta)$.
This is exactly the definition of $\mathcal{M}(k,l)[\Delta]$ in $\eqref{eqn:def_commutative_polinomials_boundeds_delta}$.
If we denote the factor with only norms with indices and multiplicities corresponding to $\mathbf{k}$ by $V(\mathbf{k})$, then
\begin{align}
\sum_{\substack{V \in B(\Gamma)\\ n(V) = l}} C_{V, \Gamma} V &= \sum_{\mathbf{k} \in \mathcal{M}(k,l)[\Delta]}  C_{V(\mathbf{k}), \Gamma} \prod_{i=1}^{m} |v_{i}|^{2k_i} \tag{$f(\Gamma) = f(\mathbf{k})$} \nonumber \\
& = \frac{r}{\binom{Tr}{n(\Delta)} \binom{n(\Delta)}{\mathbf{z}(\Delta)}} \sum_{\mathbf{k} \in \mathcal{M}(k,l)[\Delta]} \prod_{i=1}^{m} |v_{i}|^{2k_i} \tag{Lemma~\ref{lem:bound_prob_cycle}, and $n(\Gamma) = n(\Delta) = l$}
\end{align}

We show now \eqref{eqn:counting_cycles_to_polynomials}. 
Suppose $\Gamma$ has multiplicities $\mathbf{k}(\Gamma) = (k_1, \ldots, k_{l})$, and $V$ is such that $N(V) = \{1, \ldots, n(V)\} \subset [m]$, where $n(V) > n(\Gamma) = l$.
Let $\mathbf{g} = (g_1, \ldots, g_{n(V)})$ be the multiplicities of the indices in $V$, which satisfy
\begin{equation}
k = \sum_{i=1}^{n(V)} g_i.
\end{equation}
Using Cauchy-Schwartz in all inner products of $V$ we obtain the bound
\begin{equation}
|V| \leq \prod_{i=1}^{n(V)} |v_{i}|^{2g_i}.
\label{eqn:proof_counting2}
\end{equation}
Differently to the previous case, elements $t \in N(\Gamma)$ are not one to one with indices in $N(V)$. 
However, if $V \in B(\Gamma, \pi)$ for some permutation $\pi$ there is an assignment that maps each $t \in N(\Gamma)$ to a set $\mathcal{V}_t^{\pi} \cap N(V)$. 
In this case, the multiplicity $k_t$ of each $t \in N(\Gamma)$ must be equal to the sum of the multiplicities of the indices of $V$ in $\mathcal{V}_t^{\pi} \cap N(V)$.
That is, there exists a partition $\{\mathcal{A}_i\}_{i=1}^{n(\Gamma)}$ of $N(V)$ such that
\begin{equation}
k_t = \sum_{i \in \mathcal{A}_{t}} g_i \quad \text{ for each } \quad t \in N(\Gamma).
\label{eqn:proof_counting3}
\end{equation}
This is the domination condition from Definition~\ref{def:domination}.
We can thus upper bound the sum in \eqref{eqn:counting_cycles_to_polynomials} by a sum over multinomials such as \eqref{eqn:proof_counting2} whose multiplicities are dominated by $f(\Gamma) = f(\Delta)$, which is  $\mathcal{M}(k, d)[\Delta]$ by definition in \eqref{eqn:def_commutative_polinomials_boundeds_delta_larger}.
We are left to count how many $V \in B(\Gamma)$ have the same multinomial in \eqref{eqn:proof_counting2} that upper bounds the factor $V$.
Let $\mathcal{N}$ be this number.
Any other factor $V^{\prime}$ with \eqref{eqn:proof_counting2} as upper bound will have the same indices $i \in N(V^{\prime}) = N(V)$ appearing in $V^{\prime}$ with some other order.
We can upper bound the number of ways to order the indices by the number of ways to partition $k$ items in $n(V)$ labeled classes with multiplicities $\mathbf{g} = (g_1, \ldots, g_{n(V)})$, that is
\begin{equation}
\mathcal{N} \leq \mathcal{N}(\mathbf{g}) = \binom{k}{\mathbf{g}}.
\label{eqn:proof_counting5}
\end{equation}
The previous argument implies that if $\Gamma \in \bar{\Gamma} \in \Delta$, 
\begin{align}
\sum_{\substack{V \in B(\Gamma)\\ n(V) > l}} 
C_{V, \Gamma} V
&\leq 
\sum_{d=l+1}^{k} \sum_{\substack{V \in B(\Gamma)\\ n(V) = d}} C_{V, \Gamma} \prod_{i \in N(V)} |v_i|^{2g_i} \tag{From \eqref{eqn:proof_counting2}} \nonumber \\
&\leq 
\sum_{d=l+1}^{k} \sum_{\substack{V \in B(\Gamma)\\ n(V) = d}} \frac{n(\Gamma)! \stirling{n(V)}{n(\Gamma)}  r^{n(V) - n(\Gamma) + 1}}{\binom{Tr}{n(V)} n(V)!} \prod_{i \in N(V)} |v_i|^{2g_i}  \tag{Lemma~\ref{lem:bound_prob_cycle}} \nonumber \\
&\leq 
\sum_{d=l+1}^{k} \frac{n(\Gamma)! \stirling{d}{n(\Gamma)}  r^{d - n(\Gamma) + 1}}{\binom{Tr}{d} d!} \sum_{\substack{\mathbf{g} \in \mathcal{M}(k,d)[\Delta]}} \mathcal{N}(\mathbf{g}) \prod_{i=1}^{m} |v_i|^{2g_i}  \nonumber \\
&\leq 
\sum_{d=l+1}^{k} \frac{n(\Delta)! \stirling{d}{n(\Delta)}  r^{d - n(\Delta) + 1}}{\binom{Tr}{d} d!} \sum_{\substack{\mathbf{g} \in \mathcal{M}(k,d)[\Delta]}} \binom{k}{\mathbf{g}} \prod_{i=1}^{m} |v_i|^{2g_i} \tag{From \eqref{eqn:proof_counting5}}
\label{eqn:proof_counting4}
\end{align}
\end{proof}

Let 
\begin{align}
W_{\Delta}(T) &= \frac{r}{\binom{Tr}{n(\Delta)} \binom{n(\Delta)}{\mathbf{z}(\Delta)}} \label{eqn:def_W_Delta}\\
W_{d, \Delta}(T) &= \frac{n(\Delta)! \stirling{d}{n(\Delta)}  r^{d - n(\Delta) + 1}}{\binom{Tr}{d} d!} \quad \text{ if } \quad d > n(\Delta).
\label{eqn:def_W_d_Delta}
\end{align}
the following lemma characterizes the fact that the leading term of $F(k)$ is independent of $T$. 
The following asymptotic in $T$ will be key in the computation.
\begin{lemma}
Let $\Delta \in G(k)$, and recall that $|\Delta| = |\Delta|(T)$ is \eqref{eqn:order_orbit_delta} and depends on $T$. 
As $T \to \infty$:
\begin{itemize}
\item[(a)]
$
\begin{aligned}[t]
|\Delta|(T) W_{\Delta}(T) &= \frac{1 + O(T^{-1})}{r^{n(\Delta)-1}}
\end{aligned}
$
\item[(b)]
$
\begin{aligned}[t]
|\Delta|(T)  W_{d,\Delta}(T) &= \BigO{\frac{k^{k}}{r^{n(\Delta) -1 } T^{d - n(\Delta)}}}.
\end{aligned}
$

\end{itemize}
\label{lem:asymptotic_bound_T}
\end{lemma}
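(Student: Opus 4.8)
The plan is to substitute the closed forms \eqref{eqn:order_orbit_delta}, \eqref{eqn:def_W_Delta} and \eqref{eqn:def_W_d_Delta} directly and read off the leading $T$-behaviour from ratios of binomial coefficients. Write $l=n(\Delta)$ throughout, and recall that $l$, $d$, $k$ and $r$ are all held fixed while $T\to\infty$.

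For part~(a): the factors $\binom{n(\Delta)}{\mathbf{z}(\Delta)}$ appearing in $|\Delta|(T)$ and in $W_{\Delta}(T)$ cancel, leaving
\begin{equation}
|\Delta|(T)\,W_{\Delta}(T)=\frac{r\binom{T}{l}}{\binom{Tr}{l}}=r\prod_{j=0}^{l-1}\frac{T-j}{Tr-j}.
\end{equation}
For each $j\in\{0,\dots,l-1\}$ I would write $\frac{T-j}{Tr-j}=\frac{1}{r}\cdot\frac{1-j/T}{1-j/(Tr)}=\frac{1}{r}\bigl(1+O(T^{-1})\bigr)$; multiplying the $l$ bounded factors gives $\prod_{j=0}^{l-1}\frac{T-j}{Tr-j}=r^{-l}\bigl(1+O(T^{-1})\bigr)$, and hence $|\Delta|(T)\,W_{\Delta}(T)=\frac{1+O(T^{-1})}{r^{l-1}}$, which is the claim.

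For part~(b): the binomials now have different orders, but the computation is the same in spirit. Using $\binom{T}{l}=\frac{T^{l}}{l!}\bigl(1+O(T^{-1})\bigr)$ and $\binom{Tr}{d}=\frac{(Tr)^{d}}{d!}\bigl(1+O(T^{-1})\bigr)$ as $T\to\infty$, one gets $\frac{\binom{T}{l}}{\binom{Tr}{d}}=\frac{d!}{l!}\cdot\frac{1}{r^{d}\,T^{d-l}}\bigl(1+O(T^{-1})\bigr)$, so that after cancelling $d!$ and simplifying the powers of $r$,
\begin{equation}
|\Delta|(T)\,W_{d,\Delta}(T)=\binom{l}{\mathbf{z}(\Delta)}\,\stirling{d}{l}\,\frac{1}{r^{l-1}}\cdot\frac{1}{T^{d-l}}\bigl(1+O(T^{-1})\bigr).
\end{equation}
It then remains to bound the purely combinatorial prefactor: $\binom{l}{\mathbf{z}(\Delta)}\le l!$ since it is a multinomial coefficient of order $l$, and $\stirling{d}{l}\le l^{d}/l!$ by Lemma~\ref{lem:stirling}, so $\binom{l}{\mathbf{z}(\Delta)}\,\stirling{d}{l}\le l^{d}\le k^{k}$ because $l\le k$ and $d\le k$. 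Substituting yields $|\Delta|(T)\,W_{d,\Delta}(T)=\BigO{\frac{k^{k}}{r^{l-1}\,T^{d-l}}}$.

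The lemma is essentially a bookkeeping step, so there is no genuine obstacle; the only care required is to control the error terms in the binomial ratios uniformly — each factor $1-j/T$ or $1-j/(Tr)$ with $0\le j<d\le k$ contributes an error of size at most $O(k/T)$, and the product of boundedly many of them remains $1+O(T^{-1})$, which I would make explicit via the elementary inequality $1-\sum_{j}a_{j}\le\prod_{j}(1-a_{j})\le1$ valid for $a_{j}\in[0,1]$. The one mildly delicate point is arranging the $k^{k}$ bound to absorb $l!$, $\stirling{d}{l}$ and $\binom{l}{\mathbf{z}(\Delta)}$ simultaneously; this is precisely why the estimate $\stirling{d}{l}\le l^{d}/l!$ from Lemma~\ref{lem:stirling} is invoked, since it cancels the stray $l!$ coming from the multinomial bound.
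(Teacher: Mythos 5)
Your proposal is correct and follows essentially the same route as the paper: cancel the $\binom{n(\Delta)}{\mathbf{z}(\Delta)}$ factors in (a) and read off the falling-factorial ratio, and in (b) substitute the binomial asymptotics and absorb the combinatorial prefactor into $k^{k}$ via the Stirling-number bound of Lemma~\ref{lem:stirling}. The only (immaterial) difference is that you bound the prefactor by $l^{d}$ using $\binom{l}{\mathbf{z}(\Delta)}\leq l!$ directly, whereas the paper bounds it by $d^{d}$ via $\binom{l}{\mathbf{z}(\Delta)}\leq d!$ and $l^{d}/l!\leq d^{d}/d!$; both are at most $k^{k}$.
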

\begin{proof}
We show (a) first. 
From Lemma~\ref{lem:bound_prob_cycle}, and the bound in \eqref{eqn:order_orbit_delta}, we immediately have
\begin{align}
|\Delta|(T) W_{\Delta}(T) &= \binom{T}{n(\Delta)} \frac{r}{\binom{Tr}{n(\Delta)}} = \frac{r T(T-1) \cdots (T - n(\Delta) + 1)}{rT(rT-1) \cdots (rT - n(\Delta) + 1)} \nonumber \\
& = \frac{1 + O(T^{-1})}{r^{n(\Delta)-1}}.
\end{align}
For (b), we again use the bound of Lemma~\ref{lem:bound_prob_cycle}, and \eqref{eqn:order_orbit_delta} to obtain
\begin{align}
|\Delta|(T) W_{d, \Delta}(T) &= \binom{T}{n(\Delta)} \binom{n(\Delta)}{\mathbf{z}(\Delta)} \frac{n(\Delta)! \stirling{d}{n(\Delta)}  r^{d - n(\Delta) + 1}}{\binom{Tr}{d} d!} \nonumber \\
& \leq  \binom{T}{n(\Delta)} \frac{n(\Delta)! \stirling{d}{n(\Delta)}  r^{d - n(\Delta) + 1}}{\binom{Tr}{d}} \tag{Since $\binom{n(\Delta)}{\mathbf{z}(\Delta)}   \leq d!$, and $d > n(\Delta)$} \nonumber \\
& \leq \frac{T(T-1) \cdots (T - n(\Delta) + 1)}{rT(rT-1) \cdots (rT - d + 1)} d! \frac{n(\Delta)^{d}}{n(\Delta)!}r^{d - n(\Delta) + 1} \tag{Lemma~\ref{lem:stirling}, $\stirling{d}{n(\Delta)} \leq \frac{n(\Delta)^{d}}{n(\Delta)!}$} \nonumber \\
& = \BigO{\frac{k^{k}}{r^{n(\Delta) -1 } T^{d - n(\Delta)}}} \tag{From $\frac{n(\Delta)^{d}}{n(\Delta)!} \leq \frac{d^{d}}{d!}$ }.
\end{align}
\end{proof}

\subsection{Estimating the k-th moment}
\label{secappendix:moment_estimate}
We can now estimate the moments. 
Expanding $F(k)$, we obtain from the definition in \eqref{eqn:def_mathcal_G} that
\begin{align}
F(k) = \sum_{\Gamma = \{\ell, B(\ell)\} \in \mathcal{G}(k)} \expectationBig{\trBig{\prod_{i=1}^{k} \mathcal{L}_{\ell_i}^{\pi}(\theta)}}.
\end{align}

We present the asymptotic expansion of $F(k)$ as $T \to \infty$ in the following lemma.
\begin{lemma}
Let the expectation of $F(k)$ in \eqref{eqn:def_F_k} be with respect to a random permutation $\pi \sim \mathrm{Unif}(S_m)$ and phases $\exp(\ci \Theta_i) \sim \mathrm{Unif}(S^{1})$ for all $i \in [m]$. 
Then as $T \to \infty$ we have
\begin{equation}
F(k) =   
\sum_{l=1}^{k}
\frac{1}{r^{l-1}} \sum_{\mathbf{k} \in \mathcal{M}(k,l)} \binom{k}{\mathbf{k}} \prod_{i=1}^{m} |v_i|^{2k_i} + \BigO{\frac{k! k^{2k} r^{k}}{T}}.
\end{equation}
\label{lem:moment_bound_variance}
\end{lemma}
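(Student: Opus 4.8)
The plan is to start from the identity recorded just above the lemma,
\[
F(k) = \sum_{\Gamma = \{\ell, B(\ell)\} \in \mathcal{G}(k)} \expectationBig{\trBig{\prod_{i=1}^{k} \mathcal{L}_{\ell_i}^{\pi}(\Theta)}},
\]
expand each trace into monomials in the entries of the $\mathcal{L}_{\ell_i}^{\pi}(\Theta)$, and use the $\Theta$-expectation to kill every term except those coming from valid cycles $\gamma$ (\refDefinition{def:cycle}). Since $n(\Gamma) \le d(\gamma) \le n(V_\gamma)$ for every cycle feeding into $\Gamma$, the surviving terms split into a \emph{diagonal} part ($n(V_\gamma)=n(\Gamma)$) and an \emph{off-diagonal} part ($n(V_\gamma)>n(\Gamma)$), and I would write $F(k) = F_{\mathrm{diag}}(k) + F_{\mathrm{off}}(k)$ accordingly. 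For the diagonal part the only cycle producing a product of norms is the ``all self-loops'' cycle, and for fixed $\pi$ it occurs with multiplicity at most one (the $p$-th diagonal position contributes $\prod_j |v_{\pi(r(\ell_j-1)+p)}|^2$, and distinct $p$ use disjoint index sets), so the coefficient of a diagonal factor $V$ in the $\Gamma$-term is \emph{exactly} $C_{V,\Gamma}$; off the diagonal the coefficient is a nonnegative quantity bounded by $\binom{k}{\mathbf{g}}\,C_{V,\Gamma}$ while $|V|\le \prod_i |v_i|^{2g_i}$ by Cauchy--Schwarz. Hence $F_{\mathrm{diag}}(k)=\sum_{\Gamma}\sum_{V\in B(\Gamma),\,n(V)=n(\Gamma)} C_{V,\Gamma}V$ and $|F_{\mathrm{off}}(k)|$ is at most $\sum_{\Gamma}$ of the right-hand side of \eqref{eqn:counting_cycles_to_polynomials}, the inner sums being already evaluated (resp. bounded) by \eqref{eqn:counting_cycles_to_polynomials_same_length} (resp. \eqref{eqn:counting_cycles_to_polynomials}).

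Next I would reorganize the outer sum over $\Gamma \in \mathcal{G}(k)$ into the orbits of the combined $S_k\times S_T$ action of the symmetries subsection (\refSection{secappendix:symmetries}): these orbits are indexed by $G(k)$, classified by the multiplicity/frequency pattern of $\ell$, and the orbit of $\Delta$ contains exactly $|\Delta|(T)\binom{k}{\mathbf{k}(\Delta)}$ elements of $\mathcal{G}(k)$ (the $\binom{k}{\mathbf{k}(\Delta)}$ arrangements of each $\bar\Gamma$, times the $|\Delta|(T)$ classes of \eqref{eqn:order_orbit_delta}). For the diagonal part this bookkeeping is exact: by \refRemark{remark:action_commutes_diagonal_elements} every representative $\Gamma$ of $\Delta$ sees the same diagonal factors $\{\prod_i |v_i|^{2k_i}:\mathbf{k}\in\mathcal{M}(k,n(\Delta))[\Delta]\}$, each with coefficient $C_{V,\Gamma}=W_\Delta(T)$ (an orbit invariant by \refLemma{lem:bound_prob_cycle}(ii) and \eqref{eqn:def_W_Delta}), so \eqref{eqn:counting_cycles_to_polynomials_same_length} gives
\[
F_{\mathrm{diag}}(k) = \sum_{\Delta \in G(k)} |\Delta|(T)\,\binom{k}{\mathbf{k}(\Delta)}\, W_\Delta(T) \sum_{\mathbf{k}\in\mathcal{M}(k,n(\Delta))[\Delta]} \prod_{i=1}^{m} |v_i|^{2k_i}.
\]
I would then feed in \refLemma{lem:asymptotic_bound_T}(a), $|\Delta|(T)W_\Delta(T)=r^{-(n(\Delta)-1)}(1+\bigO{T^{-1}})$ with the correction uniform of size $\bigO{k^2/(r^{n(\Delta)-1}T)}$, collect orbits by $l=n(\Delta)$, and use the disjoint decomposition $\mathcal{M}(k,l)=\bigsqcup_{n(\Delta)=l}\mathcal{M}(k,l)[\Delta]$ from \refLemma{lem:disjoint_union_frequencies} together with the constancy of $\binom{k}{\mathbf{k}}$ on each block. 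This collapses the main term of $F_{\mathrm{diag}}(k)$ to $\sum_{l=1}^{k} r^{-(l-1)}\sum_{\mathbf{k}\in\mathcal{M}(k,l)}\binom{k}{\mathbf{k}}\prod_i|v_i|^{2k_i}$, exactly the claimed leading expression, while the discarded correction is bounded, via the conservation identity $\sum_{\mathbf{k}\in\mathcal{M}(k)}\binom{k}{\mathbf{k}}\prod_i|v_i|^{2k_i}=\bigl(\sum_{i=1}^m|v_i|^2\bigr)^k=r^k$ (multinomial theorem with \eqref{eqn:proof_variance_example_normalization}), by $\bigO{k^2 r^k/T}$.

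Finally, for $F_{\mathrm{off}}(k)$ I would sum \eqref{eqn:counting_cycles_to_polynomials} over the $|\Delta|(T)\binom{k}{\mathbf{k}(\Delta)}$ members of each orbit, apply \refLemma{lem:asymptotic_bound_T}(b) to get $|\Delta|(T)W_{d,\Delta}(T)=\BigO{k^k/(r^{n(\Delta)-1}T^{\,d-n(\Delta)})}$, note that the sum over $d>n(\Delta)$ is dominated by $d=n(\Delta)+1$ and so carries a single factor $1/T$, bound $\binom{k}{\mathbf{k}(\Delta)}\le k!$ and $\sum_{\mathbf{g}}\binom{k}{\mathbf{g}}\prod_i|v_i|^{2g_i}\le r^k$ as above, and bound $|G(k)|$ crudely by the number of integer partitions of $k$. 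These combine to $F_{\mathrm{off}}(k)=\bigO{k!\,k^{2k}r^k/T}$, and adding it to $F_{\mathrm{diag}}(k)$ yields the lemma. I expect the main obstacle to be exactly this combinatorial bookkeeping: making the orbit count consistent with the per-$\Gamma$ trace expansion, arguing that the diagonal contribution is reproduced \emph{exactly} (the multiplicity-one all-loops cycle and invariance under $\mathfrak{h}$) while only an upper bound is available off the diagonal, and checking that every $T\to\infty$ correction is absorbed into $\bigO{k!\,k^{2k}r^k/T}$ through the identity $\sum_{\mathbf{k}}\binom{k}{\mathbf{k}}\prod_i|v_i|^{2k_i}=r^k$.
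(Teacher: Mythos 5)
Your proposal is correct and follows essentially the same route as the paper: the same split into the diagonal part ($n(V)=n(\Gamma)$, the paper's $I_1$) and the off-diagonal part ($n(V)>n(\Gamma)$, the paper's $I_2$), the same orbit bookkeeping via $|\Delta|(T)\binom{k}{\mathbf{k}(\Delta)}$, the same use of \refLemma{lem:counting_cycles} and \refLemma{lem:asymptotic_bound_T}, and the same closing bounds $\binom{k}{\mathbf{k}(\Delta)}\le k!$, $\sum_{\mathbf{g}}\binom{k}{\mathbf{g}}\prod_i|v_i|^{2g_i}\le r^k$, and a crude bound on $|G(k)|$. The only differences are cosmetic (a slightly larger but still absorbable $\bigO{k^2 r^k/T}$ correction in the diagonal part, and bounding $|G(k)|$ by the partition number rather than $k^k$).
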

\begin{proof}
We expand
\begin{align}
F(k) &= \sum_{\Gamma \in \mathcal{G}(k)}
\sum_{V \in B(\Gamma)} C_{V, \Gamma} V \nonumber \\
&= \sum_{l=1}^{k} 
\sum_{\substack{\Gamma \in \mathcal{G}(k)\\ n(\Gamma) = l}} 
\left(\sum_{\substack{V \in B(\Gamma)\\ n(V) = l}} C_{V, \Gamma}  V  
+ \sum_{\substack{\gamma \in B(\Gamma)\\ n(V) > l}}  C_{V, \Gamma}  V  \right) \nonumber \\
& = I_1 + I_2.
\label{eqn:proof_variance_bound_two_terms}
\end{align}
We are left to bound both terms $I_1$ and $I_2$ from \eqref{eqn:proof_variance_bound_two_terms}.

\noindent
\emph{Bound of $I_1$.} 
The following inequalities hold:
\begingroup
\allowdisplaybreaks
\begin{align}
I_1 & = \sum_{l=1}^{k}
\sum_{\substack{\Delta \in G(k)\\ n(\Delta) = l}} 
\sum_{\bar{\Gamma} \in \Delta}
\sum_{\Gamma \in \bar{\Gamma}}
\sum_{\substack{V \in B(\Gamma)\\ n(V) = l}} C_{V, \Gamma} V \tag{From \eqref{eqn:def_mathcal_G_bar} and \eqref{eqn:def_mathcal_G}} \nonumber \\
& = \sum_{l=1}^{k}
\sum_{\substack{\Delta \in G(k)\\ n(\Delta) = l}} 
\sum_{\bar{\Gamma} \in \Delta}
\sum_{\Gamma \in \bar{\Gamma}} \sum_{\mathbf{k} \in \mathcal{M}(k,l)[\Delta]} W_{\Delta} \prod_{i=1}^{m} |v_i|^{2k_i} \tag{Lemma~\ref{lem:counting_cycles}, and \eqref{eqn:def_W_Delta}}\nonumber \\
& = \sum_{l=1}^{k}
\sum_{\substack{\Delta \in G(k)\\ n(\Delta) = l}} 
|\Delta| |\bar{\Gamma}| \sum_{\mathbf{k} \in \mathcal{M}(k,l)[\Delta]} W_{\Delta} \prod_{i=1}^{m} |v_i|^{2k_i} \nonumber \\
& = \sum_{l=1}^{k}
\sum_{\substack{\Delta \in G(k)\\ n(\Delta) = l}} 
|\Delta| W_{\Delta} \binom{k}{\mathbf{k}(\Delta)} \sum_{\mathbf{k} \in \mathcal{M}(k,l)[\Delta]}   \prod_{i=1}^{m} |v_i|^{2k_i} \tag{From \eqref{eqn:proof_variance_bounded_combinatorial_num}}\nonumber \\
& = \sum_{l=1}^{k}
\sum_{\substack{\Delta \in G(k)\\ n(\Delta) = l}} 
|\Delta| W_{\Delta}  \sum_{\mathbf{k} \in \mathcal{M}(k,l)[\Delta]} \binom{k}{\mathbf{k}}  \prod_{i=1}^{m} |v_i|^{2k_i} \tag{From $f(\Delta) = f(\mathbf{k})$} \nonumber \\
& = \sum_{l=1}^{k} \frac{1 + O(T^{-1})}{r^{l-1}}
\sum_{\substack{\Delta \in G(k)\\ n(\Delta) = l}} 
\sum_{\mathbf{k} \in \mathcal{M}(k,l)[\Delta]} \binom{k}{\mathbf{k}} \prod_{i=1}^{m} |v_i|^{2k_i} \tag{Lemma~\ref{lem:asymptotic_bound_T}} \nonumber \\ \allowbreak
& = \sum_{l=1}^{k}
\frac{1}{r^{l-1}} \sum_{\mathbf{k} \in \mathcal{M}(k,l)} \binom{k}{\mathbf{k}} \prod_{i=1}^{m} |v_i|^{2k_i} + \BigO{\frac{r^k}{T}}, \label{eqn:proof_variance_bound_main_term_bound}
\end{align}
\endgroup
where in \eqref{eqn:proof_variance_bound_main_term_bound} we have used \eqref{eqn:def_M_k_l}, and the bound
\begin{equation}
\sum_{\mathbf{k} \in \mathcal{M}(k,l)} \binom{k}{\mathbf{k}} \prod_{i=1}^{m} |v_i|^{2k_i} \leq \sum_{\mathbf{k} \in \mathcal{M}(k)} \binom{k}{\mathbf{k}} \prod_{i=1}^{m} |v_i|^{2k_i} = \Bigl( \sum_{i=1}^{m} |v_i|^2 \Bigr)^k \eqcom{\ref{eqn:proof_variance_example_normalization}}= r^k.
\label{eqn:upper_bound_M_k}
\end{equation}

\noindent
\emph{Bound of $I_2$.}
Recall that if $\Gamma \in \bar{\Gamma} \in \Delta$, we have $n(\Delta) = n(\Gamma)$.
\begingroup
\allowdisplaybreaks
\begin{align}
I_2  &  =   \sum_{l=1}^{k} 
\sum_{\substack{\Delta \in G(k)\\ n(\Delta) = l}} 
\sum_{\bar{\Gamma} \in \Delta}
\sum_{\Gamma \in \bar{\Gamma}}
\sum_{\substack{V \in B(\Gamma)\\ n(V) > l}}  C_{V, \Gamma} V 
\tag{From \eqref{eqn:def_mathcal_G_bar}, and \eqref{eqn:def_mathcal_G}}\nonumber \\
&  \leq   \sum_{l=1}^{k} 
\sum_{\substack{\Delta \in G(k)\\ n(\Delta) = l}}  
\sum_{\bar{\Gamma} \in \Delta}
\sum_{\Gamma \in \bar{\Gamma}}
\sum_{d=l+1}^{k} W_{d, \Delta} \sum_{\substack{\mathbf{g} \in \mathcal{M}(k,d)[\Delta]}} \binom{k}{\mathbf{g}} \prod_{i=1}^{m} |v_i|^{2g_i} 
\tag{Lemma~\ref{lem:counting_cycles}}
\nonumber \\
&  \leq   \sum_{l=1}^{k} 
\sum_{\substack{\Delta \in G(k)\\ n(\Delta) = l}}  
|\Delta| |\bar{\Gamma}|
\sum_{d=l+1}^{k}  W_{d, \Delta} \sum_{\substack{\mathbf{g} \in \mathcal{M}(k,d)[\Delta]}} \binom{k}{\mathbf{g}} \prod_{i=1}^{m} |v_i|^{2g_i} 
\nonumber \\
&  \leq   \sum_{l=1}^{k} 
\sum_{\substack{\Delta \in G(k)\\ n(\Delta) = l}}  
\sum_{d=l+1}^{k} |\Delta| W_{d, \Delta} \binom{k}{\mathbf{k}(\Delta)}\sum_{\substack{\mathbf{g} \in \mathcal{M}(k,d)[\Delta]}} \binom{k}{\mathbf{g}} \prod_{i=1}^{m} |v_i|^{2g_i}
\tag{From \eqref{eqn:proof_variance_bounded_combinatorial_num}} 
\nonumber \\
&  \leq   \sum_{l=1}^{k} 
\sum_{\substack{\Delta \in G(k)\\ n(\Delta) = l}}  
\sum_{d=l+1}^{k}   \BigO{\frac{k^{k}}{r^{l - 1 } T^{d - l}}}   \binom{k}{\mathbf{k}(\Delta)}\sum_{\substack{\mathbf{g} \in \mathcal{M}(k,d)[\Delta]}} \binom{k}{\mathbf{g}} \prod_{i=1}^{m} |v_i|^{2g_i}
\tag{Lemma~\ref{lem:asymptotic_bound_T}} 
\nonumber \\
&  \leq  \BigO{\frac{k! k^{k}}{T}}  
\sum_{l=1}^{k} 
\sum_{\substack{\Delta \in G(k)\\ n(\Delta) = l}}  
\sum_{d=l+1}^{k} \sum_{\substack{\mathbf{g} \in \mathcal{M}(k,d)[\Delta]}} 
\binom{k}{\mathbf{g}} \prod_{i=1}^{m} |v_i|^{2g_i} 
\tag{From $\binom{k}{\mathbf{k}(\Delta)} \leq k!$}
 \nonumber \\
 &  \leq  \BigO{\frac{k! k^{k}}{T}}  
\sum_{l=1}^{k}
\sum_{\substack{\Delta \in G(k)\\ n(\Delta) = l}} \sum_{\substack{\mathbf{g} \in \mathcal{M}(k)}} 
\binom{k}{\mathbf{g}} \prod_{i=1}^{m} |v_i|^{2g_i} \tag{$\sqcup_{d=l+1}^{k} \mathcal{M}(k,d)[\Delta] \subset \mathcal{M}(k)$}
 \nonumber \\
  &  \leq  \BigO{\frac{k! k^{k}}{T}}  
\sum_{l=1}^{k}
\sum_{\substack{\Delta \in G(k)\\ n(\Delta) = l}} r^{k} 
 \tag{From \eqref{eqn:upper_bound_M_k}} \nonumber \\
   &  \leq  \BigO{\frac{k! k^{k} r^{k}}{T}} |G(k)|  \nonumber \\
    &  =  \BigO{\frac{k! k^{2k} r^{k}}{T}}, \label{eqn:proof_variance_bound_error_term}
\end{align}
\endgroup
where in \eqref{eqn:proof_variance_bound_error_term} we have used that $|G(k)| \leq k^{k}$.
This follows from the fact that each $\Delta \in G(k)$ is in one-to-one correspondance with $f(\Delta) \in \mathfrak{F}(k)$ by Lemma~\ref{lem:disjoint_union_frequencies}, and the number of multiplicities that can be attained is far less than $k^{k}$.
Combining the terms \eqref{eqn:proof_variance_bound_main_term_bound} and \eqref{eqn:proof_variance_bound_error_term} yields the claim.
\end{proof}

We are left to estimate the leading term in Lemma~\ref{lem:moment_bound_variance}, which will finally show Lemma~\ref{lem:variance_bound_moment_bound}.
\begin{lemma}
Let $\{v_i\}_{i=1}^{m} \subset \C^{r}$ be a frame in dimension $r$ satisfying \eqref{eqn:proof_variance_example_normalization}. There exists a constant $c>0$ independent of $k$ and $r$ such that
\begin{equation}
\sum_{l=1}^{k}
\frac{1}{r^{l-1}} \sum_{\mathbf{k} \in \mathcal{M}(k,l)} \binom{k}{\mathbf{k}} \prod_{i=1}^{m} |v_i|^{2k_i} \leq c r k^{k}.
\end{equation}
\end{lemma}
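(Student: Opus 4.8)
The plan is to reduce the statement to a standard bound on Bell numbers. Set $a_i = |v_i|^2$, so that $a_i \in [0,1]$ for every $i$ and $\sum_{i=1}^{m} a_i = r$ by \eqref{eqn:proof_variance_example_normalization}. For $l \in [k]$ write $T_l = \sum_{\mathbf{k}\in\mathcal{M}(k,l)}\binom{k}{\mathbf{k}}\prod_{i=1}^{m} a_i^{k_i}$, so that the quantity to be bounded is exactly $\sum_{l=1}^{k} r^{1-l} T_l$. The two defining features of a Parseval frame are used separately: the inequality $a_i \le 1$ to discard exponents, and the normalization $\sum_i a_i = r$ to control the elementary symmetric polynomials that appear along the way.

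First I would group the sum defining $T_l$ according to the support $I$ of $\mathbf{k}$, which has $|I| = l$. For a fixed $I$, summing $\binom{k}{\mathbf{k}}$ over all assignments of positive integers $(k_i)_{i\in I}$ with $\sum_{i\in I} k_i = k$ counts the surjections from $[k]$ onto $I$, hence equals $l!\,\stirling{k}{l}$, where $\stirling{k}{l}$ is the Stirling number of the second kind. Since $a_i \le 1$ and $k_i \ge 1$ on the support, $\prod_{i} a_i^{k_i} \le \prod_{i\in I} a_i$, whence
\[
T_l \;\le\; l!\,\stirling{k}{l}\sum_{\substack{I\subseteq[m]\\ |I|=l}}\ \prod_{i\in I} a_i \;=\; l!\,\stirling{k}{l}\; e_l(a_1,\dots,a_m),
\]
with $e_l$ the $l$-th elementary symmetric polynomial. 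I would then invoke the elementary estimate $l!\, e_l(a) \le \bigl(\sum_i a_i\bigr)^l = r^l$, which follows by expanding $\bigl(\sum_i a_i\bigr)^l$ and discarding the nonnegative terms with repeated indices. This gives $T_l \le r^l \stirling{k}{l}$, and therefore
\[
\sum_{l=1}^{k}\frac{T_l}{r^{l-1}} \;\le\; r\sum_{l=1}^{k}\stirling{k}{l} \;=\; r\,B_k,
\]
where $B_k$ is the $k$-th Bell number.

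It remains to bound $B_k$. A clean argument: the map sending a partition of $[k]$ to the function $i\mapsto \min(\text{block containing } i)$ is injective into $[k]^{[k]}$, so $B_k \le k^k$. Combining the two displays yields $\sum_{l=1}^{k} r^{1-l} T_l \le r\,k^k$, which is the claim (with $c=1$, so any universal $c\ge 1$ works). If a sharper constant were wanted one could instead use $B_k \le k!$, provable by induction from $B_{k+1} = \sum_{j=0}^{k}\binom{k}{j} B_j$ together with $\sum_{i\ge 0} 1/i! = e < 3$.

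The argument is elementary combinatorics, so I do not anticipate a genuine obstacle; the only point requiring care is the bookkeeping in the second paragraph, namely invoking each of the two frame hypotheses at exactly the right place and correctly passing from the multinomial coefficients $\binom{k}{\mathbf{k}}$ to surjection counts $l!\stirling{k}{l}$ and then to the elementary symmetric polynomial $e_l$. If this reduction is set up carefully, the rest is immediate.
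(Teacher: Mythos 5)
Your argument is correct, and it reaches the paper's conclusion by a somewhat different (and slightly sharper) combinatorial route. The paper also splits off $M_1$, but its key step is the coefficient-comparison bound $M_l(k) \leq \sum_{\mathbf{b}} \binom{k}{\mathbf{b}} \prod_{i=1}^{l} M_1(b_i)$ over compositions $\mathbf{b}$ of $k$ into $l$ nonnegative parts, followed by $M_1(b) \leq r$ and the crude count $\sum_{\mathbf{b}} \binom{k}{\mathbf{b}} \leq l^{k}$; this gives $M_l(k) \leq r^{l} l^{k}$ and a final bound $r \sum_{l=1}^{k} l^{k} \leq c\, r k^{k}$. You instead group by the support $I$ of $\mathbf{k}$, identify $\sum_{\mathrm{supp}(\mathbf{k})=I} \binom{k}{\mathbf{k}}$ as the surjection count $l!\stirling{k}{l}$, and control the resulting elementary symmetric polynomial by $l!\, e_l(a) \leq r^{l}$. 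This yields $T_l \leq r^{l}\stirling{k}{l}$, which improves the paper's $r^{l} l^{k}$ (since $l!\stirling{k}{l} \leq l^{k}$), and the final sum telescopes to $r B_k \leq r k^{k}$ with $c = 1$, versus the paper's $c \cdot r k^{k}$ with an unspecified constant from comparing $\sum_l l^{k}$ to an integral. Both proofs use the two frame facts ($|v_i| \leq 1$ and $\sum_i |v_i|^2 = r$) at essentially the same places; since only the order $r k^{k}$ matters for Lemma~\ref{lem:variance_bound_moment_bound}, the gain is cosmetic, but your bookkeeping through Stirling and Bell numbers is clean and all the intermediate identities (surjection count, $e_l$ bound, $B_k \leq k^{k}$ via the min-of-block injection) are correct.
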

\begin{proof}
Define
\begin{equation}
M_l(k) = \sum_{\mathbf{k} \in \mathcal{M}(k,l)} \binom{k}{\mathbf{k}} \prod_{i=1}^{m} |v_i|^{2k_i}.
\label{eqn:proof_variance_bound_M}
\end{equation}
In particular, from \eqref{eqn:proof_variance_example_normalization} and the fact that $|v_i| \leq 1$ we have that for any $k \geq 1$,
\begin{equation}
M_1(k) = \sum_{i=1}^{m} |v_i|^{2k} \leq r.
\label{eqn:proof_variance_bound_M1}
\end{equation}
In \eqref{eqn:proof_variance_bound_M}, note that up to multiplicities, there are exactly $l$ different vector norms appearing in each multinomial. 
From comparing coefficients, and since all are positive, we can readily see that the following inequality holds for all $l \leq k$,
\begin{equation}
M_l(k) \leq \sum_{\substack{\mathbf{b} = (b_1, \ldots, b_{l})\\ \sum_{i=1}^{l} b_i = k, b_i \geq 0}} \binom{k}{\mathbf{b}} \prod_{i=1}^{l} M_1(b_i).
\label{eqn:proof_variance_bound_M2}
\end{equation}
Using \eqref{eqn:proof_variance_bound_M1} in \eqref{eqn:proof_variance_bound_M2}, and the multinomial coefficient formula
\begin{equation}
\sum_{\substack{\mathbf{b} = (b_1, \ldots, b_{l})\\ \sum_{i=1}^{l} b_i = k, b_i \geq 0}} \binom{k}{\mathbf{b}} \leq l^{k},
\end{equation}
yields the bound
\begin{equation}
M_l(k) \leq r^{l} \sum_{\substack{\mathbf{b} = (b_1, \ldots, b_{l})\\ \sum_{i=1}^{l} b_i = k, b_i > 0}} \binom{k}{\mathbf{b}} \leq r^{l} l^{k}.
\end{equation}
Using this last bound, we have the following inequality
\begin{align}
\sum_{l=1}^{k}
\frac{1}{r^{l-1}} \sum_{\mathbf{k} \in \mathcal{M}(k,l)} \binom{k}{\mathbf{k}} \prod_{i=1}^{k} |v_i|^{2k_i} & \leq \sum_{l=1}^{k}
\frac{1}{r^{l-1}} r^{l} l^{k} \nonumber \\
& \leq r \sum_{l=1}^{k} l^{k} \nonumber \\
& \leq c  r k^{k},
\end{align}
where in the last step we have used that there exists a constant $c>0$ independent of $k \geq 1$ such that 
\begin{equation}
\sum_{l=1}^{k} l^{k} \leq c \int_{1}^{k} x^{k} dx \leq c \frac{k^{k+1}}{k+1} \leq c k^k.
\end{equation}
\end{proof}

\section{Appendix on numerical experiments}
\label{secappendix:numerical_detials}

Our implementation code in \texttt{Python} can be shared upon request.
The following are other comments about the implementation.
\begin{itemize}
\item \emph{Projection on the tangent space:} Consider $U(\delta) = \exp(\delta A) U$ in \eqref{eqn:numerics_1} and differentiate with respect to $\delta$. If we let $U E_i U^{\dagger} = P_i(U)$, by evaluating the gradient at $\delta = 0$ we obtain its projection onto the tangent space of $U(n)$ at $U$, which is
\begin{equation}
P_n(U) = \sum_{i=1}^{n} \Bigl( \rho P_i(U) \tau P_i(U) - P_i(U) \tau P_i(U) \rho \Bigr) + 
\Bigl( \tau P_i(U) \rho P_i(U) - P_i(U) \rho P_i(U) \tau \Bigr).
\end{equation}
This matrix is skew-hermitian and belongs to the Lie algebra of $U(n)$.

\item \emph{Initialization of quantum states:} We initialize $\rho$ by evaluating $W \sim \mathrm{Unif}(U(r))$, the vector $d = (d_1, \ldots, d_r)$ uniformly distributed in the simplex, and setting $\rho = W~\mathrm{Diag}(d)~
W^{\dagger}$, where $\mathrm{Diag}(d)$ is the matrix with $d$ in the diagonal and zeros elsewhere. We perform a similar, and independent initialization of $\tau$.

\item \emph{Landscape complexity:} From trying different settings, our simulations suggests that there are several local maxima or saddle points in the optimization landscape of \eqref{eqn:maximization_unitary}, and convergence to those is dependent on the initialization.
\end{itemize}

\end{document}